\date{}
\newcommand{\Z}{{\mathbb Z}}
\newcommand{\R}{{\mathbb R}}
\newcommand{\C}{{\mathbb C}}
\newcommand{\N}{{\mathbb N}}
\newcommand{\T}{{\mathbb T}}
\newcommand{\E}{{\mathbb E}}
\newtheorem{theorem}{Theorem}
\newtheorem{remark}{Remark}[section]
\newtheorem{lemma}[remark]{Lemma}
\newtheorem{prop}[remark]{Proposition}
\newtheorem{coro}[remark]{Corollary}
\begin{document}
\title[A Continuum Version of the Kunz-Souillard Approach]{A Continuum Version of the Kunz-Souillard Approach to Localization in One Dimension}

\author{David Damanik}

\address{Department of Mathematics, Rice University, Houston, TX~77005, USA}

\email{damanik@rice.edu}

\thanks{D.\ D.\ was supported in part by NSF grants DMS--0653720 and DMS--0800100.}

\author{G\"unter Stolz}

\address{Department of Mathematics, University of Alabama at Birmingham, Birmingham, AL~35294, USA}

\email{stolz@math.uab.edu}

\thanks{G.\ S.\ was supported in part by NSF grant DMS-0653374.}

\begin{abstract}
We consider continuum one-dimensional Schr\"odinger operators with potentials that are given by a sum of a suitable background potential and an Anderson-type potential whose single-site distribution has a continuous and compactly supported density. We prove exponential decay of the expectation of the finite volume correlators, uniform in any compact energy region, and deduce from this dynamical and spectral localization. The proofs implement a continuum analog of the method Kunz and Souillard developed in 1980 to study discrete one-dimensional Schr\"odinger operators with  potentials of the form background plus random.
\end{abstract}

\maketitle

\section{Introduction}

In their work \cite{ks}, Kunz and Souillard introduced the first
method which allowed one to give a rigorous proof of localization for
the one-dimensional discrete Anderson model; see also \cite{cfks}
for a presentation of the method. Since then, other, more
versatile, methods have been developed which allowed one to establish
Anderson localization also for multi-dimensional and continuum
Anderson models. However, the Kunz-Souillard method has several
important features which still deserve interest. First, it
directly establishes a strong form of dynamical localization which
was proven with other methods only much later and with more effort.

Second, and most importantly, among the available methods which
establish one-dimensional localization, it is the only one which
shows localization at all energies, arbitrary disorder, and
without requiring ergodicity. It completely avoids the use of
Lyapunov exponents and provides an extremely direct path to
dynamical localization.

The virtues of the Kunz-Souillard methods were demonstrated in the
work \cite{desiso}, which proves localization for one-dimensional
discrete Anderson models with arbitrary bounded background
potential, and in \cite{simon}, which applies the method to
decaying random potentials.

It is our goal here to extend the Kunz-Souillard method to
one-dimensional {\it continuum} Anderson-type models and provide a
localization proof which allows for a rather general class of
bounded, not necessarily periodic, background potentials.

We mention one earlier work of Royer \cite{r} which carried over
many of the features of the Kunz-Souillard method to a continuum
model. There the Brownian motion generated random potential from
the work \cite{gmp} by Goldsheid, Molchanov and Pastur is considered. The
argument in \cite{r} does not carry over to Anderson-type models
and does not include deterministic background potentials.

Specifically, we consider the random Schr\"odinger operator
\begin{equation} \label{model}
H_{\omega} = -\frac{d^2}{dx^2} + W_0(x) + V_{\omega}(x)
\end{equation}
in $L^2(\R)$, where the random potential is given by
\begin{equation} \label{randompot}
V_{\omega}(x) = \sum_{n\in \Z} \omega_n f(x-n).
\end{equation}
For the single-site potential $f$, the random
coupling constants $\omega_n$, and the background potential $W_0$,
we fix the following assumptions:

The coupling constants $ \omega = (\omega_n)_{n\in\Z}$ are i.i.d.\
real random variables, whose distribution has a continuous and
compactly supported density $r$.

For the single-site potential $f$, we assume that
\begin{equation} \label{singlesite1}
c \chi_I(x) \le f(x) \le C \chi_{[-1,0]}
\end{equation}
for constants $0< c \le C < \infty$ and a non-trivial subinterval
$I$ of $[-1,0]$. We also assume that there exists an interval
$[a,b] \subset [-1,0]$ such that
\begin{equation} \label{singlesite2}
f(x)>0 \:\mbox{for a.e.}\: x\in [a,b], \quad f(x)=0 \:\mbox{for}\: x\in [-1,0]\setminus [a,b].
\end{equation}

The background potential $W_0$ is bounded and such that
\begin{equation} \label{relcomp}
\{ W_0(\cdot-n)|_{[-1,0]}: n\in \Z\} \: \mbox{is relatively compact in}\: L^{\infty}(-1,0).
\end{equation}

This includes 1-periodic potentials $W_0$, but, much more
generally, allows for many situations where $H_{\omega}$ is not
ergodic with respect to 1-shifts, examples being

(i) almost periodic potentials $W_0$, that is, $\{W_0(\cdot-\tau):\tau \in \R\}$ is relatively compact in $L^{\infty}(\R)$,

(ii) any $W_0$ that is uniformly continuous on $\R$, in which case \eqref{relcomp} follows from equicontinuity and the Arzela-Ascoli theorem.

Let $\chi_x = \chi_{[x-1,x]}$ and, for $x,y \in \Z$ and $E_{max} >0$, define
\begin{align}
\label{correlators} \rho(x,y;E_{max}) & := \E \Big( \sup\{
\|\chi_x g(H_{\omega}) \chi_y\|: \, g:\R\to \C \:\mbox{Borel
measurable}, \\
\nonumber & \qquad |g|\le 1, \: \mbox{supp}\,g \subset
[-E_{max},E_{max}]\} \Big).
\end{align}

Our main result is

\begin{theorem} \label{main}
For every $E_{max} >0$, there exist $C<\infty$ and $\eta>0$ such that for all $x,y \in \Z$,
\begin{equation} \label{dynloc}
\rho(x,y;E_{max}) \le Ce^{-\eta|x-y|}.
\end{equation}
\end{theorem}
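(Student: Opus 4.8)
The plan is to implement the Kunz–Souillard philosophy in the transfer‑matrix picture for the continuum problem, reducing the estimate on the finite‑volume correlators to the contractivity of a family of integral operators acting on functions of Prüfer‑type phase variables. First I would pass from $H_\omega$ on $L^2(\R)$ to restrictions $H_\omega^{[-N,N]}$ on a finite box with, say, Dirichlet boundary conditions, and use a Combes–Thomas‑type or Helffer–Sjöstrand argument to show that $\|\chi_x g(H_\omega)\chi_y\|$ for $|g|\le 1$ supported in $[-E_{max},E_{max}]$ is controlled (uniformly in $\omega$, up to harmless corrections) by matrix elements of spectral projections of the truncated operators together with decay of Green's functions at a complex energy; the continuum eigenfunction‑correlator bound of del~Rio–Jitomirskaya–Last–Simon type then lets me replace the supremum over $g$ by an $L^1$‑in‑energy average of products of normalized eigenfunctions. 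The key reduction is: there is a constant $c_{E_{max}}$ so that
\begin{equation}
\rho(x,y;E_{max}) \le c_{E_{max}} \int_{-E_{max}}^{E_{max}} \E\Big( \|\chi_x P_{\{E\}}(H_\omega^{[-N,N]})\chi_y\| \Big)\, dE + (\text{box error}),
\end{equation}
with the box error decaying once $N\gg |x-y|$.

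Next I would set up the continuum transfer matrices $T_n(E,\omega)$ across the unit cells $[n-1,n]$; because $f$ and $W_0$ are bounded and $E$ ranges over a compact set, each $T_n$ depends on $\omega_n$ only through the single coupling constant, and on $W_0|_{[n-1,n]}$ which, by the relative‑compactness assumption \eqref{relcomp}, ranges over a norm‑compact set of background configurations. Introduce modified Prüfer variables $(\theta_n,\|\cdot\|)$ for the solution flow and express $\|\chi_x P_{\{E\}}\chi_y\|$ in terms of the angular part of the product $T_{y+1}\cdots T_x$ applied to the initial boundary phase. Integrating out $\omega_{n+1},\ldots$ against the density $r$ turns the expectation into an iteration $u\mapsto \mathcal{U}_{E,W_0}u$, where $\mathcal{U}_{E,W_0}$ is an integral operator on $L^1$ (or $L^2$) of the phase circle whose kernel is built from $r$ precomposed with the Prüfer action of one cell; the single‑site conditions \eqref{singlesite1}–\eqref{singlesite2}, together with continuity and compact support of $r$, guarantee that this kernel is strictly positive on a set large enough to make some fixed power $\mathcal{U}_{E,W_0}^{m}$ a strict contraction on the relevant subspace, with contraction constant bounded away from $1$ uniformly over $E\in[-E_{max},E_{max}]$ and over the compact family of backgrounds. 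Iterating $|x-y|/m$ times yields $\E(\|\chi_x P_{\{E\}}\chi_y\|)\le C e^{-\eta|x-y|}$, uniformly in $E$ and $N$; integrating over $E$ and sending $N\to\infty$ gives \eqref{dynloc}.

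The main obstacle, and the place where genuinely new work beyond the discrete case is needed, is establishing the uniform contraction of the one‑cell Prüfer transition operator in the continuum setting. In the discrete Kunz–Souillard argument the averaging operator is literally convolution by the density of $\omega_n$ (after a change of variables), and positivity is transparent; here the analogous operator is the push‑forward of $r$ under the nonlinear, $E$‑ and $W_0$‑dependent map that one cell of the Schrödinger flow induces on the projective line, and one must show (a) that it has an integrable, essentially bounded kernel, (b) that a fixed iterate is strictly positivity‑improving, and (c) that the resulting spectral gap is uniform over the compact parameter set — this last point is exactly where assumption \eqref{relcomp} is used, via a compactness/continuity argument showing the gap is lower‑semicontinuous in $(E,W_0)$ and hence attains a positive minimum. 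A secondary technical point is the transition from the abstract correlator $\rho$ to the eigenfunction‑correlator average and back, i.e.\ controlling the finite‑volume truncation uniformly; this is routine in spirit but requires care because the correlators are defined with operators on all of $L^2(\R)$ rather than on a box.
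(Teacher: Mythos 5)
Your overall architecture (finite boxes, eigenfunction correlators, Pr\"ufer phases, averaging over the couplings to produce one-cell integral operators, and then compactness via \eqref{relcomp} to make the contraction constant uniform in $(E,W_0)$) matches the paper's strategy, and the last point --- continuity of the operator norm in $(g,E)$ plus relative compactness of the backgrounds --- is exactly how the paper gets a uniform $\gamma<1$. However, there is a genuine gap at the heart of your argument: you locate the exponential decay in the claim that strict positivity of the one-cell phase-transition kernel makes a fixed power of the Markov operator $\mathcal{U}_{E,W_0}$ a strict contraction ``on the relevant subspace.'' That mechanism does not produce the needed bound. The operator obtained by pushing $r$ forward through the one-cell Pr\"ufer action (the paper's $T_0$) is stochastic: its $L^1\to L^1$ norm is exactly $1$, and positivity improvement yields a spectral gap only on the complement of the invariant density. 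The correlator $\E\bigl(\|\chi_1 v_k\|\,\|\chi_n v_k\|\bigr)$ is not an expression paired against mean-zero data, so a Perron--Frobenius gap for $\mathcal{U}$ gives you nothing by itself; if it did, one would obtain decay with no use of the localization factors at sites $1$ and $n$, which cannot be right.

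In the Kunz--Souillard scheme the decay instead comes from a \emph{different} operator $T_1$, whose kernel carries the Pr\"ufer amplitude $R$ to the \emph{first} power. This operator arises only after (i) the change of variables from $(\omega_{-L+1},\ldots,\omega_L,k)$ to the integer-site Pr\"ufer phases, whose Jacobian converts the normalization $\|u_L\|_{L^2(-L,L)}^{-2}$ into a telescoping product of factors $R^2(i)/\int f_i u^2$, and (ii) the observation that the two factors $\|\chi_1 v_k\|$ and $\|\chi_n v_k\|$ replace $R^2$ by $R^1$ in exactly the $n$ cells between $1$ and $n$. One then has $T_1(\beta,\alpha)=\sqrt{K_1(\beta,\alpha)}\sqrt{K_2(\beta,\alpha)}$ with $K_1,K_2$ both stochastic, so Schur's test gives only $\|T_1\|_{2,2}\le 1$; the strict inequality requires ruling out equality in Cauchy--Schwarz, which in the continuum forces $\sup_{\lambda}R_{-1}(0,\beta,\lambda)<\infty$ and is contradicted by the divergence of the Pr\"ufer amplitude as $\lambda\to\infty$ (this is where \eqref{singlesite2} enters, and it is the genuinely new continuum ingredient, since $f$ may vanish on part of the cell). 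Your proposal contains neither the identification of this amplitude-weighted operator nor any substitute for the large-coupling divergence argument, so the central contraction step is missing rather than merely sketched.
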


Let us discuss some immediate consequences of this result. First, Theorem~\ref{main} implies a strong form of dynamical localization. Indeed, choosing $g_t(E) = e^{-itE} \chi_{[-E_{max},E_{max}]}(E)$, we find that
$$
\E \Big( \sup_{t \in \R} \|\chi_x e^{-i t H_{\omega}} \chi_{[-E_{max},E_{max}]}(H_{\omega}) \chi_y\| \Big) \le Ce^{-\eta|x-y|}.
$$
Dynamical localization implies spectral localization. Explicitly, we can deduce the following result.

\begin{coro} \label{specloc}
For almost every $\omega$, $H_\omega$ has pure point spectrum with exponentially decaying eigenfunctions.
\end{coro}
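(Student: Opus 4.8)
The plan is to run the standard passage from dynamical localization (Theorem~\ref{main}) to spectral localization, keeping track of the two places where the continuum setting forces extra care: the almost sure upgrade of the correlator bound, and a compactness argument that uses $\mathrm{ran}\,\chi_{[-E_{max},E_{max}]}(H_\omega)\subset H^2(\R)$. First I would fix $N\in\N$, write $J_N=[-N,N]$, let $\eta_N,C_N$ be the constants furnished by Theorem~\ref{main} for $E_{max}=N$, and denote by $Q_N(x,y;\omega)$ the supremum appearing inside $\rho(x,y;N)$ in \eqref{correlators}; note that $Q_N(x,y;\omega)=Q_N(y,x;\omega)$ by taking adjoints inside the supremum. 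By Theorem~\ref{main} and Tonelli, $\E\big(\sum_{z\in\Z}e^{\eta_N|x-z|/2}Q_N(x,z;\cdot)\big)\le C_N\sum_{z\in\Z}e^{-\eta_N|x-z|/2}<\infty$ for every $x\in\Z$; intersecting over $x\in\Z$ and $N\in\N$, there is a full-measure set $\Omega_*$ on which $S_N(x;\omega):=\sum_{z\in\Z}e^{\eta_N|x-z|/2}Q_N(x,z;\omega)<\infty$ for all $x\in\Z$, $N\in\N$, so in particular $Q_N(x,y;\omega)\le\min\{S_N(x;\omega),S_N(y;\omega)\}\,e^{-\eta_N|x-y|/2}$ there. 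I then fix $\omega\in\Omega_*$ for the rest of the argument.

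The next step is to rule out continuous spectrum. Put $P_N=\chi_{J_N}(H_\omega)$; since $W_0+V_\omega$ is bounded, $\mathrm{ran}\,P_N\subset H^2(\R)$ and $(-\Delta+1)^{1/2}P_N$ is bounded. Given a compactly supported $\phi\in L^2(\R)$, set $\psi=P_N\phi$. The function $g_t(E)=e^{-itE}\chi_{J_N}(E)$ is admissible in \eqref{correlators}, so, writing $1=\sum_{y\in\Z}\chi_y$ a.e.\ and using that only finitely many $\chi_y\phi$ are nonzero, $\|\chi_x e^{-itH_\omega}\psi\|=\|\chi_x g_t(H_\omega)\phi\|\le\sum_{y\in\Z}Q_N(x,y;\omega)\|\chi_y\phi\|$ uniformly in $t$; summing over $|x|>L$ and using $\sum_{x\in\Z}Q_N(y,x;\omega)\le S_N(y;\omega)$ gives $\sup_{t}\sum_{|x|>L}\|\chi_x e^{-itH_\omega}\psi\|^2\to0$ as $L\to\infty$. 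Thus the orbit $\{e^{-itH_\omega}\psi:t\in\R\}$ is uniformly tight at spatial infinity, and it is bounded in $H^1(\R)$ because $\|(-\Delta+1)^{1/2}e^{-itH_\omega}\psi\|\le\|(-\Delta+1)^{1/2}P_N\|\,\|\phi\|$; by the Fr\'echet--Kolmogorov compactness criterion in $L^2(\R)$ the orbit is therefore precompact, whence $\psi\in\mathcal H_{pp}(H_\omega)$ by Ruelle's theorem. Since such $\psi$ are dense in $\mathrm{ran}\,P_N$ and $\mathcal H_{pp}(H_\omega)$ is closed, $\mathrm{ran}\,P_N\subset\mathcal H_{pp}(H_\omega)$; letting $N\to\infty$ and using $P_N\to I$ strongly yields $\mathcal H_{pp}(H_\omega)=L^2(\R)$, i.e.\ $H_\omega$ has pure point spectrum.

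Finally I would establish the exponential decay of the eigenfunctions. Suppose $H_\omega\psi=E\psi$ with $\|\psi\|=1$, and pick $N\in\N$ with $|E|<N$. Since $-d^2/dx^2+W_0+V_\omega$ is in the limit point case at $\pm\infty$, $E$ is a simple eigenvalue and $\chi_{\{E\}}(H_\omega)=\langle\psi,\cdot\rangle\psi$; as $\chi_{\{E\}}$ is admissible in \eqref{correlators}, $\|\chi_x\psi\|\,\|\chi_y\psi\|=\|\chi_x\chi_{\{E\}}(H_\omega)\chi_y\|\le Q_N(x,y;\omega)$ for all $x,y\in\Z$. Fixing $y_0$ with $\|\chi_{y_0}\psi\|>0$ then gives $\|\chi_x\psi\|\le\|\chi_{y_0}\psi\|^{-1}S_N(y_0;\omega)\,e^{-\eta_N|x-y_0|/2}$ for every $x\in\Z$, i.e.\ exponential decay, with a rate (depending on $E$ only through $N$) that is valid for each of the countably many eigenfunctions of $H_\omega$. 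This completes the proof on $\Omega_*$.

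I expect the only genuinely continuum-specific obstacle to lie in the compactness input of the second step: deducing $\psi\in\mathcal H_{pp}(H_\omega)$ from the dynamical bound requires precompactness of the orbit in $L^2(\R)$, not merely its uniform spatial tightness, and this is precisely where the regularity $\mathrm{ran}\,P_N\subset H^2(\R)$ --- equivalently, the (local) compactness of $\chi_x\chi_{J_N}(H_\omega)$ underlying the RAGE theorem --- has to be used. Everything else is routine bookkeeping: Tonelli for the first step, and the discarding of only countably many null sets.
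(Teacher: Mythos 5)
Your proposal is correct and follows essentially the route the paper intends: the paper itself gives no independent argument here but defers to \cite{aenss} (plus the remark about intersecting over a countable unbounded set of values of $E_{max}$), and your write-up is exactly that standard derivation --- an almost-sure summable correlator bound via Tonelli and countably many null sets, Ruelle/RAGE (with the precompactness supplied by the $H^2$-regularity of spectrally truncated vectors) to conclude pure point spectrum, and insertion of the admissible rank-one function $\chi_{\{E\}}$ into the correlator to get exponential decay of $\|\chi_x\psi\|$. So the approaches agree; your version simply makes explicit the details the paper outsources.
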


We refer the reader to \cite{aenss} for an explicit derivation of this consequence and other interesting forms and signatures of localization from a statement like Theorem~\ref{main}. Note that spectral localization throughout the spectrum is obtained by taking a countable intersection over an unbounded sequence of values of $E_{max}$.

Spectral localization and a weaker form of dynamical localization with periodic background potential can be established by different methods for more general single-site distributions. The most general result, which merely assumes non-trivial bounded support, can be found in \cite{dss}.

It has also been shown recently in \cite{hss} how the fractional moment method can be used to show the exponentially decaying dynamical localization bound \eqref{dynloc} for the model \eqref{model}, \eqref{randompot}. This work requires periodicity of the background potential and the existence of a bounded and compactly supported density for the single-site distribution.

Given these results for the case of periodic background and the limitation of the methods used in their proofs to this particular choice of background, we see that the method we present here should be regarded as the primary tool to perturb a given Schr\"odinger operator by the addition of a (preferably small) random potential to generate pure point spectrum. In fact, a question of this kind posed by T.\ Colding and P.\ Deift triggered our work. Thus, as a sample application, we state the following corollary, which answers the specific question we were faced with.

\begin{coro} \label{ColdingDeift}
Given $W_0 : \R \to \R$ with $C^1$-norm $\|W_0\|_{\infty} + \|W_0'\|_{\infty} < \infty$, there exists a $V : \R \to \R$
with arbitrarily small $C^1$-norm such that $-d^2/dx^2 + W_0 +V$ has pure point spectrum.
\end{coro}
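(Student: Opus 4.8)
The plan is to deduce Corollary~\ref{ColdingDeift} directly from Corollary~\ref{specloc}, by exhibiting a random potential of the form \eqref{randompot} all of whose realizations are deterministically small in $C^1$-norm and to which Theorem~\ref{main} applies to the given $W_0$.

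First I would check that the given $W_0$ is an admissible background. It is bounded by hypothesis, and since $\|W_0'\|_\infty<\infty$ it is Lipschitz, hence uniformly continuous on $\R$; by observation~(ii) following \eqref{relcomp}, uniform continuity forces the relative compactness condition \eqref{relcomp}. Thus $W_0$ satisfies all standing assumptions on the background.

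Next I would build the random part with a small single-site profile. Fix once and for all a bump $\phi\in C^\infty(\R)$ with $\operatorname{supp}\phi\subset[-3/4,-1/4]$ and $\phi>0$ on $(-3/4,-1/4)$, and set $f=\e\phi$ for a small parameter $\e>0$. Then $f$ satisfies \eqref{singlesite1} with $I=[-5/8,-3/8]$, $c=\e\min_I\phi>0$, $C=\e\|\phi\|_\infty$, and \eqref{singlesite2} with $[a,b]=[-3/4,-1/4]$. Choose the coupling constants $\omega_n$ i.i.d.\ with a continuous density $r$ supported in $[-1,1]$ (a tent-shaped density will do). Because $\operatorname{supp} f$ has length $<1$, at most one term of the sum \eqref{randompot} is nonzero at any given point, so $V_\omega$ is $C^\infty$, and for a.e.\ $\omega$ (namely whenever every $\omega_n\in[-1,1]$) one has
\[
\|V_\omega\|_\infty+\|V_\omega'\|_\infty \le \e\bigl(\|\phi\|_\infty+\|\phi'\|_\infty\bigr).
\]
Given any prescribed bound on the $C^1$-norm, choose $\e$ small enough that the right-hand side lies below it.

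Finally, for $H_\omega=-d^2/dx^2+W_0+V_\omega$ all hypotheses of Theorem~\ref{main} hold, so Corollary~\ref{specloc} gives that for a.e.\ $\omega$ the operator $H_\omega$ has pure point spectrum with exponentially decaying eigenfunctions; intersecting the full-measure sets coming from $E_{max}=1,2,3,\dots$ (as noted after Corollary~\ref{specloc}) yields this on all of the spectrum. Selecting any single $\omega$ from this full-measure set and putting $V:=V_\omega$ proves the claim. There is no genuine obstacle here: the method of the paper does all the work, and the only points needing a little care are to keep $f$ smooth and supported strictly inside $[-1,0]$ so that the periodized sum inherits smoothness while still respecting the two-sided bound \eqref{singlesite1}, and to record that the smallness of $\|V_\omega\|_\infty+\|V_\omega'\|_\infty$ holds uniformly on a full-measure set of $\omega$, so that a good realization can actually be picked.
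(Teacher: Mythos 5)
Your proposal is correct and follows essentially the same route as the paper, which likewise deduces the corollary from Theorem~\ref{main} and Corollary~\ref{specloc} by noting that $W_0$ is uniformly continuous (hence satisfies \eqref{relcomp}) and by taking a single-site potential $f$ of small $C^1$-norm with compactly supported coupling constants. You merely make explicit the details the paper leaves to the reader (the smooth bump supported strictly inside $[-1,0]$, the verification of \eqref{singlesite1}--\eqref{singlesite2}, and the deterministic $C^1$ bound on $V_\omega$), all of which check out.
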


This is an immediate consequence of Theorem~\ref{main} and Corollary~\ref{specloc}, since $W_0$ is uniformly continuous and we may choose $V$ of the form (\ref{randompot}) with single-site potential $f$ of small $C^1$-norm and $\omega_n$ supported in $[0,1]$.

Apart from the ability to handle quite general background potentials, the flexibility of the Kunz-Souillard method manifests itself in the discrete case in another way: it allows one to prove powerful results for random decaying potentials; see Simon \cite{simon} and the follow-up paper \cite{dss2}. We have not been able to establish a continuum analogue of Simon's work using our method. The obstacle to doing this is in making Proposition~\ref{t1normthm} in Section~\ref{s.T122strongbound} quantitative. We regard it as an interesting open problem to establish such a quantitative estimate. \footnote{Incidentally, it was realized sixteen years after Simon's 1982 work that there is a different approach to random decaying potentials which is applicable in more general situations and which does extend easily to the continuum \cite{kls}. Another approach to continuum random decaying potentials was given in \cite{ku}.}

The remainder of the paper is organized as follows. Section~\ref{s.reduction} describes the overall strategy of the proof of Theorem~\ref{main}, adapting the strategy of \cite{ks} and culminating in Section~\ref{ss.endofproof}. In the remaining sections we prove various norm bounds on integral operators which are used in this argument, with the crucial contraction property being established in Section~\ref{s.T122strongbound}. A technical result on the large coupling limit of Pr\"ufer amplitudes, used in Section~\ref{s.T122strongbound}, is proven in Appendix~\ref{s.appA}, while Appendix~\ref{s.appB} summarizes a number of standard ODE facts.

\vspace{.3cm}

\noindent {\bf Acknowledgements:} We would like to thank B.~Simon for useful discussions at an early stage of this work, as well as T.~Colding and P.~Deift for posing a question which motivated us to finish it.

\section{Reduction to Integral Operator Bounds} \label{s.reduction}

\subsection{Finite Volume Correlators} \label{ss:fvcorrelators}

Consider finite-volume restrictions of $H_\omega$, that is, for $L
\in \Z_+$, we denote by $H_\omega^L$ the restriction of $H_\omega$
to the interval $[-L,L]$ with Dirichlet boundary conditions. For fixed $L$ we will use the abbreviation $\omega = (\omega_{-L+1}, \ldots, \omega_L)$, as these are the only coupling constants which $H_{\omega}^L$ depends on.
Moreover, for $x,y \in \Z$ and $E_{max} >0$, we introduce the {\it finite volume correlators}
$$
\rho_L(x,y;E_{max}) := \E \Big( \sup\{ \|\chi_x g(H_{\omega}^L)
\chi_y\|: \, |g|\le 1, \: \mbox{supp}\,g \subset
[-E_{max},E_{max}]\} \Big).
$$

\begin{lemma}
We have
$$
\rho(x,y;E_{max}) \le \liminf_{L \to \infty} \rho_L(x,y;E_{max}).
$$
\end{lemma}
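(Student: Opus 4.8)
The plan is to deduce the bound on $\rho(x,y;E_{max})$ from the bounds on $\rho_L(x,y;E_{max})$ by a strong-resolvent-convergence argument combined with Fatou's lemma, carried out pointwise in $\omega$. Fix $x,y\in\Z$ and $E_{max}>0$. The first step is to observe that $H_\omega^L$ converges to $H_\omega$ in the strong resolvent sense as $L\to\infty$ (for every fixed $\omega$): the Dirichlet restrictions form an increasing sequence of quadratic forms whose form domains exhaust the form domain of $H_\omega$, and $W_0+V_\omega$ is bounded, so this is a standard fact (see Appendix~\ref{s.appB}; cf.\ also \cite{cfks}). Strong resolvent convergence implies that for any bounded continuous $g$ with support in $[-E_{max},E_{max}]$ one has $g(H_\omega^L)\to g(H_\omega)$ strongly, hence $\chi_x g(H_\omega^L)\chi_y \to \chi_x g(H_\omega)\chi_y$ strongly, and therefore
$$
\|\chi_x g(H_\omega)\chi_y\| \le \liminf_{L\to\infty} \|\chi_x g(H_\omega^L)\chi_y\|.
$$

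The second step upgrades this to the supremum over the (non-closed) class of Borel functions $g$ with $|g|\le 1$ and $\operatorname{supp} g\subset[-E_{max},E_{max}]$. Since $\chi_x g(H_\omega)\chi_y$ is a rank-limited (indeed Hilbert--Schmidt, by the ODE estimates of Appendix~\ref{s.appB}) operator depending on $g$ only through the values of $g$ on the spectrum, and since $\|\chi_x g(H_\omega)\chi_y\|\le 1$, the supremum over Borel $g$ is attained in the limit along continuous $g$'s: one approximates a given Borel $g$ by continuous functions $g_k$, $|g_k|\le 1$, $\operatorname{supp} g_k\subset[-E_{max}-1/k,E_{max}+1/k]$ say, converging to $g$ pointwise, so that $g_k(H_\omega)\to g(H_\omega)$ strongly; alternatively one notes directly that the finite-volume supremum is monotone enough in $E_{max}$ and passes to a slightly enlarged window. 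Either way one gets, pointwise in $\omega$,
$$
s_\omega(x,y;E_{max}) \le \liminf_{L\to\infty} s_\omega^L(x,y;E_{max}),
$$
where $s_\omega$ and $s_\omega^L$ denote the suprema inside the expectations defining $\rho$ and $\rho_L$.

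The third and final step is to take expectations. Applying Fatou's lemma to the nonnegative functions $\omega\mapsto s_\omega^L(x,y;E_{max})$ (all bounded by $1$, hence integrable) gives
$$
\rho(x,y;E_{max}) = \E\big(s_\omega(x,y;E_{max})\big) \le \E\big(\liminf_{L\to\infty} s_\omega^L(x,y;E_{max})\big) \le \liminf_{L\to\infty}\E\big(s_\omega^L(x,y;E_{max})\big) = \liminf_{L\to\infty}\rho_L(x,y;E_{max}),
$$
which is the claim. The main obstacle I anticipate is the second step: one must be careful that the passage from continuous test functions (for which strong resolvent convergence is immediately usable) to general bounded Borel test functions does not enlarge the energy window in an uncontrolled way, and that the supremum is not destroyed in the limit. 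This is handled by the compact-support restriction together with the boundedness $\|\chi_x g(H_\omega^L)\chi_y\|\le 1$, which makes the family of operators under consideration effectively compact; measurability of $\omega\mapsto s_\omega^L$ follows since the supremum can be realized over a countable dense family of continuous $g$.
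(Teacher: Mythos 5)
The paper does not actually prove this lemma; it delegates to Eq.~(2.28) of \cite{aenss} and the surrounding discussion, and the argument behind that reference is precisely the one you outline: strong resolvent convergence of the Dirichlet restrictions $H_\omega^L\to H_\omega$, lower semicontinuity of the operator norm under strong operator convergence, and Fatou's lemma in $\omega$. So your architecture is the right one and matches the source; measurability of the suprema via a countable dense family of test functions is also handled correctly.

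The one place where your write-up has a genuine gap is Step 2, the upgrade from continuous to general Borel test functions. Two issues are hiding there. First, a bounded Borel function need not be a pointwise limit of continuous functions (only Baire class one functions are); what you actually need is $g_k\to g$ almost everywhere with respect to the scalar spectral measures of the fixed operator $H_\omega$ (available via Lusin's theorem or density of $C_c$ in $L^1(\mu)$), which suffices for $g_k(H_\omega)\to g(H_\omega)$ strongly but makes the approximating sequence depend on $\omega$ --- harmless, since the inequality is derived pointwise in $\omega$, but it should be said. Second, and more seriously, your approximants are supported in $[-E_{max}-1/k,\,E_{max}+1/k]$, so the argument as written only yields
$$
s_\omega(x,y;E_{max}) \;\le\; \lim_{\varepsilon\downarrow 0}\,\liminf_{L\to\infty}\, s_\omega^L(x,y;E_{max}+\varepsilon),
$$
and the return to the window $[-E_{max},E_{max}]$ is not a matter of ``monotonicity'': the eigenvalues of $H_\omega^L$ in $(E_{max},E_{max}+\varepsilon]$ are not controlled uniformly in $L$. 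The standard repair is to approximate from the inside, taking $g_k$ supported in the open interval $(-E_{max},E_{max})$; this recovers everything except the contribution of the two endpoint energies $\pm E_{max}$, which one disposes of by noting that a fixed energy is almost surely not an eigenvalue of $H_\omega$ (e.g.\ by spectral averaging over a single coupling constant $\omega_n$, whose distribution is absolutely continuous). With that, the pointwise inequality holds for a.e.\ $\omega$ and Fatou finishes the proof. For the paper's purposes even your weaker $\varepsilon$-enlarged inequality would suffice, since Proposition~\ref{l.finitevolkeylemma} is proved for every $E_{max}$; but as a proof of the lemma as stated, Step 2 needs this fix.
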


\begin{proof}
See \cite[Eq.~(2.28)]{aenss} and its discussion there.
\end{proof}

We will often suppress the dependence of these quantities on
$E_{max}$ in what follows. Moreover, we can without loss of
generality restrict our attention to the case $x = 1$ and $y = n
\in \Z_+$. Thus, we aim to estimate $\rho(1,n)$ by means of
finding estimates for $\rho_L(1,n)$ that are uniform in $L$.
Explicitly, our goal is to show the following:

\begin{prop}\label{l.finitevolkeylemma}
There exist $C<\infty$ and $\eta>0$ such that, for all $n,L \in
\Z_+$ with $n \le L$, we have
\begin{equation} \label{dynloc2}
\rho_L(1,n) \le Ce^{-\eta n}.
\end{equation}
\end{prop}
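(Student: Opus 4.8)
The plan is to follow the Kunz--Souillard strategy adapted to the continuum. The starting point is to express the finite volume correlator $\rho_L(1,n)$ as an integral over the coupling constants $\omega_{-L+1},\dots,\omega_L$ against the product density $\prod_k r(\omega_k)\,d\omega_k$, and to rewrite the operator norm $\|\chi_1 g(H_\omega^L)\chi_n\|$ in terms of transfer matrices or, more precisely, Pr\"ufer-type variables for the Dirichlet problem on $[-L,L]$. Concretely, $g(H_\omega^L)$ is a finite rank sum over the Dirichlet eigenvalues $E_j$ lying in $[-E_{max},E_{max}]$ of $g(E_j)\,\langle \psi_j,\cdot\rangle\psi_j$, and taking the supremum over $|g|\le 1$ supported in the energy window bounds $\|\chi_1 g(H_\omega^L)\chi_n\|$ by $\sum_{E_j\in[-E_{max},E_{max}]}\|\chi_1\psi_j\|\,\|\chi_n\psi_j\|$. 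So the quantity to control is the expectation of this sum, which, after normalizing, becomes a spectral-averaged version of the product of the solution's size near site $1$ and near site $n$.

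The key mechanism is that, writing the eigenfunction equation as a first-order system and propagating from the left endpoint, the map $\omega_k \mapsto (\text{Pr\"ufer angle at the right end of block }k)$ has, for each fixed energy, a smoothing/mixing effect because $r$ is a continuous compactly supported density. One encodes the propagation across one unit block as an integral operator acting on functions of the Pr\"ufer angle (or of the boundary data of the solution), with kernel obtained by pushing forward $r(\omega_k)d\omega_k$ under the one-block solution map; the energy integral over $[-E_{max},E_{max}]$ is carried along as a parameter. Iterating over the $n$ blocks between site $1$ and site $n$ turns $\rho_L(1,n)$ into (a bound by) the norm of an $n$-fold product of such transfer operators sandwiched between two fixed vectors. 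The $L$-uniformity comes from the fact that the blocks beyond site $n$ only contribute a bounded factor, since the relevant operators are contractions or at worst uniformly bounded in operator norm on the appropriate $L^p$ space, uniformly in the energy parameter and in the background shift $W_0(\cdot-k)|_{[-1,0]}$ (here one uses the relative compactness assumption \eqref{relcomp} to get uniformity).

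The proof is then completed by invoking the contraction estimate: Proposition~\ref{t1normthm} of Section~\ref{s.T122strongbound} provides that a suitable power (say, a fixed number $m$ of blocks) of the one-block transfer operator has operator norm strictly less than $1$, uniformly over the compact energy region and over the admissible background blocks. Grouping the $n$ blocks into roughly $n/m$ packets of length $m$, each packet contributes a factor $\le \theta<1$, while the finitely many leftover blocks and the boundary vectors contribute a fixed constant $C$; this yields $\rho_L(1,n)\le C\theta^{n/m} = Ce^{-\eta n}$ with $\eta = (\log(1/\theta))/m$, uniformly in $L\ge n$. To pass from the Hilbert--Schmidt-type bound on $\sum_j \|\chi_1\psi_j\|\|\chi_n\psi_j\|$ to the operator-theoretic statement one also needs a standard a priori bound controlling the total spectral mass of $H_\omega^L$ in $[-E_{max},E_{max}]$ localized near a single site, which follows from elementary Dirichlet eigenvalue counting / Combes--Thomas-type estimates and does not depend on randomness.

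The main obstacle, as the authors themselves flag, is establishing the strict contraction in Proposition~\ref{t1normthm}: showing that the one-block (or $m$-block) transfer operator is a genuine contraction, uniformly in the energy parameter over $[-E_{max},E_{max}]$ and uniformly over the family of background restrictions $W_0(\cdot-k)|_{[-1,0]}$. Unlike the discrete case, where the analogous operator's kernel is directly a shifted copy of $r$, in the continuum the one-block solution map is a nonlinear flow depending on both the randomness and the local background, and the smoothing it provides must be extracted from the large-coupling behavior of the Pr\"ufer amplitude (the content of Appendix~\ref{s.appA}) together with the continuity of $r$. Everything else --- the reduction to an integral operator product, the handling of the $L\to\infty$ limit via the lemma cited from \cite{aenss}, and the bookkeeping turning a norm bound into $Ce^{-\eta n}$ --- is comparatively routine once that contraction is in hand.
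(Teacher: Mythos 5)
Your proposal follows the paper's own strategy essentially step for step: the eigenfunction-expansion bound on $\rho_L(1,n)$, the change of variables to Pr\"ufer phases that turns the correlator into a product of one-block integral operators, uniform $L^1\to L^1$ and $L^1\to L^2$ bounds for the blocks outside $[1,n]$ (which give the $L$-uniformity), and the strict $L^2$ contraction of the middle-block operator $T_1$. The only cosmetic difference is that the paper obtains a per-block factor $\gamma<1$ directly --- Proposition~\ref{t1normthm} gives $\|T_1(g,E)\|_{2,2}<1$ pointwise in $(g,E)$, which is upgraded to a uniform $\gamma<1$ via the continuity of $(g,E)\mapsto\|T_1(g,E)\|_{2,2}$ together with the relative compactness assumption \eqref{relcomp} --- so no grouping into packets of $m$ blocks is needed.
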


In order to estimate $\rho_L(1,n)$, we consider the eigenfunction
expansion of $H_\omega^L$. Thus, for $L
\in \Z_+$ and $\omega = (\omega_{-L+1}, \ldots, \omega_L)$, we denote the (simple) eigenvalues of $H_\omega^L$ by
$\{ E_k : k \ge 1 \}$ and the associated normalized eigenvectors
by $\{ v_k : k \ge 1\}$. Here we leave the dependence of these
quantities on $\omega$ and $L$ implicit.

The proof of Proposition~\ref{l.finitevolkeylemma} starts with the
following observation:
\begin{align}
\label{e.rholest} \rho_L(1,n) & \le \E \left( \sum_{|E_k| \le
E_{max}} \|\chi_n v_k \| \cdot \|\chi_1 v_k\| \right) \\
\nonumber & = \sum_{k = 1}^\infty \E \left( \chi_{\{ \omega :
|E_k(\omega)| \le E_{max} \} } \|\chi_n v_k \| \cdot \|\chi_1 v_k\| \right) \\
\nonumber & = \sum_{k = 1}^\infty \int_{\R^{2L}} \chi_{\{ \omega :
|E_k(\omega)| \le E_{max} \} } \|\chi_n v_k \| \cdot \|\chi_1
v_k\| \prod_{j = -L+1}^L r(\omega_j) \, d\omega_j
\end{align}

In the next subsection we will rewrite the integral in the last line of
\eqref{e.rholest} by introducing a change of variables based on
the Pr\"ufer phase of the eigenfunctions $v_k$ at the integer
sites in $[-L,L]$. The Jacobian of this change of variables will
be computed in Subsection~\ref{ss.jacobian}. This will then lead
to a formula for $\rho_L(1,n)$ involving integral operators, which
will be made explicit in Subsection~\ref{ss.intop}.

\subsection{Change of Variables} \label{sec:changevar}

In this subsection we introduce the change of variables according
to which we will rewrite the integrals appearing in the last line
of \eqref{e.rholest}.

To introduce Pr\"ufer variables, let $u_{-L}(\cdot,\omega,E)$ be the solution of $-u''+(W_0+V_{\omega})u=Eu$ satisfying $u(-L)=0$ and $u'(-L)=1$. The corresponding Pr\"ufer phase $\varphi_{-L}(\cdot,\omega,E)$ and amplitude $R_{-L}(\cdot,\omega,E)$ are defined by
$$
u_{-L} = R_{-L} \sin \varphi_{-L}, \quad u_{-L}' = R_{-L} \cos \varphi_{-L}
$$
normalized so that $\varphi_{-L}(-L)=0$ and $\varphi_{-L}(\cdot,\omega,E)$ continuous to get uniqueness of the phase. The Pr\"ufer phase satisfies the first order equation (cf.\ Lemma~\ref{lem:xder})
\begin{equation} \label{e.phaseDE}
\varphi_{-L}' = 1 - (1+W_0+V_{\omega}-E) \sin^2 \varphi_{-L}.
\end{equation}
Fix an $M>0$ such that
\begin{equation}\label{f.Mdef}
\mathrm{supp} \, r \subseteq [-M,M].
\end{equation}
Thus
\begin{equation} \label{e.shift1}
\|\varphi'\|_\infty \le 2 + \|W_0\|_{\infty} + M \|f\|_\infty +E_{max} < \infty.
\end{equation}
Choose $N \in \Z_+$ such that
\begin{equation} \label{e.shift2}
2 + \| W_0\|_\infty
+ M \|f\|_\infty + E_{max} < N \pi.
\end{equation}
Hence the change of the Pr\"ufer phase over any interval of length
one is uniformly bounded in absolute value by $N
\pi$.\footnote{Actually, we need $N \pi$ to bound the
\emph{growth} of the Pr\"ufer phase over a unit interval, but it
follows from the differential equation that it can never decrease
by more than $\pi$.}

With the circle
$$
\T_N := \R / (2 \pi N \Z)
$$
and
$$
\Omega :=\left\{ (\omega,k) \in [-M,M]^{2L} \times \Z_+ \,:\; E_k(\omega) \in [-E_{max},E_{max}] \right\}
$$
we can now define our change of variables
\begin{eqnarray*}
\mathcal{C} \quad : \quad  \Omega \quad
& \longrightarrow & \T_N^{2L-1} \times
[-E_{max} , E_{max}] \times \{ 0 , \ldots , 2N-1 \} \\
(\omega, k) & \mapsto & (\theta_{-L+1}
, \ldots , \theta_{L-1} , E , j )
\end{eqnarray*}
as follows:
\begin{itemize}

\item For $i = -L + 1, \ldots, L-1$, $\theta_i \in \T_N$ is chosen so that
$$
\varphi_{-L}(i,\omega,E_k(\omega)) \equiv \theta_i \mod 2\pi N.
$$

\item $E \in [-E_{max} , E_{max}]$ is given by
$$
E = E_k(\omega).
$$

\item Finally, $j \in \{ 0 , \ldots , 2N-1 \}$ is defined so that
$$
k \equiv j \mod 2N.
$$

\end{itemize}

\begin{lemma}\label{l.cis11}
The change of variables $\mathcal{C}$ is one-to-one.
\end{lemma}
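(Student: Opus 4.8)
The plan is to establish injectivity of $\mathcal{C}$ by reconstructing $(\omega,k)$ explicitly from its image, one unit interval at a time. Suppose $(\omega,k),(\omega',k')\in\Omega$ satisfy $\mathcal{C}(\omega,k)=\mathcal{C}(\omega',k')=(\theta_{-L+1},\ldots,\theta_{L-1},E,j)$; in particular $E_k(\omega)=E_{k'}(\omega')=E$. Two ingredients drive the reconstruction: first, the data $\theta_i$ together with the a priori bound on the change of the Pr\"ufer phase over a unit interval pin down the \emph{actual} (unreduced) phase $\varphi_{-L}(i,\cdot,E)$ at each integer site; second, on a single unit interval the right endpoint phase is a strictly monotone function of the coupling constant sitting on that interval.

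I would first recover the unreduced phases. By the normalization, $\varphi_{-L}(-L,\omega,E)=0=\varphi_{-L}(-L,\omega',E)$. Argue inductively for $i=-L+1,\ldots,L-1$: if the two actual phases agree at $i-1$ with common value $\Phi_{i-1}$, then by \eqref{e.shift1}--\eqref{e.shift2} the phase grows by strictly less than $N\pi$ over $[i-1,i]$, while by the footnote to \eqref{e.shift2} it decreases by at most $\pi$; hence $\varphi_{-L}(i,\omega,E)$ and $\varphi_{-L}(i,\omega',E)$ both lie in the half-open interval $[\Phi_{i-1}-\pi,\Phi_{i-1}+N\pi)$, whose length is $(N+1)\pi\le 2\pi N$, and both are congruent to $\theta_i$ modulo $2\pi N$. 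Such an interval contains at most one representative of a given residue class modulo $2\pi N$, so the two actual phases coincide at $i$; call the common value $\Phi_i$. This gives $\varphi_{-L}(i,\omega,E)=\varphi_{-L}(i,\omega',E)=\Phi_i$ for all $-L\le i\le L-1$.

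Next, on each $[i-1,i]$ the restriction of the potential of $H_\omega$ is $W_0+\omega_i f(\cdot-i)$ (the single-site terms decouple since $\operatorname{supp}f\subseteq[-1,0]$), so $\varphi_{-L}(\cdot,\omega,E)|_{[i-1,i]}$ is the solution of \eqref{e.phaseDE} with this potential and initial value $\Phi_{i-1}$. Since $f\ge 0$ with $f>0$ a.e.\ on a nondegenerate subinterval by \eqref{singlesite1}--\eqref{singlesite2}, and since $\sin^2\varphi_{-L}$ vanishes only at isolated points (at a zero of $\sin\varphi_{-L}$ one reads off $\varphi_{-L}'=1\ne 0$ from \eqref{e.phaseDE}), the right-hand side of \eqref{e.phaseDE} is strictly decreasing in $\omega_i$ on a set of positive measure; the ODE comparison theorem (Appendix~\ref{s.appB}) then shows that $\omega_i\mapsto\varphi_{-L}(i,\cdot,E)$, with left endpoint value $\Phi_{i-1}$ held fixed, is strictly monotone, hence injective. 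For $i=-L+1,\ldots,L-1$ both $\omega_i$ and $\omega_i'$ produce the endpoint value $\Phi_i$, so $\omega_i=\omega_i'$. For the final interval $[L-1,L]$ there is no $\theta_L$, but there $u_{-L}(\cdot,\omega,E)$ satisfies both Dirichlet conditions and is therefore the (unnormalized) $k$-th eigenfunction of $H_\omega^L$, so Sturm oscillation theory gives $\varphi_{-L}(L,\omega,E)=k\pi$ and likewise $\varphi_{-L}(L,\omega',E)=k'\pi$; the growth bound forces both $k\pi$ and $k'\pi$ into an interval of length $\le 2\pi N$, while $k\equiv j\equiv k'\pmod{2N}$ makes $|k-k'|$ either $0$ or $\ge 2N$, so $k=k'$, and then the monotonicity on $[L-1,L]$ yields $\omega_L=\omega_L'$. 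Hence $(\omega,k)=(\omega',k')$.

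The step I expect to require the most care is the reconstruction of the unreduced phases: one must verify that the change of the phase over a unit interval is confined to an interval short enough (length $\le 2\pi N$, and of the right half-open type) that congruence modulo $2\pi N$ determines the value uniquely, which is precisely what the choice of $N$ in \eqref{e.shift2} is calibrated for, and one should be mindful of the borderline case $N=1$. Closely related is the need for \emph{strict} (not merely weak) monotonicity of the endpoint phase in the coupling constant, which is exactly where the positivity hypotheses \eqref{singlesite1}--\eqref{singlesite2} on $f$ are used, together with the fact that the Pr\"ufer phase cannot linger at a multiple of $\pi$.
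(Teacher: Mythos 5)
Your proposal is correct and follows essentially the same route as the paper: iteratively recover the unreduced Pr\"ufer phases at integer sites from the congruences using the bound on the phase change over a unit interval guaranteed by the choice of $N$ in \eqref{e.shift2}, then invoke strict monotonicity of the endpoint phase in the coupling constant (Lemma~\ref{lem:lamderphi}) to pin down each $\omega_i$, and finally determine $k$ from $\varphi_{-L}(L,\omega,E)=k\pi$. The only cosmetic differences are that you identify $k=k'$ directly from the residue condition $k\equiv j \bmod 2N$ before recovering $\omega_L$, whereas the paper sets $\theta_L:=j\pi$ and treats $i=L$ like the other sites, and that you derive strict monotonicity from comparison plus the positivity of $f$ rather than citing the derivative formula of Lemma~\ref{lem:lamderphi}(b) outright.
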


\begin{proof}
The key point is that the Pr\"ufer phase of the $k$-th
eigenfunction of $H_\omega$ runs from $0$ (at $-L$) to $k \pi$ (at
$L$), i.e.\ $\varphi_{-L}(L,\omega,E_k(\omega)) =k\pi$. Since we are taking phases modulo $2N\pi$, we need to ensure
that no ambiguities are generated. The desired uniqueness follows
from our choice of $N$, which can be seen as follows.

Suppose that $(\theta_{-L+1}, \ldots, \theta_{L-1}, E, j)$ belongs to the range of $\mathcal{C}$ and that $\mathcal{C}(\omega,k) = (\theta_{-L+1}, \ldots, \theta_{L-1}, E, j)$, $(\omega,k) \in \Omega$. Let $\theta_{-L} :=0$, $\theta_{L} :=j\pi$. Thus, by definition of $\mathcal{C}$, for $i=-L+1,\ldots, L$,
\begin{equation} \label{e.Cinv}
\varphi_{-L}(i,\omega,E) \equiv \theta_i \mod 2\pi N.
\end{equation}
This determines $\omega_i$, $i=-L+1,\ldots,L$, uniquely, as is seen iteratively in $i$: For fixed $i$, $\varphi_{-L}(i-1,\omega,E)$ is determined by $\omega_{-L+1}$, \ldots, $\omega_{i-1}$ and, by \eqref{e.shift1} and \eqref{e.shift2},
$$
|\varphi_{-L}(i,\omega,E) - \varphi_{-L}(i-1,\omega,E)| < N\pi.
$$
We also know from Appendix~\ref{s.appB} (cf.\ Lemma~\ref{lem:lamderphi}) that $\varphi_{-L}(i,\omega,E)$ is strictly decreasing in $\omega_i$. Therefore, given $\omega_{-L+1}, \ldots, \omega_{i-1}$, there can be at most one $\omega_i \in [-M,M]$ satisfying \eqref{e.Cinv}.

Finally, with the unique values of $\omega_{-L+1}$, \ldots, $\omega_L$ reconstructed, $k \in \Z_+$ is uniquely determined by $\varphi_{-L}(L,\omega,E)=k\pi$.
\end{proof}

Now we carry out the change of variables in \eqref{e.rholest} and consider the
resulting integral.

Similarly to the definition of $u_{-L}$, let $u_L(\cdot,\omega,E)$ be the solution of $-u''+(W_0+V_{\omega})u=Eu$ determined by $u_L(L)=0$, $u'_L(L)=1$, with corresponding Pr\"ufer variables $\varphi_L$ and $R_L$, where $\varphi_L(L)=0$.

For $g\in L^{\infty}([-1,0])$ and $E, \lambda , \theta , \eta \in \R$, let $u_0(\cdot , \theta,
\lambda, g, E)$ be the unique solution of
\begin{equation}\label{geve}
-u'' + gu + \lambda fu = Eu
\end{equation}
with $u_0'(0) = \cos \theta$, $u_0(0) = \sin \theta$ and let
$u_{-1}(\cdot , \eta, \lambda, g, E)$ be the unique solution of
\eqref{geve} with $u_{-1}'(-1) = \cos \eta$, $u_{-1}(-1) = \sin
\eta$.

Let $\varphi_0(\cdot , \theta , \lambda , g, E)$ and $R_0(\cdot ,
\theta , \lambda , g, E)$ be the Pr\"ufer phase and amplitude,
respectively, for $u_0(\cdot , \theta , \lambda , g, E)$.
Similarly, let
$\varphi_{-1}(\cdot , \eta , \lambda , g, E)$ and
$R_{-1}(\cdot , \eta , \lambda , g, E)$ be the Pr\"ufer variables for $u_{-1}(\cdot , \eta,
\lambda, g, E)$.

For the next definition, in order to make use of \eqref{e.shift1} and \eqref{e.shift2}, we assume $\|g\|_{\infty} \le \|W_0\|_{\infty}$ and $|E|\le E_{max}$.

Note that $\varphi_0(\cdot , \theta+\pi , \lambda , g, E) = \varphi_0(\cdot , \theta , \lambda , g, E) + \pi$ and $\varphi_{-1}(\cdot , \eta+\pi , \lambda , g, E) = \varphi_{-1}(\cdot , \eta , \lambda , g, E)+\pi$. Thus, in particular, $\varphi_0(-1 , \cdot , \lambda , g, E)$ and $\varphi_{-1}(0 , \cdot , \lambda , g, E)$ induce well-defined mappings from $\T_N$ to $\T_N$, which is how we use them below.

If $\alpha, \beta \in \T_N$ are such that there exists a coupling
constant $\lambda \in [-M,M]$ with $\varphi_0(-1,\alpha,\lambda,g,E) = \beta$
(or, equivalently, $\varphi_{-1}(0,\beta,\lambda,g,E) = \alpha$), we
define $\lambda(\beta,\alpha,g,E) = \lambda$. Note that by \eqref{e.shift1} and \eqref{e.shift2}, using the same argument as in the proof of Lemma~\ref{l.cis11}, this $\lambda$ is uniquely determined if it exists.

Finally, write $f_i := f(\cdot -i)$ and $g_i := W_0(\cdot-i)$.

\begin{lemma}\label{l.aftercov}
We have
$$
\rho_L(1,n) \le \int_{-E_{max}}^{E_{max}} \rho_L(1,n,E) \, dE,
$$
where, for $E \in [-E_{max}, E_{max}]$, we write
\begin{align} \label{e.fixenergy}
\rho_L(1,n,E) := \sum_{j=0}^{2N-1} \int_{\T_N^{2L-1}} &
r(\lambda(\theta_{L-1}, j\pi, g_L, E)) \cdot r(\lambda(0,
\theta_{-L+1}, g_{-L+1}, E)) \\
& \left( \prod_{i - -L+2}^{L-1} r(\lambda(\theta_{i-1}, \theta_i,
g_i, E))  \right) \left( \int_{n-1}^n u_{L}^2 \right)^{1/2} \left(
\int_0^1 u_{L}^2 \right)^{1/2} \nonumber \\
& \frac{R_{-L}^2(-L+1) \cdots R_{-L}^2(0) \cdot R_{L}^2(1) \cdots
R_{L}^2(L-1)}{\int f_{-L+1} u_{-L}^2 \cdots \int f_{0} u_{-L}^2
\cdot \int f_{1} u_{L}^2 \cdots \int f_{L} u_{L}^2} \nonumber \\
& d \theta_{-L+1} \cdots d \theta_{L-1} \nonumber
\end{align}
and interpret $r(\lambda(\cdots))$ as zero if $\lambda(\cdots)$
does not exist. Here the argument $\omega$ in the functions $u_{\pm L} = u_{\pm L}(\cdot,\omega,E)$ and $R_{\pm L} = R_{\pm L}(\cdot,\omega,E)$ is the one uniquely determined via Lemma~\ref{l.cis11} by $\theta_{-L+1}$, \ldots, $\theta_{L}$, $E$ and $j$.
\end{lemma}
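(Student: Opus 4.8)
The plan is to apply the change of variables $\mathcal C$ directly to the last line of \eqref{e.rholest} and to recognize that the transformed integrand is exactly the summand of \eqref{e.fixenergy}; the only inequality used is the first one in \eqref{e.rholest}, everything after it being an exact rewriting. The last line of \eqref{e.rholest} is the integral of $\|\chi_n v_k\|\,\|\chi_1 v_k\|\,\prod_j r(\omega_j)$ over $\Omega$ against the product of Lebesgue measure in $\omega$ and counting measure in $k$. First I would verify that $\mathcal C$ satisfies the hypotheses of the change of variables formula: by Lemma~\ref{l.cis11} it is injective, and on each piece of $\Omega$ obtained by fixing $j$ together with the eigenvalue index $k$ it is a $C^1$ diffeomorphism onto an open subset of $\T_N^{2L-1}\times\R$, since $\varphi_{-L}(i,\cdot,\cdot)$ depends in a $C^1$ way on $\omega$ and $E$ by standard ODE theory (Appendix~\ref{s.appB}) and $E_k(\omega)$ is $C^1$ because the Dirichlet eigenvalues of $H_\omega^L$ are simple. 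The Jacobian will come out strictly positive, so the change of variables applies piecewise and, after summing over the pieces and over $j$, turns the right-hand side of \eqref{e.rholest} into $\sum_{j=0}^{2N-1}\int_{-E_{max}}^{E_{max}}\int_{\T_N^{2L-1}}(\,\cdots)\,d\theta\,dE$, where the integrand is understood to vanish off the range of $\mathcal C$.

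Next I would identify the transformed factors. As in the proof of Lemma~\ref{l.cis11}, with $\theta_{-L}:=0$ and $\theta_L:=j\pi$, the inverse $\mathcal C^{-1}$ reconstructs the couplings iteratively by $\omega_i=\lambda(\theta_{i-1},\theta_i,g_i,E)$; hence $\prod_{j=-L+1}^L r(\omega_j)$ becomes the product of $r(\lambda(\cdots))$ displayed in \eqref{e.fixenergy}, and (up to a null set) the range of $\mathcal C$ is exactly the set of $(\theta,E,j)$ for which all these $\lambda$'s exist in $[-M,M]$, which is the stated convention $r(\lambda(\cdots))=0$. That any such point really lies in the range --- i.e.\ that $E$ is then a Dirichlet eigenvalue of $H_\omega^L$ with index $\equiv j\bmod 2N$ --- follows from $\varphi_{-L}(L,\omega,E)\equiv j\pi\bmod 2\pi N$ together with the fact that $\varphi_{-L}$ is positive on $(-L,L]$ and increasing through the multiples of $\pi$.

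The computational heart is the Jacobian. Using the variational formulas for the Pr\"ufer phase from Appendix~\ref{s.appB} --- $\partial\varphi_0(-1,\cdot)/\partial\theta=R_0(0)^2/R_0(-1)^2$, $\partial\varphi_0(-1,\cdot)/\partial\lambda=R_0(-1)^{-2}\int_{-1}^0 f u_0^2$, $\partial\varphi_0(-1,\cdot)/\partial E=-R_0(-1)^{-2}\int_{-1}^0 u_0^2$, and $R_0(0)=1$ --- and implicit differentiation of $\varphi_0(-1,\theta_i,\omega_i,g_i,E)=\theta_{i-1}$, one finds that, after factoring $(\int f_i u_{-L}^2)^{-1}$ out of the $i$-th row, the matrix $\partial(\omega_{-L+1},\ldots,\omega_L)/\partial(\theta_{-L+1},\ldots,\theta_{L-1},E)$ has entries $R_{-L}(i-1)^2$, $-R_{-L}(i)^2$, and $\int_{i-1}^i u_{-L}^2$ in the columns for $\theta_{i-1}$, $\theta_i$, and $E$. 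Adding all $2L$ rows telescopes every $\theta$-column to zero and collapses the $E$-column to $\sum_i\int_{i-1}^i u_{-L}^2=\|u_{-L}\|_{L^2(-L,L)}^2$; expanding along that row leaves a lower bidiagonal minor with diagonal entries $-R_{-L}(-L+1)^2,\ldots,-R_{-L}(L-1)^2$, so that
\[
\left|\det\frac{\partial(\omega_{-L+1},\ldots,\omega_L)}{\partial(\theta_{-L+1},\ldots,\theta_{L-1},E)}\right|
=\frac{\|u_{-L}\|_{L^2(-L,L)}^2\,R_{-L}(-L+1)^2\cdots R_{-L}(L-1)^2}{\int f_{-L+1}u_{-L}^2\cdots\int f_L u_{-L}^2}
\]
(this determinant computation is naturally isolated; cf.\ Subsection~\ref{ss.jacobian}). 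Finally I would assemble the pieces: since $v_k=u_{-L}/\|u_{-L}\|_{L^2(-L,L)}$, we have $\|\chi_n v_k\|\,\|\chi_1 v_k\|=\|u_{-L}\|_{L^2(-L,L)}^{-2}\bigl(\int_{n-1}^n u_{-L}^2\bigr)^{1/2}\bigl(\int_0^1 u_{-L}^2\bigr)^{1/2}$, so the norm of $u_{-L}$ cancels the one in the Jacobian; and since $u_{-L}$ and $u_L$ are proportional (both are eigenfunctions for the simple eigenvalue $E$), replacing $u_{-L}$ by $u_L$ in the factors indexed by $i\ge 1$ changes nothing --- the powers of the proportionality constant cancelling --- and brings the integrand into exactly the form of the summand of \eqref{e.fixenergy}. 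Integrating over $E$ and recalling the inequality in \eqref{e.rholest} then gives the claim.

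I expect the main obstacle to be the Jacobian computation together with the attendant bookkeeping: one has to track the amplitudes $R_{\pm L}(i)^2$, the norms $\|u_{\pm L}\|^2$, the single-site integrals $\int f_i u_{\pm L}^2$, and the boundary values $\theta_{-L}=0$, $\theta_L=j\pi$ carefully enough that everything recombines into the stated expression; a secondary point requiring care is the verification that $\mathcal C$ satisfies the change-of-variables hypotheses and the exact identification of its range.
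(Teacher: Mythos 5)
Your proposal is correct, and it follows the paper's overall strategy --- change variables on each eigenvalue branch $\Omega_k$, use the injectivity from Lemma~\ref{l.cis11} to sum over the disjoint images, and identify the transformed integrand with the summand of \eqref{e.fixenergy} --- but you take a genuinely different route through the Jacobian, which is where the paper spends most of its effort. The paper differentiates the forward map $\omega \mapsto (\theta_{-L+1},\ldots,\theta_{L-1},E_k(\omega))$, in which every $\theta_i$ depends on \emph{all} couplings through $E_k(\omega)$; this forces the use of the Feynman--Hellmann formula for $\partial E_k/\partial\omega_n$, produces a dense matrix whose rows are constant to the left and right of the diagonal, and requires the special determinant identity of Lemma~\ref{l.detformula} (Subsection~\ref{ss.jacobian}, Lemma~\ref{l.jacobidet}). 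You instead differentiate the inverse map, exploiting that the reconstruction of the couplings is local: $\omega_i=\lambda(\theta_{i-1},\theta_i,g_i,E)$ depends only on $\theta_{i-1},\theta_i,E$, so implicit differentiation of $\varphi_0(-1,\theta_i,\omega_i,g_i,E)=\theta_{i-1}$ via Lemma~\ref{lem:thetader}, Lemma~\ref{lem:lamderphi} and Corollary~\ref{cor:Ederphi} yields an almost bidiagonal matrix whose determinant follows from a one-line telescoping argument, with no eigenvalue perturbation formula needed. Your result, $\bigl|\det J_k\bigr|^{-1}=\|u_{-L}\|_{L^2(-L,L)}^{2}\,\prod_{i=-L+1}^{L-1}R_{-L}^2(i)\big/\prod_{i=-L+1}^{L}\int f_i u_{-L}^2$, agrees with the reciprocal of the formula in Lemma~\ref{l.jacobidet} once one uses, as you do at the end, that $u_L$ and $u_{-L}$ are proportional at an eigenvalue (the powers of the proportionality constant do cancel, there being $2L$ of them in numerator and denominator alike), so the mixed $u_{-L}/u_L$ form of \eqref{e.fixenergy} is recovered exactly; your consistent use of $u_{-L}$ with a conversion at the end replaces the paper's mixed convention \eqref{e.isonleft}--\eqref{e.isonright} imposed from the start. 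The auxiliary points you raise are also in order: smoothness of the simple eigenvalue $E_k(\omega)$, invertibility of the cell map because $\partial\varphi_0(-1,\cdot)/\partial\lambda>0$ by \eqref{singlesite1}, and the observation that for the stated inequality only the inclusion of the images of the $\mathcal{C}_k$ in the integration region is needed, your exact characterization of the range of $\mathcal{C}$ being a harmless extra.
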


\begin{proof}
For $k \in \Z_+$, let
$$
\Omega_k := \{ \omega \in [-M,M]^{2L} : |E_k(\omega)| \le E_{max} \}
$$
and write
$$
A_k := \int_{\R^{2L}} \chi_{\Omega_k} \|\chi_n v_k\| \|\chi_1 v_k\| \prod_{i=-L+1}^L r(\omega_i)\,d\omega_i.
$$

On $\Omega_k$ we change variables by the map
\begin{eqnarray*}
\mathcal{C}_k \quad : \quad \Omega_k & \longrightarrow &
\T_N^{2L-1} \times
[-E_{max} , E_{max}] \\
\omega & \mapsto & (\theta_{-L+1} ,
\ldots , \theta_{L-1} , E_k(\omega)).
\end{eqnarray*}
Let $J_k = \partial \mathcal{C}_k/\partial \omega$ be its Jacobian. Pick $j\in \{0,\ldots,2N-1\}$ so that $j\equiv k \mod 2N$. Noting that, in terms of the new variables on $\mathcal{C}_k(\Omega_k)$,
\[ v_k = \frac{u_L(\cdot,\omega(\theta_{-L+1},\ldots,\theta_L,E,j),E)}{\|u_L(\cdot,\omega(\theta_{-L+1},\ldots,\theta_L,E,j),E)\|},\]
we get
\begin{eqnarray*}
A_k & = & \int_{-E_{max}}^{E_{max}} \int_{\T_N^{2L-1}} \chi_{\mathcal{C}_k(\Omega_k)} |\det J_k|^{-1} r(\lambda(0,\theta_{-L+1},g_{-L+1},E)) \\
& & \left( \prod_{i=-L+2}^{L-1} r(\lambda(\theta_{i-1}, \theta_i, g_i, E)) \right) r(\lambda(\theta_{L-1},j\pi, g_L,E)) \\
& & \frac{\left(\int_{n-1}^n u_L^2\right)^{1/2} \left(\int_0^1 u_L^2 \right)^{1/2}}{\int_{-L}^L u_L^2} \,d\theta_{-L+1} \ldots d\theta_{L-1}\,dE.
\end{eqnarray*}

The Jacobian determinant is calculated in Lemma~\ref{l.jacobidet} of the next subsection. Inserting the result yields
\begin{eqnarray*}
A_k & = & \int_{-E_{max}}^{E_{max}} \int_{\T_N^{2L-1}} \chi_{\mathcal{C}_k(\Omega_k)} r(\lambda(\theta_{L-1},j\pi, g_L,E)) r(\lambda(0,\theta_{-L+1},g_{-L+1},E)) \\
& & \left( \prod_{i=-L+2}^{L-1} r(\lambda(\theta_{i-1}, \theta_i, g_i, E)) \right) \left(\int_{n-1}^n u_L^2\right)^{1/2} \left(\int_0^1 u_L^2 \right)^{1/2} \\
& & \frac{R_{-L}^2(-L+1) \ldots R_{-L}^2(0) R_L^2(1) \ldots R_L^2(L-1)}{\int f_{-L+1} u_{-L}^2 \ldots \int f_0 u_{-L}^2 \int f_1 u_L^2 \ldots \int f_L u_L^2} \, d\theta_{-L+1} \ldots d\theta_{L-1}\,dE.
\end{eqnarray*}
Let $k_1 \not= k_2$ and $j_1$, $j_2 \in \{0,\ldots,2N-1\}$ with $j_{\ell} \equiv k_{\ell} \mod 2N$. Then Lemma~\ref{l.cis11} says that
$$
\left( \mathcal{C}_{k_1}(\Omega_{k_1}) \times \{j_1\} \right) \cap \left( \mathcal{C}_{k_2}(\Omega_{k_2}) \times \{j_2\} \right) = \emptyset.
$$
Thus it follows that $\sum_k A_k \le \int_{-E_{max}}^{E_{max}} \rho_L(1,n,E)$, with $\rho_L(1,n,E)$ defined in \eqref{e.fixenergy}. But by \eqref{e.rholest} we have $\rho_L(1,n) \le \sum_k A_k$, which completes the proof.
\end{proof}

\subsection{Calculation of the Jacobian}\label{ss.jacobian}

It will turn out that the Jacobians arising above, up to constant row and column multipliers, have the simple structure considered in the next lemma.

\begin{lemma}\label{l.detformula}
We have
$$
\det \begin{pmatrix} a_1 & a_1 & a_1 & \cdots & a_1 & a_1
\\ b_2 & a_2 & a_2 & \cdots & a_2 & a_2 \\
b_3 & b_3 & a_3 & \cdots & a_3 & a_3 \\
\vdots & \vdots & \vdots & & \vdots & \vdots\\
b_n & b_n & b_n & \cdots & b_n & a_n \end{pmatrix} = a_1 (a_2 -
b_2)(a_3 - b_3) \cdots (a_n - b_n).
$$
\end{lemma}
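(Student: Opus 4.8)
The plan is to reduce the matrix to lower triangular form by a single round of elementary column operations, which leave the determinant unchanged. Denote the matrix by $A = (A_{ij})_{1 \le i,j \le n}$, so that $A_{ij} = b_i$ when $j < i$ and $A_{ij} = a_i$ when $j \ge i$.

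First I would subtract, for each $j$ from $2$ to $n$, column $j-1$ from column $j$; to be safe one performs these operations in the order $j = n, n-1, \ldots, 2$ so that each step uses the original column $j-1$, or equivalently one multiplies $A$ on the right by the upper triangular matrix with $1$'s on the diagonal and $-1$'s on the first superdiagonal, whose determinant is $1$. Writing $A'$ for the resulting matrix, its first column is unchanged, namely $(a_1, b_2, b_3, \ldots, b_n)^{T}$, and for $j \ge 2$ the $(i,j)$ entry of $A'$ equals $A_{ij} - A_{i,j-1}$. Checking the three cases: if $i \le j-1$ then $A_{ij} = A_{i,j-1} = a_i$, so the entry is $0$; if $i \ge j+1$ then $A_{ij} = A_{i,j-1} = b_i$, so the entry is again $0$; and if $i = j$ then $A_{ij} = a_i$ while $A_{i,j-1} = b_i$, giving $a_i - b_i$. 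Hence for $j \ge 2$ column $j$ of $A'$ has its single nonzero entry $a_j - b_j$ in row $j$, so $A'$ is lower triangular with diagonal entries $a_1, a_2 - b_2, \ldots, a_n - b_n$.

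Since these column operations preserve the determinant, $\det A = \det A' = a_1 \prod_{j=2}^{n} (a_j - b_j)$, as claimed. Alternatively, one could argue by induction on $n$: subtracting column $n-1$ from column $n$ produces a last column equal to $(0, \ldots, 0, a_n - b_n)^{T}$, and cofactor expansion along it reduces the problem to the same determinant of size $n-1$, with the base case $n = 1$ giving $a_1$. The computation is entirely routine; the only point requiring a little care is the bookkeeping of when a matrix entry equals $a_i$ versus $b_i$ after the column operation, together with arranging the column subtractions so that each uses an original, not already modified, column.
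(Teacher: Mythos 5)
Your proof is correct and takes essentially the same approach as the paper: elementary row/column operations to create zeros, then reading off the determinant. The paper clears the second row and iterates a cofactor expansion, whereas you triangularize in a single pass of column operations (which, as a minor bonus, avoids the implicit division by $a_1$ in the paper's row operation); the two arguments are otherwise equivalent and equally routine.
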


\begin{proof}
Observe that
\begin{align*}
\det \begin{pmatrix} a_1 & a_1 & a_1 & \cdots & a_1 & a_1
\\ b_2 & a_2 & a_2 & \cdots & a_2 & a_2 \\
b_3 & b_3 & a_3 & \cdots & a_3 & a_3 \\
\vdots & \vdots & \vdots & & \vdots & \vdots\\
b_n & b_n & b_n & \cdots & b_n & a_n \end{pmatrix} & = \det
\begin{pmatrix} a_1 & a_1 & a_1 & \cdots & a_1 & a_1
\\ b_2 - a_2 & 0 & 0 & \cdots & 0 & 0 \\
b_3 & b_3 & a_3 & \cdots & a_3 & a_3 \\
\vdots & \vdots & \vdots & & \vdots & \vdots\\
b_n & b_n & b_n & \cdots & b_n & a_n \end{pmatrix} \\
& = (a_2 - b_2) \det \begin{pmatrix} a_1 & a_1 & a_1 & \cdots &
a_1 & a_1
\\ b_3 & a_3 & a_3 & \cdots & a_3 & a_3 \\
b_4 & b_4 & a_4 & \cdots & a_4 & a_4 \\
\vdots & \vdots & \vdots & & \vdots & \vdots\\
b_n & b_n & b_n & \cdots & b_n & a_n \end{pmatrix}
\end{align*}
and then obtain the result by iteration (or induction).
\end{proof}

\begin{lemma}\label{l.jacobidet}
With the conventions for the arguments of $u_{\pm L}$ and $R_{\pm L}$ made in Lemma~\ref{l.aftercov} we have
$$
\det J_k = \frac{\int f_{-L+1} u_{-L}^2 \cdots \int f_{0} u_{-L}^2
\cdot \int f_{1} u_{L}^2 \cdots \int f_{L} u_{L}^2}{R_{-L}^2(-L+1)
\cdots R_{-L}^2(0) \cdot R_{L}^2(1) \cdots R_{L}^2(L-1)} \left(
\int_{-L}^L u_L^2 \right)^{-1}.
$$
\end{lemma}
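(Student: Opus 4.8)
The plan is to compute the $2L \times 2L$ matrix $J_k = \partial\mathcal{C}_k/\partial\omega$ entry by entry, pull a common multiplier out of each row and each column, and recognize what remains as an instance of Lemma~\ref{l.detformula}. Throughout, every Pr\"ufer quantity is evaluated at $E = E_k(\omega)$, so that $u_{-L}(\cdot,\omega,E_k(\omega))$ is a nonzero multiple of the $k$-th Dirichlet eigenfunction $v_k$ and in particular has only isolated zeros; thus $\int f_j u_{-L}^2 > 0$ for every $j$ (by \eqref{singlesite2}), $R_{-L}^2(i) = u_{-L}(i)^2 + u_{-L}'(i)^2 > 0$, and $\int_{-L}^L u_{-L}^2 > 0$, so that all quotients appearing below make sense. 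Modulo the mod-$2\pi N$ reductions, which are local diffeomorphisms on each $\T_N$-factor, $\mathcal{C}_k$ is a map between open subsets of $\R^{2L}$.

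First I would record the two standard derivative formulas for the Pr\"ufer phase from Appendix~\ref{s.appB} (cf.\ Lemma~\ref{lem:lamderphi}). If the potential depends on a parameter $\mu$ through $q_\mu = W_0 + V_\omega$ while the initial data at $-L$ are held fixed, then the variation-of-parameters (Green's identity) argument gives
$$
\frac{\partial \varphi_{-L}}{\partial \mu}(x) = -\frac{1}{R_{-L}^2(x)}\int_{-L}^x \frac{\partial q_\mu}{\partial\mu}\, u_{-L}^2,
\qquad
\frac{\partial \varphi_{-L}}{\partial E}(x) = \frac{1}{R_{-L}^2(x)}\int_{-L}^x u_{-L}^2.
$$
With $\mu = \omega_j$ one has $\partial q/\partial\omega_j = f_j$, which is supported in $[j-1,j]$, so $\partial_{\omega_j}\varphi_{-L}(i) = -R_{-L}^{-2}(i)\int f_j u_{-L}^2$ when $j \le i$ and $\partial_{\omega_j}\varphi_{-L}(i) = 0$ when $j > i$. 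Since the $k$-th eigenvalue is characterized by $\varphi_{-L}(L,\omega,E_k(\omega)) = k\pi$ (the key fact used in the proof of Lemma~\ref{l.cis11}) and $\partial_E\varphi_{-L}(L) > 0$, the implicit function theorem applies and differentiation in $\omega_j$ gives
$$
\frac{\partial E_k}{\partial\omega_j} = -\frac{\partial_{\omega_j}\varphi_{-L}(L)}{\partial_E\varphi_{-L}(L)} = \frac{\int f_j u_{-L}^2}{\int_{-L}^L u_{-L}^2}.
$$

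Writing $P := \int_{-L}^L u_{-L}^2$ and combining $\partial_{\omega_j}\theta_i = \partial_{\omega_j}\varphi_{-L}(i) + \partial_E\varphi_{-L}(i)\cdot\partial_{\omega_j}E_k$ with the formulas above, a one-line simplification yields
$$
\frac{\partial\theta_i}{\partial\omega_j} = \frac{\int f_j u_{-L}^2}{P\, R_{-L}^2(i)}\cdot
\begin{cases} \int_{-L}^i u_{-L}^2 & \text{if } j > i, \\ -\int_i^L u_{-L}^2 & \text{if } j\le i, \end{cases}
\qquad \frac{\partial E}{\partial\omega_j} = \frac{\int f_j u_{-L}^2}{P}.
$$
Hence every column $j$ carries the common factor $(\int f_j u_{-L}^2)/P$ and every $\theta_i$-row the common factor $1/R_{-L}^2(i)$. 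After pulling these out, the $E$-row becomes $(1,\ldots,1)$, and ordering the rows as $E,\theta_{-L+1},\theta_{-L+2},\ldots,\theta_{L-1}$ and the columns as $\omega_{-L+1},\ldots,\omega_L$, what remains is exactly the matrix of Lemma~\ref{l.detformula} with $n = 2L$, $a_1 = 1$, $a_m = \int_{-L}^{-L+m-1}u_{-L}^2$ and $b_m = -\int_{-L+m-1}^L u_{-L}^2$ for $2 \le m \le 2L$. Since $a_m - b_m = \int_{-L}^L u_{-L}^2 = P$ for every $m \ge 2$, Lemma~\ref{l.detformula} evaluates the reduced determinant to $a_1 P^{2L-1} = P^{2L-1}$. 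Up to a sign coming from the row permutation — which is immaterial, since only $|\det J_k|$ enters Lemma~\ref{l.aftercov}, so it suffices to establish the formula with $|\det J_k|$ on the left — we obtain
$$
|\det J_k| = \left(\prod_{j=-L+1}^{L}\frac{\int f_j u_{-L}^2}{P}\right)\left(\prod_{i=-L+1}^{L-1}\frac{1}{R_{-L}^2(i)}\right) P^{2L-1}
= \frac{\prod_{j=-L+1}^{L}\int f_j u_{-L}^2}{P\cdot\prod_{i=-L+1}^{L-1}R_{-L}^2(i)}.
$$

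Finally I would match this with the stated expression, which uses $u_L$ in place of $u_{-L}$ for the indices $\ge 1$: since $u_L(\cdot,\omega,E_k)$ and $u_{-L}(\cdot,\omega,E_k)$ both solve the eigenvalue equation at $E = E_k$ and both vanish at $L$, they are proportional, $u_L = c\,u_{-L}$ with $c = 1/u_{-L}'(L)$, and substituting $\int f_j u_{-L}^2 = c^{-2}\int f_j u_L^2$ for $j \ge 1$, $R_{-L}^2(i) = c^{-2}R_L^2(i)$ for $i \ge 1$, and $P = c^{-2}\int_{-L}^L u_L^2$ makes all powers of $c$ cancel and produces precisely the right-hand side in the statement. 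I expect the one genuinely delicate point to be the bookkeeping in the previous paragraph: one must pair rows $\theta_i$ with columns $\omega_j$ so that the dichotomy $j \le i$ versus $j > i$ produces exactly the pattern of Lemma~\ref{l.detformula} ($b$'s strictly below, $a$'s on and above the diagonal), while keeping track of which integrals run over $[-L,i]$ and which over $[i,L]$ and of the implicit $E$-dependence that feeds $\partial_{\omega_j}E_k$ into every $\theta_i$-row.
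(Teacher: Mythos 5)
Your proof is correct and follows essentially the same route as the paper's: compute the entries of $J_k$ from the Pr\"ufer-phase derivative formulas, pull a common multiplier out of each row and column, and apply Lemma~\ref{l.detformula}. The only (harmless) variations are that you keep the left solution $u_{-L}$ throughout and convert to the mixed $u_{\pm L}$ form at the end via proportionality at the eigenvalue, whereas the paper switches to the right Pr\"ufer phase for $i \ge 1$ before differentiating, and that you derive the Feynman--Hellmann identity from the quantization condition $\varphi_{-L}(L,\omega,E_k(\omega))=k\pi$ rather than citing it; your explicit reduction to $|\det J_k|$ is, if anything, slightly more careful about the sign than the paper.
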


\begin{proof}
With a slight adjustment of the notation introduced above, we have for $i = -L+1 , \ldots ,
0$,
\begin{equation} \label{e.isonleft}
\theta_i \equiv \varphi_{-L}(i,0,(\omega_{-L+1} , \ldots ,
\omega_i),E_k(\omega)) \mod 2\pi N,
\end{equation}
and for $i = 1 , \ldots , L-1$,
\begin{equation} \label{e.isonright}
\theta_i \equiv \varphi_{L}(i, j\pi, (\omega_{i+1} , \ldots ,
\omega_L),E_k(\omega)) \mod 2\pi N.
\end{equation}
The notational adjustment made here consists in stressing that $\varphi_{-L}(i)$ depends explicitly on $\omega_n$ only for $n=-L+1,\ldots,i$ (we only need to know the potential on $[-L,i]$ to calculate it), while it depends on all $\omega_n$ implicitly through $E_k(\omega)$. Similar reasoning applies to $\varphi_L(i)$ in \eqref{e.isonright}.

Thus, using \eqref{e.isonright} for $1 \le i \le L-1$ and $n \le i$, we have by Corollary~\ref{cor:Ederphi} and the Feynman-Hellmann formula
\begin{align*}
\frac{\partial \theta_i}{\partial \omega_n} & = \frac{\partial
\varphi_L(i)}{\partial E} \cdot \frac{\partial E_k}{\partial
\omega_n} \\
& = - \frac{1}{R_L^2(i)} \int_i^L u_L^2 \cdot \int f_n v_k^2 \\
& = - \frac{1}{R_L^2(i)} \frac{\int_i^L u_L^2}{\int_{-L}^L u_L^2}
\cdot \int f_n u_L^2;
\end{align*}
while for $1 \le i \le L-1$ and $n > i$, we have
\begin{align*}
\frac{\partial \theta_i}{\partial \omega_n} & = \frac{\partial
\varphi_L(i)}{\partial \omega_n} + \frac{\partial
\varphi_L(i)}{\partial E} \cdot \frac{\partial E_k}{\partial
\omega_n} \\
& = \frac{1}{R_L^2(i)} \int f_n u_L^2 - \frac{1}{R_L^2(i)} \int_i^L u_L^2 \cdot \int f_n v_k^2 \\
& = \frac{1}{R_L^2(i)} \int f_n u_L^2 - \frac{1}{R_L^2(i)}
\frac{\int_i^L u_L^2}{\int_{-L}^L u_L^2} \cdot \int f_n u_L^2 \\
& = \frac{1}{R_L^2(i)} \frac{\int_{-L}^i u_L^2}{\int_{-L}^L u_L^2}
\cdot \int f_n u_L^2.
\end{align*}
Analogous formulae, based on \eqref{e.isonleft}, hold in the case $ -L+1 \le i \le 0$. In this case we get for $n>i$ that
$$
\frac{\partial \theta_i}{\partial \omega_n} = \frac{1}{R^2_{-L}(i)} \frac{\int_{-L}^i u_{-L}^2}{\int_{-L}^L u_{-L}^2} \int f_n u_{-L}^2
$$
and, for $n\le i$,
$$
\frac{\partial \theta_i}{\partial \omega_n} = - \frac{1}{R_{-L}^2(i)} \frac{\int_i^L u_{-L}^2}{\int_{-L}^L u_{-L}^2} \int f_n u_{-L}^2.
$$

Also, writing $E$ as the first of the new variables, the first row of the Jacobian has entries $\partial E/\partial \omega_n = \int f_n v_k^2 = \int f_n u_L^2 / \int_{-L}^L u_L^2$.

Using these formulae and factoring out common factors in rows and
columns, we find that
$$
\det J_k = \frac{\int f_{-L+1} u_L^2 \cdots \int f_L
u_L^2}{R_{-L}^2(-L+1) \cdots R_{-L}^2(0) \cdot R_L^2(1) \cdots
R_L^2(L-1)} \left( \int_{-L}^L u_L^2 \right)^{-2L} \cdot \det A
$$
where
$$
A = \begin{pmatrix} 1 & 1 & 1 & \cdots & 1 & 1 & 1
\\ -I_{-L+1 , L}^- & I_{-L,-L+1}^- & I_{-L,-L+1}^- & \cdots & \cdots & I_{-L,-L+1}^- & I_{-L,-L+1}^- \\
-I_{-L+2 , L}^- & -I_{-L+2 , L}^- & I_{-L , -L+2}^- & \cdots & \cdots & I_{-L , -L+2}^- & I_{-L , -L+2}^- \\
\vdots & \vdots & \vdots & & \vdots & \vdots & \vdots \\
-I_{0 , L}^- & \cdots & -I_{0 , L}^- & I_{-L , 0}^- & I_{-L ,
0}^- & \cdots & I_{-L , 0}^- \\
-I_{1 , L}^+ & \cdots & -I_{1 , L}^+ & -I_{1 ,
L}^+ & I_{-L , 1}^+ & \cdots & I_{-L , 1}^+ \\
\vdots & \vdots & \vdots & & \vdots & \vdots & \vdots \\
-I_{L-1 , L}^+ & -I_{L-1 , L}^+ & -I_{L-1 , L}^+ & \cdots &
-I_{L-1 , L}^+ & -I_{L-1 , L}^+ & I_{-L , L-1}^+
\end{pmatrix}
$$
and
$$
I_{m,l}^\pm := \int_m^l u_{\pm L}^2.
$$
Applying Lemma~\ref{l.detformula}, we obtain
\begin{eqnarray*}
\det A & = & (I_{-L,-L+1}^- + I_{-L+1 , L}^-) \cdots (I_{-L ,0}^-
+ I_{0 , L}^-) \\ & & \cdot (I_{-L , 1}^+ + I_{1 , L}^+) \cdots (I_{-L ,
L-1}^+ + I_{L-1 , L}^+) \\
& = & \left( \int_{-L}^L u_{-L}^2 \right)^{L} \cdot \left(
\int_{-L}^L u_L^2 \right)^{L-1}.
\end{eqnarray*}
Plugging this into the formula for $\det J_k$ obtained above, the
result follows.
\end{proof}

\subsection{The Integral Operator Formula}\label{ss.intop}

The expression in Lemma~\ref{l.aftercov} may be written in a more
succinct form once we have introduced a number of quantities.

If, for given $g \in L^\infty(-1,0)$ with $\|g\|_{\infty} \le \|W_0\|_{\infty}$, $E \in [-E_{max} , E_{max}]$ and $\beta, \alpha \in \T_N$, $\lambda(\beta,\alpha, g,E)$ as defined in Section~\ref{sec:changevar} exists, we
write
\begin{eqnarray} \label{upmdef}
u_+(\cdot, \beta, \alpha, g,E) & := & u_0(\cdot , \alpha, \lambda(\beta,
\alpha, g,E), g,E), \\  u_-(\cdot, \beta, \alpha, g,E) & := &
u_{-1}(\cdot , \beta, \lambda(\beta, \alpha, g,E), g,E) \nonumber
\end{eqnarray}
and
\begin{eqnarray} \label{Rpmdef}
R_+(\cdot, \beta, \alpha, g,E) & := & R_0(\cdot , \alpha, \lambda(\beta,
\alpha, g,E), g,E), \\  R_-(\cdot, \beta, \alpha, g,E) & := &
R_{-1}(\cdot , \beta, \lambda(\beta, \alpha, g,E), g,E). \nonumber
\end{eqnarray}

For later use, note that
\begin{align}\label{upmrel}
u_+^2 (\cdot, \beta , \alpha,g,E) & = R_+^2 (-1,\beta,\alpha,g,E) \cdot
u_-^2(\cdot,\beta,\alpha,g,E) \\
\nonumber & = R_-^{-2} (0,\beta,\alpha,g,E) \cdot
u_-^2(\cdot,\beta,\alpha,g,E).
\end{align}

We introduce the following integral operators defined on functions $F$ on $\T_N$:
\begin{align*}
(T_0(g,E) F)(\beta) & = \int \frac{R_+^2(-1,\beta,\alpha,g,E)}{\int
f u_+^2(\cdot,\beta,\alpha,g,E)} r(\lambda(\beta,\alpha,g,E))
F(\alpha) \, d\alpha, \\
(\tilde T_0(g,E) F)(\alpha) & = \int
\frac{R_-^2(0,\beta,\alpha,g,E)}{\int f
u_-^2(\cdot,\beta,\alpha,g,E)} r(\lambda(\beta,\alpha,g,E))
F(\beta) \, d\beta, \\
(T_1(g,E) F)(\beta) & = \int \frac{R_+(-1,\beta,\alpha,g,E)}{\int f
u_+^2(\cdot,\beta,\alpha,g,E)} r(\lambda(\beta,\alpha,g,E))
F(\alpha) \, d\alpha,
\end{align*}
and, for $j = 0, \ldots, 2N-1$, the functions
\begin{align*}
(\Psi_j(g,E))(\theta) & = \frac{R_+^2(-1,\theta,j\pi,g,E)}{\int f
u_+^2(\cdot,\theta,j\pi,g,E)} r(\lambda(\theta,j\pi,g,E)), \\
(\Phi(g,E))(\theta) & = \frac{R_-^2(0,0,\theta,g,E)}{\int
f u_-^2(\cdot,0,\theta,g,E)} r(\lambda(0,\theta,g,E)).
\end{align*}

Now we are finally in a position to state the integral operator
formula, which bounds $\rho_L(1,n,E)$ from above.

\begin{lemma}
There exists a constant $C = C(E_{max})$ such that for every $E
\in [-E_{max} , E_{max}]$, we have
\begin{eqnarray} \label{e.intopformula}
\lefteqn{\rho_L(1,n,E)} \\ & \le & C \sum_{j = 0}^{2N-1} \Big\langle \tilde T_0(g_0,E)
\cdots \tilde T_0(g_{-L+2},E) \Phi(g_{-L+1},E) , \nonumber \\
& &  T_1(g_1,E) \cdots
T_1(g_n,E) T_0(g_{n+1},E) \cdots T_0(g_{L-1},E) \Psi_j(g_L,E) \Big\rangle. \nonumber
\end{eqnarray}
Here $\langle \cdot, \cdot \rangle$ denotes the inner product on $L^2(\T_N)$.

\end{lemma}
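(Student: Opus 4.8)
The plan is to reorganize the integral in \eqref{e.fixenergy} cell by cell, to match each cell's contribution with the integral kernel of the appropriate operator $\tilde T_0(g_i,E)$, $T_0(g_i,E)$ or $T_1(g_i,E)$ (and with $\Phi$, $\Psi_j$ at the two ends), and to absorb into the constant $C$ the one factor that does not directly take such a form, namely the ``cross term'' $\bigl(\int_{n-1}^n u_L^2\bigr)^{1/2}\bigl(\int_0^1 u_L^2\bigr)^{1/2}$. First I would fix $j\in\{0,\dots,2N-1\}$, put $\theta_{-L}:=0$ and $\theta_L:=j\pi$, and restrict attention to the subset of $\T_N^{2L-1}$ on which all of $\lambda(\theta_{i-1},\theta_i,g_i,E)$, $i=-L+1,\dots,L$, exist (off this set both integrands vanish by the usual convention). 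By Lemma~\ref{l.cis11} and the computation in the proof of Lemma~\ref{l.jacobidet}, on the $i$-th cell $[i-1,i]$ the reconstructed coupling constant is $\omega_i=\lambda(\theta_{i-1},\theta_i,g_i,E)$, and, translating that cell to $[-1,0]$, the restriction of $u_{-L}$ (used for $i\le 0$, anchored at the left endpoint) equals $\pm R_{-L}(i-1)\,u_-(\cdot,\theta_{i-1},\theta_i,g_i,E)$ while the restriction of $u_L$ (used for $i\ge 1$, anchored at the right endpoint) equals $\pm R_L(i)\,u_+(\cdot,\theta_{i-1},\theta_i,g_i,E)$, simply by matching initial data at the endpoint at which $u_-$ (resp.\ $u_+$) is normalized to amplitude $1$. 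Reading off amplitudes and integrals from these relations gives, for $i\ge 1$, the identities $R_+(-1,\theta_{i-1},\theta_i,g_i,E)=R_L(i-1)/R_L(i)$, $\int f_i u_L^2=R_L^2(i)\int f u_+^2(\cdot,\theta_{i-1},\theta_i,g_i,E)$ and $\int_{i-1}^{i}u_L^2=R_L^2(i)\int_{-1}^0 u_+^2(\cdot,\theta_{i-1},\theta_i,g_i,E)$, and, for $i\le 0$, $R_-(0,\theta_{i-1},\theta_i,g_i,E)=R_{-L}(i)/R_{-L}(i-1)$ and $\int f_i u_{-L}^2=R_{-L}^2(i-1)\int f u_-^2(\cdot,\theta_{i-1},\theta_i,g_i,E)$.

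Next I would substitute these into \eqref{e.fixenergy}, so that the amplitudes $R_{\pm L}$ cancel within each cell. On the left the cell-$i$ factor ($-L+1\le i\le 0$, with $\theta_{-L}=0$) becomes $\frac{R_-^2(0,\theta_{i-1},\theta_i,g_i,E)}{\int f u_-^2(\cdot,\theta_{i-1},\theta_i,g_i,E)}\,r(\lambda(\theta_{i-1},\theta_i,g_i,E))$, which is precisely the integral kernel of $\tilde T_0(g_i,E)$ at $(\theta_i,\theta_{i-1})$, and for $i=-L+1$ it is $(\Phi(g_{-L+1},E))(\theta_{-L+1})$. On the right the cell-$i$ factor ($1\le i\le L$, with $\theta_L=j\pi$; it carries an extra $R_L^2(i)$ from \eqref{e.fixenergy} for $i\le L-1$ and none for $i=L$, but $R_L(L)=1$) reduces to $\frac{r(\lambda(\theta_{i-1},\theta_i,g_i,E))}{\int f u_+^2(\cdot,\theta_{i-1},\theta_i,g_i,E)}$, and the cross term becomes $R_L(n)R_L(1)\,A_n^{1/2}A_1^{1/2}$ with $A_m:=\int_{-1}^0 u_+^2(\cdot,\theta_{m-1},\theta_m,g_m,E)$. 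Using $R_+(-1,\theta_{i-1},\theta_i,g_i,E)=R_L(i-1)/R_L(i)$ and $R_L(L)=1$, the product over $i=1,\dots,L$ of the right cell factors times the cross term equals
\[
R_L(n)R_L(1)\,A_n^{1/2}A_1^{1/2}\prod_{i=1}^{L}\frac{r(\lambda(\theta_{i-1},\theta_i,g_i,E))}{\int f u_+^2(\cdot,\theta_{i-1},\theta_i,g_i,E)}
=\frac{A_n^{1/2}A_1^{1/2}}{R_+(-1,\theta_0,\theta_1,g_1,E)}\prod_{i=1}^{n}k_1(g_i,E)(\theta_{i-1},\theta_i)\prod_{i=n+1}^{L}k_0(g_i,E)(\theta_{i-1},\theta_i),
\]
where $k_1(g,E)$ and $k_0(g,E)$ are the kernels of $T_1(g,E)$ and $T_0(g,E)$; this is just the identities $k_0=R_+(-1,\cdot,\cdot,g,E)\,k_1$, $\prod_{i=1}^{n}R_+(-1,\theta_{i-1},\theta_i,g_i,E)=R_L(0)/R_L(n)$, $\prod_{i=1}^{L}R_+^2(-1,\theta_{i-1},\theta_i,g_i,E)=R_L^2(0)$ and $R_L(0)=R_L(1)R_+(-1,\theta_0,\theta_1,g_1,E)$.

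It then remains to prove $A_n^{1/2}A_1^{1/2}\le C\,R_+(-1,\theta_0,\theta_1,g_1,E)$ with $C=C(E_{max})$. For this I would invoke the standard a priori bound on the logarithmic derivative of the Pr\"ufer amplitude recalled in Appendix~\ref{s.appB}: since $u_+(\cdot,\beta,\alpha,g,E)$ has Pr\"ufer amplitude normalized to $1$ at $0$, equal to $R_+(-1,\beta,\alpha,g,E)$ at $-1$, and with logarithmic derivative bounded in modulus by $K:=\tfrac12(1+\|W_0\|_\infty+M\|f\|_\infty+E_{max})$ on $[-1,0]$ (using $\|g\|_\infty\le\|W_0\|_\infty$, $|\lambda|\le M$ and $|E|\le E_{max}$), one gets both $A_m\le e^{2K}$ and $A_m\le e^{2K}R_+^2(-1,\theta_{m-1},\theta_m,g_m,E)$; applying the first bound to cell $n$ and the second to cell $1$ gives $A_n^{1/2}A_1^{1/2}\le e^{2K}R_+(-1,\theta_0,\theta_1,g_1,E)$ (and when $n=1$, where the two cells coincide, one uses $A_1\le e^{2K}\min\{1,R_+^2\}\le e^{2K}R_+$ directly). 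With $C:=e^{2K}$, the integrand of \eqref{e.fixenergy} is then bounded by the product of the left cell factors, $C$, and $\prod_{i=1}^{n}k_1(g_i,E)(\theta_{i-1},\theta_i)\prod_{i=n+1}^{L}k_0(g_i,E)(\theta_{i-1},\theta_i)$. Integrating out $\theta_{-L+1},\dots,\theta_{-1}$ turns the left cell factors into $\bigl(\tilde T_0(g_0,E)\cdots\tilde T_0(g_{-L+2},E)\Phi(g_{-L+1},E)\bigr)(\theta_0)$; integrating out $\theta_1,\dots,\theta_{L-1}$ turns $\prod_{i=1}^{n}k_1\prod_{i=n+1}^{L}k_0$ into $\bigl(T_1(g_1,E)\cdots T_1(g_n,E)T_0(g_{n+1},E)\cdots T_0(g_{L-1},E)\Psi_j(g_L,E)\bigr)(\theta_0)$, using $(\Psi_j(g_L,E))(\theta_{L-1})=k_0(g_L,E)(\theta_{L-1},j\pi)$; the remaining $\theta_0$-integral is the inner product on $L^2(\T_N)$, and summing over $j$ gives \eqref{e.intopformula}.

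The routine but somewhat lengthy part will be the cell-by-cell bookkeeping above --- keeping straight that the left cells are carried by $u_{-L}$ anchored on the left and the right cells by $u_L$ anchored on the right, and that it is the telescoping of the $R_{\pm L}$'s that converts the raw integrand into a product of the kernels of $\tilde T_0$, $T_0$ and $T_1$ together with $\Phi$ and $\Psi_j$. The only genuinely quantitative input is the elementary unit-interval amplitude estimate. The one conceptual point to get right is why there must be exactly $n$ factors $T_1$, placed at the cells $1,\dots,n$ lying between and including the two cells $[0,1]$ and $[n-1,n]$ that support the cross term: the cross term is comparable to $R_L(0)R_L(n)$, not to the $R_L(0)^2$ that a pure product of $T_0$'s would supply, and the ``missing'' amplitude factor $R_L(0)/R_L(n)=\prod_{i=1}^{n}R_+(-1,\theta_{i-1},\theta_i,g_i,E)$ is exactly one power of $R_+$ per cell from $1$ to $n$.
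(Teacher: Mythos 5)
Your proposal is correct and follows essentially the same route as the paper: rewrite the integrand of \eqref{e.fixenergy} cell by cell via the Pr\"ufer scaling relations between $u_{\pm L}, R_{\pm L}$ and $u_{\pm}, R_{\pm}$, telescope the amplitudes so that exactly the cells $1,\dots,n$ carry $T_1$-kernels, and control the cross term $\bigl(\int_{n-1}^n u_L^2\bigr)^{1/2}\bigl(\int_0^1 u_L^2\bigr)^{1/2}$ by the a priori amplitude bound of Lemma~\ref{lem:solest}, absorbing it into $C(E_{max})$. The only difference is cosmetic bookkeeping: the paper applies the a priori bound first (giving $\int_{n-1}^n u_L^2\le CR_L^2(n)$, $\int_0^1 u_L^2\le CR_L^2(0)$) and then regroups, whereas you regroup first and bound $A_n^{1/2}A_1^{1/2}\le C\,R_+(-1,\theta_0,\theta_1,g_1,E)$ afterwards, which is the same estimate.
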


\begin{proof}
It follows from the a priori bounds in Lemma~\ref{lem:solest} that there exists a constant $C=C(E_{max})$ such that
$$
\int_{n-1}^n u_L^2 \le C R_L^2(n), \quad \int_0^1 u_L^2 \le C R_L^2(0)
$$
uniformly in $E\in [-E_{max},E_{max}]$. After using these bounds in the integrand on the right hand side of \eqref{e.fixenergy} we rearrange the terms in the integrand as
\begin{eqnarray*}
\lefteqn{ R_L(n) R_L(0) \frac{R_{-L}^2(-L+1) \cdots R_{-L}^2(0) R_L^2(1) \cdots R_L^2(L-1)}{ \int f_{-L+1}  u_{-L}^2 \cdots \int f_0 u_{-L}^2 \int f_1 u_L^2 \cdots \int f_L u_L^2}} \\
& = & \left( \prod_{i=-L+1}^0 \frac{R_{-L}^2(i)}{\int f_i u_{-L}^2} \right) \left( \prod_{i=1}^n \frac{R_L(i-1)R_L(i)}{\int f_i u_L^2} \right) \left( \prod_{i=n+1}^L \frac{R_L^2(i-1)}{\int f_i u_L^2} \right).\nonumber
\end{eqnarray*}
Taking into account the scaling properties of Pr\"ufer amplitudes, we get the following relations between $u_{\pm L}$, $R_{\pm L}$ and $u_{\pm}$, $R_{\pm}$:
$$
\frac{R_{-L}^2(i)}{\int f_i u_{-L}^2} = \frac{R_-^2(0,\theta_{i-1},\theta_i,g_i,E)}{\int f u_-^2(\cdot,\theta_{i-1}, \theta_i,g_i,E)},
$$
$$
\frac{R_L^2(i-1)}{\int f_i u_L^2} = \frac{R_+^2(-1, \theta_{i-1}, \theta_i, g_i,E)}{\int f u_+^2(\cdot, \theta_{i-1}, \theta_i, g_i, E)},
$$
and
\begin{eqnarray*}
\frac{R_L(i-1) R_L(i)}{\int f_i u_L^2} & = & \frac{R_L^2(i-1)}{\int f_i u_L^2} \cdot \frac{R_L(i)}{R_L(i-1)} \\
& = & \frac{R_+^2(-1, \theta_{i-1}, \theta_i, g_i, E)}{\int f u_+^2 (\cdot, \theta_{i-1}, \theta_i, g_i, E)} \cdot \frac{1}{R_+(-1,\theta_{i-1}, \theta_i, g_i, E)} \\
& = & \frac{R_+(-1, \theta_{i-1}, \theta_i, g_i, E)}{\int f u_+^2(\cdot, \theta_{i-1}, \theta_i, g_i, E)}.
\end{eqnarray*}
Plugging all this into \eqref{e.fixenergy} and using the definitions of the integral operators $T_0$, $\tilde{T}_0$, $T_1$ as well as the functions $\Psi_j$ and $\Phi$, we obtain \eqref{e.intopformula}.
\end{proof}

\subsection{Proof of Proposition~\ref{l.finitevolkeylemma} and Theorem~\ref{main}} \label{ss.endofproof}

We are now in a position to describe how our main result Theorem~\ref{main} follows from norm bounds for the operators $T_1$, $T_0$ and $\tilde{T}_0$, which we will establish in the remaining sections of this paper.

As was explained in Section~\ref{ss:fvcorrelators}, it suffices to prove Proposition~\ref{l.finitevolkeylemma}. By Lemma~\ref{l.aftercov} it suffices to establish a bound
\begin{equation} \label{e.rho1nEbound}
\rho_L(1,n,E) \le Ce^{-\eta n}
\end{equation}
with constants $C<\infty$ and $\eta>0$ which are uniform in $E\in [-E_{max},E_{max}]$. For this we will use the integral formula \eqref{e.intopformula}.

Denote the norm of a linear operator $T$ from $L^p(\T_N)$ to
$L^q(\T_N)$ by $\| T \|_{p,q}$.

By Lemmas~\ref{l.T011bound} and \ref{l.T012bound} we have $\|T_0(g,E)\|_{1,1}=1$ as well as $\|T_0(g,E)\|_{1,2} \le C$ and $\|\Psi_j(g,E)\|_1 \le C$ uniformly in $E\in [-E_{max},E_{max}]$, $\|g\|_{\infty} \le \|W\|_{\infty}$ and $j=0,\ldots,2N-1$. Thus
$$
\| T_0(g_{n+1},E) \cdots T_0(g_{L-1},E) \Psi_j(g_L,E)\|_2 \le C
$$
uniformly in $E\in [-E_{max},E_{max}]$, $L\in \N$ and $j=0,\ldots,2N-1$. Similarly, also uniformly,
$$
\|\tilde{T}_0(g_0,E) \cdots \tilde{T}(g_{-L+2},E) \Phi(g_{-L+1},E)\|_2 \le C.
$$
This yields, by \eqref{e.intopformula} and Cauchy-Schwarz, that there is $C=C(E_{max},W_0,N)$ such that
$$
\rho_L(1,n,E) \le C \prod_{i=1}^n \|T_1(g_i,E)\|_{2,2}.
$$
In Section~\ref{s.T122strongbound} we will show that $\|T_1(g,E)\|_{2,2} < 1$ for every $g\in L^{\infty}(-1,0)$ and $E\in \R$. Finally, we establish in Section~\ref{s.T122continuity} that $\|T_1(g,E)\|_{2,2} = \|T_1(g-E,0)\|_{2,2}$ is continuous in $(g,E) \in L^{\infty}(-1,0) \times \R$. By assumption \eqref{relcomp} it is guaranteed that $\{g_i:i\in \Z\}$ is relatively compact in $L^{\infty}(-1,0)$ and thus $\{g_i:i\in \Z\} \times [-E_{max},E_{max}]$ is relatively compact in $L^{\infty}(-1,0)\times \R$. Thus
$$
\|T_1(g_i,E)\|_{2,2} \le \gamma < 1
$$
uniformly in $i$ and $E\in [-E_{max},E_{max}]$. This proves \eqref{e.rho1nEbound} with $\eta = \ln(1/\gamma)$.

\section{Elementary Results for the Integral Operators}

In this section we consider the integral operators introduced in Section~\ref{ss.intop} and establish several elementary results for them.

We begin with $T_0$ and $\tilde T_0$.

\begin{lemma} \label{l.T011bound}
We have
$$
\|T_0(g,E)\|_{1,1} = \|\tilde T_0(g,E)\|_{1,1} = 1.
$$
\end{lemma}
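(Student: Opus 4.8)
The plan is to use that $T_0(g,E)$ and $\tilde T_0(g,E)$ are integral operators with continuous non-negative kernels, so that their $L^1\to L^1$ norm is the supremum, over the variable carrying the input function, of the total mass of the kernel in the output variable. Writing $(T_0(g,E)F)(\beta)=\int_{\T_N}K(\beta,\alpha)F(\alpha)\,d\alpha$ with $K(\beta,\alpha)=R_+^2(-1,\beta,\alpha,g,E)\,r(\lambda(\beta,\alpha,g,E))/\int f\,u_+^2(\cdot,\beta,\alpha,g,E)\ge 0$ (interpreted as $0$ where $\lambda(\beta,\alpha,g,E)$ does not exist), Tonelli's theorem gives $\|T_0(g,E)F\|_1\le\int_{\T_N}\bigl(\int_{\T_N}K(\beta,\alpha)\,d\beta\bigr)|F(\alpha)|\,d\alpha$; hence $\|T_0(g,E)\|_{1,1}\le 1$ will follow once I show that $\int_{\T_N}K(\beta,\alpha)\,d\beta=1$ for every $\alpha$, and applying $T_0(g,E)$ to the constant function $1$ then gives the matching lower bound. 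For $\tilde T_0(g,E)$, whose kernel is $\tilde K(\beta,\alpha)=R_-^2(0,\beta,\alpha,g,E)\,r(\lambda(\beta,\alpha,g,E))/\int f\,u_-^2(\cdot,\beta,\alpha,g,E)$ acting by integration in $\beta$, the statement reduces analogously to $\int_{\T_N}\tilde K(\beta,\alpha)\,d\alpha=1$ for every $\beta$.

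To compute $\int_{\T_N}K(\beta,\alpha)\,d\beta$ I would, for fixed $\alpha,g,E$, change variables from $\beta$ to the coupling constant $\lambda$ via $\beta=\varphi_0(-1,\alpha,\lambda,g,E)$; the integrand vanishes unless $\lambda=\lambda(\beta,\alpha,g,E)$ lies in $\mathrm{supp}\,r\subseteq[-M,M]$, so this parametrizes the whole support of $K(\cdot,\alpha)$. By the Feynman--Hellmann/Wronskian computation of Appendix~\ref{s.appB} (cf.\ Lemma~\ref{lem:lamderphi}),
$$
\frac{\partial}{\partial\lambda}\varphi_0(-1,\alpha,\lambda,g,E)=\frac{\int f\,u_0^2(\cdot,\alpha,\lambda,g,E)}{R_0^2(-1,\alpha,\lambda,g,E)},
$$
which, via the identifications in \eqref{upmdef}--\eqref{Rpmdef}, is exactly the reciprocal of the prefactor in $K(\beta,\alpha)$; after the substitution everything cancels and $\int_{\T_N}K(\beta,\alpha)\,d\beta=\int_{-M}^{M}r(\lambda)\,d\lambda=1$. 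The substitution is legitimate because $\lambda\mapsto\varphi_0(-1,\alpha,\lambda,g,E)$ is a diffeomorphism onto its image: the displayed derivative is strictly positive, since $u_0$ is a nontrivial solution (so it vanishes only on a null set) and $f>0$ a.e.\ on $[a,b]$ by \eqref{singlesite2}, and by \eqref{e.shift1}--\eqref{e.shift2} (applicable as $\|g\|_\infty\le\|W_0\|_\infty$, $|E|\le E_{max}$) the image lies in an interval of length $<2\pi N$, hence injects into $\T_N$ --- the mechanism already exploited in the proof of Lemma~\ref{l.cis11}.

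The computation for $\tilde T_0(g,E)$ is the mirror image: freeze $\beta$ and substitute $\alpha=\varphi_{-1}(0,\beta,\lambda,g,E)$ (i.e.\ $\lambda=\lambda(\beta,\alpha,g,E)$); the analogous formula $\partial_\lambda\varphi_{-1}(0,\beta,\lambda,g,E)=-\int f\,u_{-1}^2(\cdot,\beta,\lambda,g,E)/R_{-1}^2(0,\beta,\lambda,g,E)$ has nonzero (negative) value, gives a monotone map whose image again injects into $\T_N$, and cancels the prefactor of $\tilde K$, so $\int_{\T_N}\tilde K(\beta,\alpha)\,d\alpha=\int r=1$ and $\|\tilde T_0(g,E)\|_{1,1}=1$.

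The main obstacle I anticipate is purely the care needed in the change of variables on the circle $\T_N$: one must verify both that the relevant Pr\"ufer-phase map is a genuine diffeomorphism onto its image (which rests on the $\lambda$-derivative formula and on the strict positivity $\int f\,u^2>0$) and that the image does not wrap around the circle (which is exactly where the choice of $N$ in \eqref{e.shift2} is used, reusing the argument from Lemma~\ref{l.cis11}). Everything else --- the $L^1\to L^1$ norm formula for non-negative-kernel operators and the algebraic cancellation of the Jacobian against the kernel prefactor --- is routine.
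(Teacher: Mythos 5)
Your proof is correct and follows essentially the same route as the paper: Tonelli to reduce to the column mass $\int_{\T_N}K(\beta,\alpha)\,d\beta$, the change of variables $\beta\mapsto\lambda$ with Jacobian given by Lemma~\ref{lem:lamderphi}(b) cancelling the kernel prefactor exactly, $\int r=1$, and the observation that equality holds for non-negative inputs (your test on the constant function $1$ is the same step). Your extra care about injectivity of the phase map into $\T_N$ via the choice of $N$ is correct and is implicitly what the paper relies on through the uniqueness of $\lambda(\beta,\alpha,g,E)$ established after Lemma~\ref{l.cis11}.
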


\begin{proof}
As $g$, $E$ are fixed here, we will suppress them in this proof.
Suppose $F \in L^1(\T_N)$. Then,
\begin{eqnarray} \label{e.t0bound}
\|T_0 F\|_1 & \le & \int_{\T_N} \int_{\T_N} \frac{R_+^2(-1,\beta,\alpha)}{\int f
u_+^2(\cdot,\beta,\alpha)} r(\lambda(\beta,\alpha)) |F(\alpha)|
\, d\alpha \, d\beta \\
& = & \int_{\T_N} \int_{\T_N} \frac{R_+^2(-1,\beta,\alpha)}{\int f
u_+^2(\cdot,\beta,\alpha)} r(\lambda(\beta,\alpha))
\, d\beta \,  |F(\alpha)| \, d\alpha. \nonumber
\end{eqnarray}
For fixed $\alpha$, $\lambda(\beta,\alpha)$ is strictly increasing in $\beta$ and the inverse function satisfies, see Lemma~\ref{lem:lamderphi}(b),
\begin{equation} \label{e.betalambda}
\frac{\partial \beta}{\partial \lambda} = - R_+^{-2}(-1,\beta,\alpha) \int_0^{-1} f u_+^2(\cdot,\beta,\alpha) =R_+^{-2}(-1,\beta,\alpha) \int_{-1}^0 f u_+^2(\cdot,\beta,\alpha).
\end{equation}
Thus we can change variables and find that the right hand side of \eqref{e.t0bound} is equal to
$$
\int_{\T_N} \int_{\R} r(\lambda) \, d\lambda \, |F(\alpha)| \, d\alpha = \|r\|_1 \|F\|_1 = \|F\|_1,
$$
where we also used the fact
that $r$ is the density of a probability distribution. This shows that $\|T_0\|_{1,1} \le 1$. Since the first
step in \eqref{e.t0bound} becomes an identity when $F \ge 0$, we get
$\|T_0\|_{1,1} = 1$.

Using
\begin{equation} \label{e.alphalambda}
\frac{d\alpha}{d\lambda} = - R_-^{-2}(0,\beta,\alpha) \int_{-1}^0 f
u_-^2(\cdot,\beta,\alpha).
\end{equation}
instead of \eqref{e.betalambda}, the proof of $\|\tilde T_0\|_{1,1} = 1$
is completely analogous.
\end{proof}

\begin{lemma} \label{l.T012bound}
We have
$$
\|T_0(g,E)\|_{1,2} \le C <\infty
$$
and
$$
\|\tilde T_0(g,E)\|_{1,2} \le C <\infty
$$
uniformly in $E\in [-E_{max},E_{max}]$ and $\|g\|_{\infty} \le \|W_0\|_{\infty}$.

Denoting by $\|\cdot\|_1$ the $L^1$-norm on $\T_N$, we also have
$$
\|\Psi_j(g,E)\|_1 \le C<\infty
$$
and
$$
\|\Phi(g,E)\|_1 \le C<\infty
$$
uniformly in $E\in [-E_{max},E_{max}]$, $\|g\|_{\infty} \le \|W_0\|_{\infty}$ and $j=0,\ldots, 2N-1$.
\end{lemma}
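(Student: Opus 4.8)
The plan is to bound each of the three quantities via its (nonnegative) integral kernel, using Minkowski's integral inequality, and to reduce everything to the single uniform lower bound $\int_{-1}^0 f\, u_\pm^2 \ge c_0 > 0$; the essential tool throughout is the change of variables $\beta \leftrightarrow \lambda$ (respectively $\alpha \leftrightarrow \lambda$) already exploited in the proof of Lemma~\ref{l.T011bound}, whose Jacobian is recorded in \eqref{e.betalambda} and \eqref{e.alphalambda}.

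For $T_0(g,E)$, whose kernel is $K(\beta,\alpha) = \frac{R_+^2(-1,\beta,\alpha)}{\int f\, u_+^2(\cdot,\beta,\alpha)}\, r(\lambda(\beta,\alpha)) \ge 0$, Minkowski's integral inequality gives $\|T_0(g,E)F\|_2 \le \int \|K(\cdot,\alpha)\|_{L^2(d\beta)}\,|F(\alpha)|\,d\alpha \le \big( \sup_\alpha \|K(\cdot,\alpha)\|_{L^2(d\beta)} \big)\,\|F\|_1$, so it suffices to bound $\|K(\cdot,\alpha)\|_{L^2(d\beta)}$ uniformly. Changing the $\beta$-integration into an integration over $\lambda \in [-M,M]$ by means of the Jacobian $d\beta/d\lambda = R_+^{-2}(-1,\beta,\alpha)\int_{-1}^0 f\, u_+^2$ from \eqref{e.betalambda}, and then using \eqref{upmrel} in the form $\int f\, u_+^2 = R_+^2(-1,\beta,\alpha)\int f\, u_-^2$, one obtains $\|K(\cdot,\alpha)\|_{L^2(d\beta)}^2 = \int_{-M}^{M} \frac{r(\lambda)^2}{\int f\, u_-^2(\cdot,\beta,\alpha)}\,d\lambda \le \frac{\|r\|_\infty}{\inf \int f\, u_-^2}$, since $r^2 \le \|r\|_\infty\, r$ and $\|r\|_1 = 1$. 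The estimate for $\tilde T_0(g,E)$ is the mirror image, obtained by interchanging $\beta$ and $\alpha$: one uses the Jacobian from \eqref{e.alphalambda} and \eqref{upmrel} in the form $\int f\, u_-^2 = R_-^2(0,\beta,\alpha)\int f\, u_+^2$, which reduces the bound to $\frac{\|r\|_\infty}{\inf \int f\, u_+^2}$. For the functions $\Psi_j$ and $\Phi$ no estimate is needed beyond the same substitution: in $(\Psi_j(g,E))(\theta)$ the variable $\theta$ plays the role of $\beta$ while $\alpha = j\pi$ is fixed, so changing $\theta$ into $\lambda$ collapses $\|\Psi_j(g,E)\|_1$ to $\int_{-M}^M r(\lambda)\,d\lambda \le 1$, and symmetrically $\|\Phi(g,E)\|_1 \le 1$ via \eqref{e.alphalambda}. (These are $\le 1$ rather than $= 1$ because the relevant range of $\lambda$ need not exhaust $\mathrm{supp}\, r$.)

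It remains to prove the uniform lower bound $\int_{-1}^0 f\, u_\pm^2 \ge c_0 > 0$, where $u_\pm$ is a solution of $-u'' + gu + \lambda f u = Eu$ on $[-1,0]$ with Pr\"ufer amplitude equal to $1$ at one endpoint of $[-1,0]$, and the bound must be uniform over the phase arguments, over $\|g\|_\infty \le \|W_0\|_\infty$, $|\lambda| \le M$, and $|E| \le E_{max}$. Since $f \ge c\, \chi_I$ on a fixed nontrivial interval $I \subseteq [-1,0]$, it suffices to bound $\int_I u_\pm^2$ below. Writing $u_\pm = R_\pm \sin\varphi_\pm$, the amplitude equation $(\ln R_\pm)' = \frac{1}{2}(1 + g + \lambda f - E)\sin 2\varphi_\pm$ has coefficient bounded by $C_4 := 1 + \|W_0\|_\infty + M\|f\|_\infty + E_{max}$, so Gronwall together with the normalization gives $e^{-C_4/2} \le R_\pm(x) \le e^{C_4/2}$ throughout $[-1,0]$; hence it is enough to show $\int_I \sin^2\varphi_\pm \ge c_1 > 0$. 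I would obtain this by a compactness argument: if a sequence of admissible solutions had $\int_I u^2 \to 0$, then since $u'' = (g + \lambda f - E)u$ is uniformly bounded in $L^\infty(I)$, the solutions would be precompact in $C^1(I)$ with a limit vanishing identically on $I$, in particular with Pr\"ufer amplitude tending to $0$ at every point of $I$, contradicting the Gronwall lower bound $R_\pm \ge e^{-C_4/2}$. Alternatively, one argues directly from the phase equation \eqref{e.phaseDE} that $\varphi_\pm$ cannot spend a definite fraction of $I$ near the zeros of $\sin$, because there $\varphi_\pm' = 1$; this is exactly the kind of a priori ODE estimate collected in Appendix~\ref{s.appB} (cf.\ Lemma~\ref{lem:solest}).

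The one genuinely substantial point is this last lower bound — equivalently, preventing the Pr\"ufer amplitude of the transfer solution from being concentrated away from the support of $f$; the remainder is bookkeeping layered on the $L^1$ change-of-variables identities already established for $T_0$ and $\tilde T_0$.
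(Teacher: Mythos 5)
Your proof is correct, and it rests on the same essential ingredients as the paper's, but the bookkeeping is organized differently. The paper's argument is blunter: it invokes Lemma~\ref{lem:solest} for the uniform upper bound $R_+^2(-1,\beta,\alpha)\le C_1$ and Lemma~\ref{lem:L2b} (together with \eqref{singlesite1}) for the uniform lower bound $\int f u_+^2\ge C_2$, concludes that the kernel of $T_0$ is bounded in $L^\infty$ by $C_1\|r\|_\infty/C_2$, and then simply multiplies by the finite volume $2\pi N$ of $\T_N$; the same pointwise kernel bound handles $\Psi_j$ and $\Phi$. You instead run the change of variables $\beta\leftrightarrow\lambda$ (resp.\ $\alpha\leftrightarrow\lambda$) from the proof of Lemma~\ref{l.T011bound} through the $(1,2)$-estimate via Minkowski's inequality, which collapses the $R_+$-factors by \eqref{upmrel} and reduces everything to the single lower bound $\int f u_\pm^2\ge c_0$; this yields somewhat cleaner constants and, for $\Psi_j$ and $\Phi$, the sharp bound $\|\cdot\|_1\le\|r\|_1=1$ with no a priori estimates at all. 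The one substantive input, the uniform lower bound on $\int_I u_\pm^2$, is exactly the content of Lemma~\ref{lem:L2b} (not Lemma~\ref{lem:solest}, which only gives the two-sided Gronwall bound on $R_\pm$); the paper cites it, while you re-derive it by a compactness/Arzel\`a--Ascoli argument, which is correct since the uniform $L^\infty$ bound on $u''$ gives precompactness of $(u,u')$ in $C(I)^2$ and a limit contradicting the Gronwall lower bound on $R_\pm$. So nothing is missing; your route just trades a citation for a short self-contained argument and a crude volume bound for an exact change-of-variables identity.
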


\begin{proof}
Lemmas~\ref{lem:solest} and \ref{lem:L2b} provide bounds $C_1<\infty$ and $C_2>0$ such that
\begin{equation} \label{auxbound1}
R_+^2(-1,\beta,\alpha) = R_0^2(-1,\alpha,\lambda(\beta,\alpha)) \le C_1 \end{equation}
and
\begin{equation} \label{auxbound2}
\int f u_+^2(\cdot,\beta,\alpha) = \int f u_0^2(\cdot,\alpha,\lambda(\beta,\alpha)) \ge C_2
\end{equation}
uniformly in $E\in [-E_{max},E_{max}]$, $\|g\|_{\infty} \le \|W_0\|_{\infty}$ and $\alpha, \beta$ such that $\lambda(\beta,\alpha) \in \mbox{supp}\,r$. In \eqref{auxbound2} we have also exploited the assumption \eqref{singlesite1} on the single site potential~$f$.

Thus we get for $F\in L^1(\T_N)$ that
\begin{eqnarray*}
\|T_0 F\|_2^2 & = & \int_{\T_N} \left| \int_{\T_N} \frac{R_+^2(-1,\beta,\alpha)}{\int f u_+^2(\cdot,\beta,\alpha)} r(\lambda(\beta,\alpha)) F(\alpha)\,d\alpha \right|^2\,d\beta \\
& \le & 2\pi N (C_1/C_2)^2 \|r\|_{\infty}^2 \|F\|_1^2,
\end{eqnarray*}
resulting in the required norm bound for $T_0$. The bound for $\tilde{T}_0$ is found similarly.

From (\ref{auxbound1}) and (\ref{auxbound2}) we also get the bounds for $\Psi_j$ and $\Phi$, first in the $L^{\infty}$-norm and then in the $L^1$-norm since $\T_N$ has finite volume.
\end{proof}

Let us now turn to $T_1$. For $\alpha,\beta \in \T_N$, we write
\begin{equation}\label{t1kerdef}
T_1(\beta,\alpha) = \begin{cases} \frac{R_+(-1,\beta,\alpha)
r(\lambda(\beta,\alpha))}{\int f(x) u_+^2(x,\beta,\alpha) \, dx} &
\text{ if } \lambda(\beta,\alpha) \text{ exists,} \\
0 & \text{ otherwise}
\end{cases}
\end{equation}
for its integral kernel.

\begin{prop} \label{proposition1}
We have
\begin{align}
\label{t1kerper} & T_1(\beta+ \pi , \alpha + \pi) = T_1(\beta ,
\alpha) \text{ for all
} \beta,\alpha \in \T_N, \\
\label{t1kercont} & T_1(\cdot,\cdot) \text{ is continuous on }
\T_N^2.
\end{align}
\end{prop}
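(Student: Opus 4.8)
The plan is to prove both assertions directly from the definition \eqref{t1kerdef} of the kernel, reducing everything to properties of the solutions $u_0(\cdot,\theta,\lambda,g,E)$ of the ODE \eqref{geve} and of the coupling constant $\lambda(\beta,\alpha,g,E)$. Throughout, $g$ and $E$ are fixed, so I suppress them.

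For the periodicity \eqref{t1kerper}, the key observation — already noted in the excerpt — is that $\varphi_0(\cdot,\theta+\pi,\lambda,g,E) = \varphi_0(\cdot,\theta,\lambda,g,E)+\pi$, which follows from the uniqueness of solutions of the Pr\"ufer phase equation \eqref{e.phaseDE} together with the fact that $u_0(\cdot,\theta+\pi,\lambda,g,E) = -u_0(\cdot,\theta,\lambda,g,E)$ (since the initial data $(\sin(\theta+\pi),\cos(\theta+\pi)) = -(\sin\theta,\cos\theta)$ and \eqref{geve} is linear). Consequently, first, $\lambda(\beta+\pi,\alpha+\pi) = \lambda(\beta,\alpha)$ as soon as either exists, because the defining relation $\varphi_0(-1,\alpha,\lambda) = \beta$ in $\T_N$ is unchanged under $\alpha\mapsto\alpha+\pi$, $\beta\mapsto\beta+\pi$; second, the three building blocks of the kernel transform cleanly: $u_+(\cdot,\beta+\pi,\alpha+\pi) = u_0(\cdot,\alpha+\pi,\lambda) = -u_0(\cdot,\alpha,\lambda) = -u_+(\cdot,\beta,\alpha)$, so $u_+^2$ and hence $\int f u_+^2$ are invariant, while $R_+(-1,\beta+\pi,\alpha+\pi) = R_0(-1,\alpha+\pi,\lambda) = R_0(-1,\alpha,\lambda) = R_+(-1,\beta,\alpha)$ since the amplitude is insensitive to the sign of the solution. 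Plugging these into \eqref{t1kerdef} (and noting the "otherwise" case is preserved since $\lambda$ exists for one pair iff it exists for the other) gives \eqref{t1kerper}.

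For the continuity \eqref{t1kercont}, the work is to show that each ingredient depends continuously on $(\beta,\alpha)\in\T_N^2$, and then handle the boundary between the region where $\lambda$ exists and where it does not. The map $(\alpha,\lambda)\mapsto u_0(\cdot,\alpha,\lambda,g,E)$ is continuous into $C^1[-1,0]$ by standard continuous dependence of ODE solutions on initial data and parameters (Appendix~\ref{s.appB}); hence $(\alpha,\lambda)\mapsto \varphi_0(-1,\alpha,\lambda)$ is continuous, and since $\varphi_0(-1,\alpha,\cdot)$ is strictly monotone (Lemma~\ref{lem:lamderphi}) with the relevant derivative \eqref{e.betalambda} bounded away from zero on the compact set where $\lambda\in\mathrm{supp}\,r$, the implicit function theorem gives continuity of $(\beta,\alpha)\mapsto\lambda(\beta,\alpha)$ on the open set where it lies in the interior of $\mathrm{supp}\,r$. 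Combined with continuity of $u_0$, $R_0$, of $\int f(\cdot)u_0^2$ in those arguments (and the uniform lower bound $\int f u_+^2 \ge C_2 > 0$ from \eqref{auxbound2}), and with continuity of the density $r$, the quotient in \eqref{t1kerdef} is continuous wherever $\lambda(\beta,\alpha)$ exists.

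The main obstacle is the transition region: as $(\beta,\alpha)$ approaches a point where $\lambda(\beta,\alpha)$ ceases to exist — i.e. where the solution of $\varphi_0(-1,\alpha,\lambda)=\beta$ leaves $[-M,M]$ — the kernel must tend to $0$ to match the value prescribed in the "otherwise" branch. The point is that $\lambda(\beta,\alpha)$ can only fail to exist, or approach non-existence, by having $\lambda$ approach the \emph{boundary} of $\mathrm{supp}\,r$ (by the choice of $M$ in \eqref{f.Mdef} and \eqref{e.shift2}, which guarantees that the phase can always be steered to any target in $\T_N$ using some real $\lambda$, so "non-existence in $[-M,M]$" means the required $\lambda$ has $|\lambda|\ge M$); since $r$ is continuous with $\mathrm{supp}\,r\subseteq[-M,M]$, we have $r(\lambda)\to 0$ there, and because the prefactor $R_+(-1,\beta,\alpha)/\int f u_+^2$ stays bounded (upper bound on $R_0^2$ from \eqref{auxbound1}, lower bound on $\int f u_+^2$ from \eqref{auxbound2}, both valid on $\{\lambda\in\mathrm{supp}\,r\}$), the kernel tends to $0$. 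This matches the "otherwise" value and yields continuity across the transition. Assembling the three regimes — interior of $\{\lambda\in\mathrm{supp}\,r\}$, its boundary, and the complement where the kernel is identically $0$ — gives \eqref{t1kercont}.
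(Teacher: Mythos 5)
Your proof is correct, and for the periodicity \eqref{t1kerper} it is identical in substance to the paper's: both rest on $u_0(\cdot,\alpha+\pi,\lambda)=-u_0(\cdot,\alpha,\lambda)$, hence $\varphi_0(\cdot,\alpha+\pi,\lambda)=\varphi_0(\cdot,\alpha,\lambda)+\pi$, and the resulting invariance of $\lambda(\beta,\alpha)$, of $R_+$, and of $u_+^2$ under the simultaneous shift. For the continuity \eqref{t1kercont} you use the same ingredients but package them as a boundary-limit analysis, whereas the paper organizes them via an open cover: it first shows that $D=\{(\beta,\alpha):\lambda(\beta,\alpha)\text{ exists}\}$ is open and that $\lambda(\cdot,\cdot)$ is continuous on $D$ by a pure monotonicity argument (no implicit function theorem and no derivative formula needed), then sets $A=\lambda^{-1}(\mathrm{supp}\,r)$, observes that the two branches of the definition of $T_1$ agree on the overlap $D\setminus A$ (where $r(\lambda(\beta,\alpha))=0$), and checks continuity separately on each member of the open cover $\{D,\,\T_N^2\setminus A\}$. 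This sidesteps the one slightly delicate point in your write-up, namely the assertion that non-existence of $\lambda(\beta,\alpha)$ can only occur through the required coupling leaving $[-M,M]$; what both arguments actually need, and what you correctly identify, is that $r$ is continuous with $\mathrm{supp}\,r\subseteq[-M,M]$, so $r(\lambda(\beta,\alpha))$ vanishes wherever the explicit formula could conflict with the zero branch, while the prefactor $R_+(-1,\beta,\alpha)/\int f u_+^2$ remains bounded by the a priori estimates \eqref{auxbound1}--\eqref{auxbound2}. With that understood, your argument goes through; the paper's version is merely a cleaner way of stitching the two regimes together.
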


\begin{proof}
Note that $\varphi(x,\alpha+\pi,\lambda) =
\varphi(x,\alpha,\lambda) + \pi$ for every $x$. This follows from
$u_0(x,\alpha+\pi,\lambda) = - u_0(x,\alpha,\lambda)$ for every
$x$, along with the initial condition $\phi(0,\alpha,\lambda) =
\alpha$.

To derive \eqref{t1kerper} from this, consider a pair
$(\beta,\alpha)$ such that $\lambda(\beta,\alpha)$ exists. Then
$\lambda(\beta+\pi,\alpha+\pi)$ exists, too, and is equal to
$\lambda(\beta,\alpha)$. Moreover, it then follows readily from
the definition that $T_1(\beta+\pi,\alpha+\pi) =
T_1(\beta,\alpha)$.



To show \eqref{t1kercont}, we will need the following:
\begin{equation}\label{dopen}
D := \{ (\beta,\alpha) \in \T_N^2 : \lambda(\beta,\alpha) \text{
exists} \} \text{ is open and $\lambda (\cdot,\cdot)$ is
continuous on } D.
\end{equation}
To see this, fix some $(\beta,\alpha) \in D$ and $\varepsilon >
0$. Keep $\beta$ initially fixed and increase $\alpha$ a bit.
Clearly, there will still be a corresponding $\lambda$ that sends
$\beta$ to $\alpha + \delta_1$, $\delta_1 > 0$. Choose $\delta_1$
small enough so that $\lambda(\beta,\alpha + \delta_1) \le
\lambda(\beta,\alpha) + \frac{\varepsilon}{2}$. Similarly, keeping
$\alpha + \delta_1$ fixed and decreasing $\beta$ a bit, we find
$\delta_2
> 0$ so that $\lambda(\beta - \delta_2 , \alpha + \delta_1) \le
\lambda (\beta,\alpha) + \varepsilon$. Similarly, we can choose
$\delta_3,\delta_4
> 0$ with $\lambda(\beta + \delta_4,\alpha - \delta_3) \ge
\lambda(\beta, \alpha) - \varepsilon$. It then follows, again by
the monotonicity properties, that the set $\{ (\beta + \delta ,
\alpha + \tilde \delta) : - \delta_2 \le \delta \le \delta_4 , \,
- \delta_3 \le \tilde \delta \le \delta_1 \}$ is contained in $D$
and $\lambda$ restricted to this set takes values in the interval
$[\lambda(\beta, \alpha) - \varepsilon , \lambda(\beta, \alpha) +
\varepsilon]$. The assertion \eqref{dopen} follows.

With the closed subset $A := \lambda^{-1} (\mathrm{supp} \, r)$ of
$D$, we can rewrite $T_1(\beta,\alpha)$ as
$$
T_1(\beta,\alpha) = \begin{cases} \frac{R_+(-1,\beta,\alpha)
r(\lambda(\beta,\alpha))}{\int f(x) u_+^2(x,\beta,\alpha) \, dx} &
\text{ if } (\beta,\alpha) \in D, \\
0 & \text{ if } (\beta,\alpha) \in \T_N^2 \setminus A.
\end{cases}
$$
Notice that this is well-defined. Since $\{ D , \T_N^2 \setminus
A\}$ is an open cover of $\T_N^2$, it suffices to check continuity
for each of these two open sets. Continuity on $\T_N^2 \setminus A$
is obvious. Continuity on $D$ follows by \eqref{dopen}, the
continuity of $\lambda$ in $(\beta,\alpha)$, and (via \eqref{upmdef} and \eqref{Rpmdef}) the joint continuity of
$R_0(-1,\alpha,\lambda)$ and $\int f u_0^2(\cdot,\alpha,\lambda)$ in
$(\alpha,\lambda)$. The latter is a consequence of the a priori bound provided in Lemma~\ref{lem:contdep}. This concludes the proof of \eqref{t1kercont}.
\end{proof}

\begin{lemma} \label{l.T122elembound}
We have
$$
\|T_1(g,E)\|_{2,2} \le 1.
$$
\end{lemma}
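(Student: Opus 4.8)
The plan is to exploit the fact that $T_0$ and $T_1$ have, up to a pointwise factor, the same kernel: comparing \eqref{t1kerdef} with the definition of $T_0$, we have
$$
T_1(\beta,\alpha) = R_+(-1,\beta,\alpha)^{-1}\, \frac{R_+^2(-1,\beta,\alpha)\, r(\lambda(\beta,\alpha))}{\int f u_+^2(\cdot,\beta,\alpha)} ,
$$
so the kernel of $T_1$ is $R_+(-1,\beta,\alpha)^{-1}$ times the kernel of $T_0$. The key structural input is the computation already carried out in the proof of Lemma~\ref{l.T011bound}: for fixed $\alpha$ the map $\beta\mapsto\lambda(\beta,\alpha)$ is a bijection onto an interval with Jacobian \eqref{e.betalambda}, namely $d\beta/d\lambda = R_+^{-2}(-1,\beta,\alpha)\int_{-1}^0 f u_+^2(\cdot,\beta,\alpha)$. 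This is exactly the statement that, for each fixed $\alpha$,
$$
\int_{\T_N} \frac{R_+^2(-1,\beta,\alpha)}{\int f u_+^2(\cdot,\beta,\alpha)}\, r(\lambda(\beta,\alpha))\, d\beta = \int_{\R} r(\lambda)\,d\lambda = 1 .
$$
Symmetrically, using the companion change of variables \eqref{e.alphalambda} with $\alpha\mapsto\lambda(\beta,\alpha)$ for fixed $\beta$, together with the amplitude identity \eqref{upmrel} relating $R_+^2(-1,\beta,\alpha)$, $R_-^2(0,\beta,\alpha)$ and $u_\pm$, one gets that for each fixed $\beta$,
$$
\int_{\T_N} \frac{R_+^2(-1,\beta,\alpha)}{\int f u_+^2(\cdot,\beta,\alpha)}\, r(\lambda(\beta,\alpha))\, d\alpha = 1 .
$$
So the kernel $K(\beta,\alpha):= T_1(\beta,\alpha)\,R_+(-1,\beta,\alpha)$ is the kernel of a doubly stochastic (Markov) integral operator: it is nonnegative, and its integral over either variable equals $1$.

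With this in hand I would estimate $\|T_1 F\|_2$ by Cauchy–Schwarz against the probability measure $K(\beta,\alpha)\,d\alpha$. Explicitly, for $F\in L^2(\T_N)$,
$$
|(T_1 F)(\beta)|^2 = \left| \int_{\T_N} R_+(-1,\beta,\alpha)^{-1} K(\beta,\alpha) F(\alpha)\, d\alpha \right|^2
\le \int_{\T_N} R_+(-1,\beta,\alpha)^{-2} |F(\alpha)|^2 K(\beta,\alpha)\, d\alpha
$$
using $\int K(\beta,\cdot)\,d\alpha = 1$. Now I would invoke the a priori bound from Lemma~\ref{lem:solest} (the same one used to derive \eqref{auxbound1}), which for $|E|\le E_{max}$ gives $R_+(-1,\beta,\alpha) = R_0(-1,\alpha,\lambda(\beta,\alpha))\le C_1^{1/2}$; more to the point, since $R_0(0)=1$ and the Prüfer amplitude cannot grow too fast over a unit interval under our bounded-potential hypotheses, one has $R_+(-1,\beta,\alpha)\ge c>0$, hence $R_+(-1,\beta,\alpha)^{-2}\le c^{-2}$. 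Integrating in $\beta$ and using $\int K(\cdot,\alpha)\,d\beta = 1$ yields $\|T_1 F\|_2^2 \le c^{-2}\|F\|_2^2$, giving a bound $\|T_1\|_{2,2}\le c^{-1}$, which is finite but not yet $\le 1$.

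To reach the sharp constant $1$ I would not throw away information so crudely. Instead, after Cauchy–Schwarz one should keep the weight and write, by Fubini,
$$
\|T_1 F\|_2^2 \le \int_{\T_N} |F(\alpha)|^2 \left( \int_{\T_N} R_+(-1,\beta,\alpha)^{-2} K(\beta,\alpha)\, d\beta \right) d\alpha ,
$$
so it suffices to show that the inner integral is $\le 1$ for every fixed $\alpha$. But $R_+(-1,\beta,\alpha)^{-2} K(\beta,\alpha) = R_+^{-2}(-1,\beta,\alpha)\cdot\frac{R_+^2(-1,\beta,\alpha)}{\int f u_+^2}r(\lambda) = \frac{r(\lambda(\beta,\alpha))}{\int f u_+^2(\cdot,\beta,\alpha)}$, and here I would change variables $\beta\mapsto\lambda$ using \eqref{e.betalambda} once more: $d\beta = R_+^{-2}(-1,\beta,\alpha)\int_{-1}^0 f u_+^2(\cdot,\beta,\alpha)\,d\lambda$, which is $\le \int_{-1}^0 f u_+^2/\int f u_+^2 \cdot d\lambda$, and since $\int f u_+^2 = \int_{-1}^0 f u_+^2$ (as $f$ is supported in $[-1,0]$), this is just $d\lambda$. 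Hence the inner integral is $\le \int_\R r(\lambda)\,d\lambda = 1$, and $\|T_1 F\|_2 \le \|F\|_2$. The one technical point to get right is the direction of the change-of-variables inequality and the verification that $\operatorname{supp} f\subseteq[-1,0]$ makes $\int f u_+^2$ and $\int_{-1}^0 f u_+^2$ literally equal, so that no stray constant survives; this is the only place care is needed, and it follows directly from assumption \eqref{singlesite1}.
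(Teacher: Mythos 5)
Your overall strategy --- a weighted Cauchy--Schwarz (Schur-type) estimate built on the change-of-variables identities \eqref{e.betalambda} and \eqref{e.alphalambda} --- is the right one, and is in fact the paper's, but you have the two marginal identities backwards, and this breaks the argument at both places where a bound by $1$ is needed. Write $K_1(\beta,\alpha) = r(\lambda(\beta,\alpha))/\int f u_+^2$ and $K_2(\beta,\alpha) = R_+^2(-1,\beta,\alpha)\, r(\lambda(\beta,\alpha))/\int f u_+^2$, so that your $K$ is $K_2$ and $T_1 = \sqrt{K_1}\sqrt{K_2}$. The identities that actually hold are $\int_{\T_N} K_1(\beta,\alpha)\,d\alpha = 1$ for fixed $\beta$ (here \eqref{upmrel} converts $1/\int f u_+^2$ into $R_-^2(0)/\int f u_-^2$, which is exactly what the Jacobian \eqref{e.alphalambda} cancels) and $\int_{\T_N} K_2(\beta,\alpha)\,d\beta = 1$ for fixed $\alpha$ (via \eqref{e.betalambda}). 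Your claim that $K_2$ is doubly stochastic is false: for fixed $\beta$ the same substitution gives $\int_{\T_N} K_2(\beta,\alpha)\,d\alpha = \int_{\R} R_+^2(-1,\beta,\alpha(\lambda))\,r(\lambda)\,d\lambda$, a weighted average of the squared Pr\"ufer amplitude with no reason to equal, or be bounded by, $1$. Likewise your final inner integral $\int_{\T_N} K_1(\beta,\alpha)\,d\beta$ equals $\int_{\R} R_+^{-2}(-1)\,r(\lambda)\,d\lambda$, and the inequality ``$d\beta \le d\lambda$'' you invoke amounts to $\int f u_+^2 \le R_+^2(-1)$, which is unjustified and false in general ($R_+(0)=1$, but $R_+(-1)$ can be small). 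Heuristically, averages of $R_+^2$ and of $R_+^{-2}$ against a probability density multiply to at least $1$ by Cauchy--Schwarz, so one should not expect both of your marginals to be $\le 1$; no rearrangement of the two estimates as you have set them up can produce the constant $1$.

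The repair is to integrate each kernel in the variable in which it \emph{is} stochastic: apply Cauchy--Schwarz with the probability measure $K_1(\beta,\cdot)\,d\alpha$ as the weight,
$$
|(T_1F)(\beta)|^2 = \Big|\int_{\T_N} \sqrt{K_1}\,\sqrt{K_2}\,F\,d\alpha\Big|^2 \le \Big(\int_{\T_N} K_1(\beta,\alpha)\,d\alpha\Big)\Big(\int_{\T_N} K_2(\beta,\alpha)\,|F(\alpha)|^2\,d\alpha\Big) = \int_{\T_N} K_2(\beta,\alpha)\,|F(\alpha)|^2\,d\alpha,
$$
then integrate over $\beta$, use Fubini, and invoke $\int_{\T_N} K_2(\beta,\alpha)\,d\beta = 1$ to conclude $\|T_1F\|_2^2 \le \|F\|_2^2$. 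This is precisely the paper's Schur-test proof of Lemma~\ref{l.T122elembound}.
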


\begin{proof}
Let
$$
K_1(\beta,\alpha) = \frac{r(\lambda(\beta,\alpha))}{\int f
u_+^2(\cdot,\beta,\alpha)}
$$
and
$$
K_2(\beta,\alpha) = \frac{R_+^2(-1,\beta,\alpha)
r(\lambda(\beta,\alpha))}{\int f u_+^2(\cdot,\beta,\alpha)}
$$
if $\lambda(\beta,\alpha)$ exists and $K_1(\beta,\alpha) =
K_2(\beta,\alpha) = 0$ otherwise.

Thus, using \eqref{upmrel} and the change of variables \eqref{e.alphalambda},
\begin{align}
\label{k1one} \int_{\T_N} K_1(\beta,\alpha) \, d\alpha & = \int_{\T_N}
\frac{r(\lambda(\beta,\alpha))}{\int f u_+^2(\cdot,\beta,\alpha)} \, d\alpha\\
\nonumber & = \int_{\T_N} \frac{R_-^2(0,\beta,\alpha) \cdot
r(\lambda(\beta,\alpha))}{\int f
u_-^2(\cdot,\beta,\alpha)} \, d\alpha \\
\nonumber & = \int_{\T_N} \frac{R_{-1}^2(0,\beta,\lambda) \cdot
r(\lambda)}{\int f u_{-1}^2(\cdot,\beta,\lambda)} \left|
\frac{d\alpha}{d\lambda} \right|
d\lambda \\
\nonumber & = \int_{\T_N} r(\lambda) \, d\lambda \\
\nonumber & = 1.
\end{align}

Similarly, using \eqref{e.betalambda},
\begin{align}
\label{k2one} \int_{\T_N} K_2(\beta,\alpha) \, d\beta & = \int_{\T_N}
\frac{R_+^2(-1,\beta,\alpha)
r(\lambda(\beta,\alpha))}{\int f u_+^2(\cdot,\beta,\alpha)} \, d\beta \\
\nonumber & = \int_{\T_N} \frac{R_0^2(-1,\alpha,\lambda) \cdot
r(\lambda)}{\int f u_0^2(\cdot,\alpha,\lambda)} \left|
\frac{d\beta}{d\lambda} \right|
d\lambda \\
\nonumber & = \int_{\T_N} r(\lambda) \, d\lambda \\
\nonumber & = 1.
\end{align}

We have $T_1(\beta,\alpha) = \sqrt{K_1(\beta,\alpha)}
\sqrt{K_2(\beta,\alpha)}$, so that the Schur Test, e.g.\ \cite{Weidmann}, immediately
gives $\|T_1\|_{2,2} \le 1$.
\end{proof}

\section{The Operator $T_1$ has $\|\cdot\|_{2,2}$-Norm Less Than One} \label{s.T122strongbound}

The purpose of this section is to establish to following strengthening of Lemma~\ref{l.T122elembound}, which is the key technical result of our work.

\begin{prop}\label{t1normthm}
We have $\|T_1(g,E)\|_{2,2} < 1$.
\end{prop}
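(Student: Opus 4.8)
The plan is to upgrade the Schur test bound $\|T_1\|_{2,2} \le 1$ from Lemma~\ref{l.T122elembound} to a strict inequality by exhibiting a genuine deficit somewhere in the Schur argument. Recall that the bound came from writing $T_1(\beta,\alpha) = \sqrt{K_1(\beta,\alpha)}\sqrt{K_2(\beta,\alpha)}$ with $\int K_1(\beta,\cdot)\,d\alpha = 1$ and $\int K_2(\cdot,\alpha)\,d\beta = 1$. The sharp form of the Schur test says: if $T_1 h = h$ with $h$ the natural Schur weight (here $h \equiv 1$, since $\int K_1(\beta,\alpha)\,d\alpha = 1$ would need to pair with $\int K_2(\beta,\alpha)\,d\beta = 1$ in the Cauchy--Schwarz step), then equality in $\|T_1\|_{2,2} = 1$ forces, for a.e.\ $\beta$, the function $\alpha \mapsto \sqrt{K_2(\beta,\alpha)/K_1(\beta,\alpha)} = R_+(-1,\beta,\alpha)$ to be constant on the set where $K_1(\beta,\alpha) > 0$. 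So the first step is to show that $\|T_1\|_{2,2} = 1$ would force $R_+(-1,\beta,\alpha)$ (equivalently the Pr\"ufer amplitude ratio over the unit interval) to be independent of $\alpha$ for $\lambda(\beta,\alpha)$ ranging over $\operatorname{supp} r$, for a.e.\ $\beta$ --- and then to derive a contradiction from this rigidity.

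The contradiction should come from the large-coupling asymptotics of the Pr\"ufer amplitude, which is exactly what Appendix~\ref{s.appA} is advertised to provide. Fix $\beta$ and vary $\alpha$ (hence $\lambda = \lambda(\beta,\alpha)$) over the support of $r$. As $\lambda$ ranges over an interval of positive length --- and $\operatorname{supp} r$ has nonempty interior since $r$ is a continuous density --- the solution $u_0(\cdot,\alpha,\lambda,g,E)$ of $-u'' + gu + \lambda f u = Eu$ with $u_0(0) = \sin\alpha$, $u_0'(0) = \cos\alpha$, run backwards from $0$ to $-1$, feels the single-site bump $f$ more and more strongly on the subinterval $[a,b] \subset [-1,0]$ where $f > 0$ a.e. The amplitude ratio $R_+(-1,\beta,\alpha) = R_0(-1,\alpha,\lambda)$ (with $R_0(0) = 1$) then grows without bound (or has nontrivial dependence) as $\lambda \to \infty$, since a large positive $\lambda f$ on $[a,b]$ makes $u_0$ decay/grow exponentially there at a rate $\sim \sqrt{\lambda}$. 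More precisely, I expect the appendix to give something like $R_0(-1,\alpha,\lambda) \to \infty$ or at least $\frac{\partial}{\partial\lambda} R_0(-1,\alpha,\lambda) \ne 0$ on a set of positive measure, contradicting constancy in $\lambda$. The point is that $\lambda = \lambda(\beta,\alpha)$ varies nontrivially with $\alpha$ (it is strictly monotone, by the results quoted from Appendix~\ref{s.appB}), so $\alpha \mapsto R_+(-1,\beta,\alpha)$ cannot be constant; that kills the equality case.

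Concretely the steps are: (1) State the sharp Schur test (equality case), citing \cite{Weidmann} or giving the two-line Cauchy--Schwarz rigidity argument: $\|T_1 F\|_2^2 = \int_\beta (\int_\alpha \sqrt{K_1}\sqrt{K_2}|F|)^2 \le \int_\beta (\int_\alpha K_1)(\int_\alpha K_2 |F|^2) = \int_\beta \int_\alpha K_2 |F|^2 = \int_\alpha |F|^2 \int_\beta K_2 = \|F\|_2^2$, with equality in the middle Cauchy--Schwarz iff $\sqrt{K_2/K_1} = R_+(-1,\beta,\alpha)$ is $\beta$-a.e.\ constant in $\alpha$ on $\{K_1(\beta,\cdot) > 0\}$, and the outer equality requiring $|F|^2$ to be supported where this all saturates. (2) Show that for $\beta$ fixed, $\alpha \mapsto \lambda(\beta,\alpha)$ is a strictly monotone homeomorphism onto an interval containing a nondegenerate subinterval of $\operatorname{supp} r$ --- or at least that the set of $\alpha$ with $\lambda(\beta,\alpha) \in \operatorname{supp} r$ is a nondegenerate interval on which $\lambda$ is non-constant. (3) Invoke the large-$\lambda$ (or just the non-constancy) behavior of the Pr\"ufer amplitude $R_0(-1,\alpha,\lambda)$ from Appendix~\ref{s.appA} to conclude $\alpha \mapsto R_+(-1,\beta,\alpha)$ is non-constant on that interval. (4) Conclude the equality case is impossible, hence $\|T_1(g,E)\|_{2,2} < 1$. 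The main obstacle I anticipate is step (3): extracting from the large-coupling Pr\"ufer analysis a statement robust enough to apply for \emph{every} fixed $\beta$ (not just generically), and handling the possibility that $\operatorname{supp} r$ is so small that $\lambda(\beta,\alpha)$ stays in a region where $R_0(-1,\alpha,\lambda)$ happens to be locally constant --- this is presumably ruled out by analyticity of $\lambda \mapsto R_0(-1,\alpha,\lambda)$ in $\lambda$ together with its unbounded growth, so a non-constant analytic function cannot be locally constant on any interval. A secondary subtlety is carefully matching the Schur weight: since $\int K_1(\beta,\alpha)\,d\alpha = 1$ uses the $\alpha$-integral while $\int K_2(\beta,\alpha)\,d\beta = 1$ uses the $\beta$-integral, one must be slightly careful that the constant weight $h\equiv 1$ is indeed the right test function and that equality propagates as claimed, but this is the standard bipartite Schur test and should go through cleanly.
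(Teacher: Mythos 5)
Your overall strategy---exploit the equality case of the Cauchy--Schwarz step in Lemma~\ref{l.T122elembound} and contradict it with the large-coupling divergence of the Pr\"ufer amplitude from Appendix~\ref{s.appA}---is the same as the paper's, but your step (1) misstates the rigidity condition, and this is a genuine gap rather than a technicality. Equality in $\int \sqrt{K_1}\,\bigl(\sqrt{K_2}\,|F|\bigr)\,d\alpha \le \bigl(\int K_1\,d\alpha\bigr)^{1/2}\bigl(\int K_2|F|^2\,d\alpha\bigr)^{1/2}$ forces $\sqrt{K_1(\beta,\cdot)}$ and $\sqrt{K_2(\beta,\cdot)}\,|F(\cdot)|$ to be proportional, i.e.\ $R_+(-1,\beta,\alpha)^2|F(\alpha)|^2 = C_\beta$ on $\{K_1(\beta,\cdot)>0\}$ --- \emph{not} that $R_+(-1,\beta,\cdot)$ itself is constant. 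The unknown maximizer $F$ can (and, as the identity shows, would have to) absorb all of the $\alpha$-dependence of $R_+$: one gets $F(\alpha)^2 = C_\beta R_{-1}^2(0,\beta,\lambda(\beta,\alpha))$. Your claim that the "natural Schur weight" is $h\equiv 1$ has no basis here ($T_1$ applied to the constant function is not the constant function), so your steps (2)--(4), which rest on the non-constancy of $\alpha\mapsto R_+(-1,\beta,\alpha)$, do not produce a contradiction. Note also that on the set $M_\beta=\{\alpha:\lambda(\beta,\alpha)\in\mathrm{supp}\,r\}$ the coupling stays in $[-M,M]$, so the large-$\lambda$ asymptotics of Appendix~\ref{s.appA} say nothing there; and you never establish that a maximizer exists at all (this needs compactness of the operator, which follows from continuity of the kernel).

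What the paper does to close exactly these gaps: it first decomposes $T_1$ into a direct sum $\bigoplus_j L_j$ over $L^2(0,\pi)$ and shows $\|T_1\|=\|L_0\|$, so that a norm-attaining vector can be taken as a nonnegative eigenfunction $f$ of the compact operator $|L_0|$ whose relevant extension $\tilde f$ to $\T_N$ is $\pi$-periodic. The equality case then gives $\tilde f(\alpha)^2 = C_\beta R_{-1}^2(0,\beta,\lambda(\beta,\alpha))$ on $M_\beta$; analyticity of $\lambda\mapsto R_{-1}^2(0,\beta,\lambda)$ and of the inverse phase map makes $\tilde f^2$ real-analytic on each $(c_\beta,d_\beta)$, the union of which over $\beta\in[0,\pi)$ is an interval of length greater than $\pi$, whence $\pi$-periodicity makes $\tilde f^2$ analytic on all of $\R$. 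Only then can the identity be analytically continued to \emph{all} $\alpha$ for which $\lambda(\beta,\alpha)$ exists, forcing $\sup_\lambda R_{-1}(0,\beta,\lambda)<\infty$ (since the left-hand side is a bounded periodic function), which contradicts Proposition~\ref{proposition2}. The periodicity reduction, the compactness argument, and the analytic continuation out of $\mathrm{supp}\,r$ are all essential and all missing from your proposal.
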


We will suppress the $(g,E)$-dependence in our notation for the remainder of this section.

We have already seen that $T_1$ is a bounded operator on $L^2(\T_N)$. Moreover, \eqref{t1kerper} suggests that we decompose $T_1$ as a direct sum of integral operators on $L^2(0,\pi)$. Let us implement this:

\begin{lemma}\label{intdeclem}
{\rm(a)} Suppose $h$ is continuous on $(\pi n, \pi (n+1))$ for $n = 0,1,\ldots,2N-1$, $j \in \{ 0 , 1 , \ldots , 2N-1 \}$, and $x \in (0,\pi)$, and let
$$
(Uh)_j(x) = \frac{1}{\sqrt{2N}} \sum_{n = 0}^{2N-1} e^{\frac{-i \pi j n}{N}} h(x + \pi n).
$$
Then $U$ extends to a unitary operator
$$
U : L^2(\T_N) \to \bigoplus_{j=0}^{2N-1} L^2(0,\pi).
$$

{\rm (b)} We have
$$
U T_1 U^{-1} = \bigoplus_{j=0}^{2N-1} L_j,
$$
where $L_j$ is the integral operator in $L^2(0,\pi)$ with kernel
$$
L_j(\beta , \alpha) = \sum_{n = 0}^{2N-1} T_1(\beta , \alpha + n \pi) e^{\frac{i \pi j n}{N}}.
$$

{\rm (c)} We have $\|T_1\| = \|L_0\|$, with both norms being the operator norm in the respective $L^2$ space.
\end{lemma}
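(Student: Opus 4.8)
The plan is to treat Lemma~\ref{intdeclem} as a Floquet-type decomposition of $T_1$ adapted to the $\pi$-shift symmetry \eqref{t1kerper}, implemented via a finite discrete Fourier transform in the ``block index''.

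For part (a), I would identify $L^2(\T_N)$ with $L^2([0,2\pi N))$ and split $[0,2\pi N)$ into the $2N$ congruent blocks $[\pi n,\pi(n+1))$, $n=0,\dots,2N-1$. For each fixed $x\in(0,\pi)$ the assignment $(h(x+\pi n))_{n=0}^{2N-1}\mapsto((Uh)_j(x))_{j=0}^{2N-1}$ is, up to the factor $1/\sqrt{2N}$, the discrete Fourier transform on $\C^{2N}$, hence unitary there; integrating the resulting Plancherel identity $\sum_{j}|(Uh)_j(x)|^2=\sum_{n}|h(x+\pi n)|^2$ over $x\in(0,\pi)$ gives $\sum_j\|(Uh)_j\|_{L^2(0,\pi)}^2=\|h\|_{L^2(\T_N)}^2$. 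Thus $U$ is isometric on the stated dense class of functions and extends to an isometry onto $\bigoplus_j L^2(0,\pi)$; surjectivity, hence unitarity, is then confirmed by exhibiting the inverse $(U^{-1}\psi)(x+\pi n)=\frac{1}{\sqrt{2N}}\sum_{j}e^{i\pi jn/N}\psi_j(x)$, which works thanks to the orthogonality relation $\sum_{n=0}^{2N-1}e^{i\pi n(j-j')/N}=2N\,\delta_{jj'}$.

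For part (b), since $T_1$ is bounded (Lemma~\ref{l.T122elembound}) with continuous kernel (Proposition~\ref{proposition1}), both $UT_1U^{-1}$ and a prospective $\bigoplus_j L_j$ are bounded, so it is enough to compare them on a dense class of $h$. I would write the $\T_N$-integral defining $T_1h$ as a sum over the blocks $[\pi m,\pi(m+1))$, substitute $\alpha=\alpha'+\pi m$, and use \eqref{t1kerper} in the form $T_1(\beta+\pi n,\alpha'+\pi m)=T_1(\beta+\pi(n-m),\alpha')$. Reindexing the double sum by $\ell=n-m\bmod 2N$ decouples it and yields
$$
(UT_1h)_j(\beta)=\int_0^\pi\Big(\sum_{\ell=0}^{2N-1}e^{-i\pi j\ell/N}T_1(\beta+\pi\ell,\alpha')\Big)(Uh)_j(\alpha')\,d\alpha'.
$$
Applying \eqref{t1kerper} once more gives $T_1(\beta+\pi\ell,\alpha')=T_1(\beta,\alpha'-\pi\ell)$, and the substitution $\ell\mapsto-\ell\bmod 2N$ turns the bracketed kernel into $\sum_{n=0}^{2N-1}T_1(\beta,\alpha'+n\pi)e^{i\pi jn/N}=L_j(\beta,\alpha')$. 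Hence $UT_1=(\bigoplus_j L_j)U$, which is the assertion; each $L_j$ has a continuous, bounded kernel on $[0,\pi]^2$ and is thus a bona fide Hilbert--Schmidt operator. The only delicate point here is the index bookkeeping: keeping straight which shifts live modulo $\pi$, $2\pi N$, or $2N$, and verifying that $\ell=n-m$ and $\ell\mapsto-\ell$ run through complete residue systems mod $2N$ so that the exponential sums factor cleanly.

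For part (c), from (b) we get $\|T_1\|=\max_{0\le j\le 2N-1}\|L_j\|$, the norm of a finite direct sum being the maximum of the summands' norms. To see that this maximum equals $\|L_0\|$, I would use that the kernel of $T_1$ is nonnegative: $T_1(\beta,\alpha)=\sqrt{K_1(\beta,\alpha)}\sqrt{K_2(\beta,\alpha)}\ge 0$, exactly as in the proof of Lemma~\ref{l.T122elembound}. Hence $L_0(\beta,\alpha)=\sum_n T_1(\beta,\alpha+n\pi)\ge0$ and $|L_j(\beta,\alpha)|\le L_0(\beta,\alpha)$ for all $j$. Such a pointwise kernel domination forces $\|L_j\|\le\|L_0\|$: for $\psi\in L^2(0,\pi)$,
$$
\|L_j\psi\|^2\le\int_0^\pi\Big(\int_0^\pi|L_j(\beta,\alpha)|\,|\psi(\alpha)|\,d\alpha\Big)^2d\beta\le\int_0^\pi\Big(\int_0^\pi L_0(\beta,\alpha)\,|\psi(\alpha)|\,d\alpha\Big)^2d\beta=\| L_0|\psi| \|^2\le\|L_0\|^2\|\psi\|^2,
$$
so $\|T_1\|=\max_j\|L_j\|=\|L_0\|$. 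No real obstacle arises in (c) once (b) is available, since the positivity of the kernel of $T_1$ has already been observed; the plan's main (modest) effort is the combinatorial bookkeeping in part (b).
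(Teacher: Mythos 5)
Your proposal is correct and follows essentially the same route as the paper: the discrete Fourier transform over the $2N$ blocks with Parseval for $\C^{2N}$ in part (a), the reindexing $\ell=n-m \bmod 2N$ using the $\pi$-shift invariance \eqref{t1kerper} in part (b), and the pointwise kernel domination $|L_j|\le L_0$ in part (c). The only cosmetic difference is that you transform $UT_1h$ directly into $(\bigoplus_j L_j)Uh$, whereas the paper computes both sides separately and compares.
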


\begin{proof}
(a) Suppose $h$ is continuous on $(\pi n, \pi (n+1))$ for $n = 0,1,\ldots,2N-1$. Then,
\begin{align*}
\| Uh \|^2 & = \sum_{j = 0}^{2N-1} \int_0^\pi \left| \frac{1}{\sqrt{2N}} \sum_{n = 0}^{2N-1} e^{\frac{-i \pi j n}{N}} h(x + \pi n) \right|^2 \, dx \\
& = \int_0^\pi \sum_{j = 0}^{2N-1} \left| \sum_{n = 0}^{2N-1} \frac{e^{\frac{-i \pi j n}{N}}}{\sqrt{2N}}\, h(x + \pi n) \right|^2 \, dx \\
& = \int_0^\pi \sum_{j = 0}^{2N-1} |h(x + \pi j)|^2 \, dx \\
& = \int_0^{2N\pi} |h(x)|^2 \, dx \\
& = \|h\|^2.
\end{align*}
Here, all steps save the third follow by simple rewriting and the third step follows from the Parseval identity for $\C^{2N}$.

For a continuous $g = (g_j) \in \bigoplus_{j=0}^{2N-1} L^2(0,\pi)$, we define
$$
h(x + \pi n) = \frac{1}{\sqrt{2N}} \sum_{j = 0}^{2N-1} e^{\frac{i \pi j n}{N}} g_j(x),
$$
where $x \in (0,\pi)$ and $n \in \{ 0 , 1 , \ldots 2N-1 \}$ and note that $g = U h$. Since $h$ is continuous on $(\pi n, \pi (n+1))$ for $n = 0,1,\ldots,2N-1$, we may conclude that $U$ is a densely defined isometry with dense range, and hence $U$ extends to a unitary operator from $L^2(\T_N)$ onto $\bigoplus_{j=0}^{2N-1} L^2(0,\pi)$.

(b) We have
\begin{align*}
\Big( \bigoplus_{k=0}^{2N-1} L_k U h \Big)_j (\beta) & = \int_0^\pi L_j(\beta,\alpha) (Uh)_j(\alpha) \, d\alpha \\
& = \int_0^\pi \sum_{n = 0}^{2N-1} T_1(\beta , \alpha + n \pi) e^{\frac{i \pi j n}{N}} \frac{1}{\sqrt{2N}} \sum_{m = 0}^{2N-1} e^{\frac{-i \pi j m}{N}} h(\alpha + \pi m) \, d\alpha \\
& = \frac{1}{\sqrt{2N}} \sum_{n = 0}^{2N-1} \; \sum_{m = 0}^{2N-1} e^{\frac{-i \pi j (m-n)}{N}} \int_0^\pi T_1(\beta , \alpha + n \pi) h(\alpha + \pi m) \, d\alpha
\end{align*}
and
\begin{align*}
\Big( U T_1 h \Big)_j (\beta) & = \Big( U \int_0^{2N\pi} T_1(\cdot , \alpha) h (\alpha) \, d\alpha \Big)_j (\beta) \\
& = \frac{1}{\sqrt{2N}} \sum_{n = 0}^{2N-1} e^{\frac{-i \pi j n}{N}} \int_0^{2N\pi} T_1(\beta + \pi n , \alpha) h (\alpha) \, d\alpha \\
& = \frac{1}{\sqrt{2N}} \sum_{n = 0}^{2N-1} e^{\frac{-i \pi j n}{N}} \sum_{m=0}^{2N-1} \int_0^{\pi} T_1(\beta + \pi n , \alpha + \pi m) h (\alpha + \pi m) \, d\alpha \\
& = \frac{1}{\sqrt{2N}} \sum_{n = 0}^{2N-1} \; \sum_{m=0}^{2N-1} e^{\frac{-i \pi j n}{N}} \int_0^{\pi} T_1(\beta , \alpha + \pi(m-n)) h (\alpha + \pi m) \, d\alpha \\
& = \frac{1}{\sqrt{2N}} \sum_{\tilde n = 0}^{2N-1} \; \sum_{m=0}^{2N-1} e^{\frac{-i \pi j (m - \tilde n)}{N}} \int_0^{\pi} T_1(\beta , \alpha + \pi \tilde n) h (\alpha + \pi m) \, d\alpha,
\end{align*}
from which the asserted identity follows.

(c) From the decomposition established above, we get
$\|T_1\| = \max_{0 \le j \le 2N-1} \| L_j \|$. As $T_1(\beta,\alpha) \ge 0$, we have $|L_j(\beta,\alpha)|
\le L_0(\beta,\alpha)$ and therefore $\| L_j \| \le \| L_0 \|$ for every $j$. This yields the claim.
\end{proof}

\begin{proof}[Proof of Proposition~\ref{t1normthm}.]
By Lemma~\ref{intdeclem}.(c), it suffices to show $\|L_0\| < 1$. By
Lemma~\ref{l.T122elembound} and Lemma~\ref{intdeclem}.(c), $\|L_0\| \le 1$.
Suppose that $\|L_0\| = 1$. By
compactness, there exists $f \not= 0$ such that $\|L_0 f\| =
\|f\|$ (choose $f$ as an eigenvector to the eigenvalue $1 = \|L_0\|
= \| |L_0| \|$ of $| L_0 |$ and use $\|L_0 f\| = \| |L_0| f \|$).
The operator $L_0$ has a positive kernel and we may therefore
assume that $f \ge 0$.

Consider the $\pi$-periodic extension $\tilde f$ of $f$ to $\T_N$. Then,
\begin{align*}
(L_0 f)(\beta) & = \int_0^\pi \sum_{n = 0}^{2N-1} T_1(\beta , \alpha + n \pi) f(\alpha) \, d\alpha\\
& = \sum_{n = 0}^{2N-1} \int_0^\pi T_1(\beta , \alpha + n \pi) \tilde f(\alpha + n \pi) \, d\alpha\\
& = \int_{\T_N} T_1(\beta,\alpha) \tilde f(\alpha) \,
d\alpha.
\end{align*}
By \eqref{k1one}, \eqref{k2one} along with $K_1(\beta + \pi,\alpha
+ \pi) = K_1(\beta,\alpha)$ and $K_2(\beta + \pi,\alpha + \pi) =
K_2(\beta,\alpha)$, we find
\begin{align*}
\| L_0 f \|^2 & = \int_0^\pi | (L_0 f)(\beta) |^2 \, d\beta \\
& = \int_0^\pi \left| \int_{\T_N} T_1(\beta,\alpha) \tilde f(\alpha) \,
d\alpha \right|^2 \, d\beta \\
& = \int_0^\pi \left| \int_{\T_N}
\sqrt{K_1(\beta,\alpha)} \sqrt{K_2(\beta,\alpha)} \tilde f(\alpha)
\, d\alpha \right|^2 \, d\beta \\
& \le \int_0^\pi \left( \int_{\T_N} K_1(\beta,\alpha) \, d\alpha \;
\int_{\T_N} K_2(\beta,\alpha) |\tilde f(\alpha)|^2 \, d\alpha \right)
\, d\beta \\
& = \int_0^\pi \int_{\T_N} K_2(\beta,\alpha) |\tilde f(\alpha)|^2 \, d\alpha
\, d\beta \\
& = \int_0^\pi \sum_{n=0}^{2N-1} \int_0^\pi K_2(\beta,\alpha - \pi n) |\tilde f(\alpha - \pi n)|^2 \,
d\alpha \, d\beta \\
& = \int_0^\pi \sum_{n=0}^{2N-1} \int_0^\pi K_2(\beta + \pi n,\alpha) |\tilde f(\alpha)|^2 \,
d\alpha \, d\beta \\
& = \int_0^\pi \left( \sum_{n=0}^{2N-1} \int_0^\pi K_2(\beta + \pi n,\alpha) \, d\beta \right) |\tilde f(\alpha)|^2 \,
d\alpha \\
& = \int_0^\pi \left( \int_{\T_N} K_2(\beta,\alpha) \, d\beta \right) |\tilde f(\alpha)|^2 \,
d\alpha \\
& = \int_0^\pi |\tilde f(\alpha)|^2 \, d\alpha \\
& = \int_0^\pi |f(\alpha)|^2 \, d\alpha \\
& = \| f \|^2 \\
& = \| L_0 f \|^2.
\end{align*}
Thus, we have equality in the application of the Cauchy-Schwarz
inequality, which implies that for almost every $\beta \in
(0,\pi)$, the functions $\sqrt{K_1(\beta,\cdot)}$ and
$\sqrt{K_2(\beta,\cdot)} \tilde f(\cdot)$ are linearly dependent
in $L^2(\T_N)$. Since they are both non-negative and non-zero, we
see that for $\beta \in (0,\pi) \setminus N$, $\mathrm{Leb}(N) =
0$, there is $C_\beta > 0$ such that
$$
C_\beta K_1(\beta, \cdot) = K_2(\beta,\cdot) \tilde f(\cdot)^2.
$$

Fix $\beta \in [0,\pi) \setminus N$ and let
$$
M_\beta := \{ \alpha : \lambda(\beta,\alpha) \in \mathrm{supp}\, r
\}.
$$
Then, for almost every $\alpha \in M_\beta$, we have $C_\beta =
R_+^2(-1,\beta,\alpha)\tilde f(\alpha)^2$, or
\begin{equation}\label{14}
\tilde f(\alpha)^2 = C_\beta
R_{-1}^2(0,\beta,\lambda(\beta,\alpha)).
\end{equation}
Let $[A,B]$ be a non-trivial interval that is contained in the
support of $r$ (recall that $r$ is continuous). If $c_\beta$ and $d_\beta$ are the unique phases
determined by $\lambda(\beta,c_\beta) = B$ and
$\lambda(\beta,d_\beta) = A$ (i.e., $c_\beta =
\varphi_{-1}(0,\beta,B)$ and $d_\beta = \varphi_{-1}(0,\beta,A)$),
then $c_\beta < d_\beta$ and $[c_\beta,d_\beta] \subset M_\beta$.
Moreover, $c_\beta$ and $d_\beta$ are strictly increasing and
continuous in $\beta$ (cf.~Lemma~\ref{lem:thetader}) and we have $[c_{\beta + \pi},d_{\beta +
\pi}] = [c_\beta + \pi , d_\beta + \pi]$.

It follows that
$$
I := \bigcup_{\beta \in [0,\pi) \setminus N} ( c_\beta , d_\beta )
$$
is an open interval of length greater than $\pi$. For fixed $\beta
\in (0,\pi) \setminus N$, $\tilde f^2$ is real-analytic on
$(c_\beta , d_\beta)$ by \eqref{14}. This uses that (i) $\alpha(\lambda)$ is analytic in $\lambda$ with $\alpha'(\lambda)<0$ (and thus its inverse function $\lambda(\beta,\alpha)$ is analytic in $\alpha$) and (ii) $R_{-1}^2(0,\beta,\lambda)$ is analytic in $\lambda$. Property (ii) follows from part (a) of Lemma~\ref{lem:lamderphi}. This also implies the analyticity of the right hand side of \eqref{e.alphalambda} in $\lambda$ which in turn gives (i).

We conclude
that $\tilde f^2$ is analytic on $I$ and, due to
$\pi$-periodicity, on all of $\R$.

Now, we again fix a $\beta \in [0,\pi) \setminus N$ and conclude
by analytic continuation that \eqref{14} holds for all $\alpha$
for which $\lambda(\beta,\alpha)$ exists. Therefore,
$R_{-1}^2(0,\beta,\lambda(\beta,\alpha))$ is bounded in $\alpha$
(as this holds for the $\pi$-periodic analytic function on the LHS
of \eqref{14}). But $\lambda(\beta,\alpha)$ takes on arbitrary
real values as $\alpha$ varies and hence
\begin{equation}\label{prampbdd}
\sup_{\lambda \in \R} R_{-1}(0,\beta,\lambda) < \infty,
\end{equation}
which is impossible by Proposition~\ref{proposition2}. This
contradiction completes the proof of $\|T_1\| = \| L_0 \| < 1$.
\end{proof}

\section{The Dependence of $T_1$ on the Background} \label{s.T122continuity}

In this section we study the map $(g,E) \mapsto T_1(g,E)$. Note that the energy $E$ can be absorbed in $g$, that is, $T_1(g,E) = T_1(g-E,0)$. For this reason, we will consider without loss of generality the case $E = 0$. Consequently, in this section, $E$ is dropped from the notation and assumed to be zero. For example, we write $T_1(g)$ for $T_1(g,0)$ and $\lambda(\beta,\alpha,g)$ for $\lambda(\beta,\alpha,g,0)$.

Write
$$
D(g) = \{ (\beta,\alpha) \in \R^2 : \lambda(\beta,\alpha,g) \text{
exists} \}
$$
and
$$
A(g) = \lambda(\cdot,\cdot,g)^{-1}([-M,M]) \subset D(g),
$$
where, as in Section~\ref{s.reduction}, supp$\,r \subset [-M,M]$.

\begin{lemma}
Suppose $g_n \to g$ in $L^\infty(-1,0)$. Then, we have
\begin{equation}\label{dincl}
D(g) \subseteq \liminf_{n \to \infty} D(g_n)
\end{equation}
and
\begin{equation}\label{aincl}
\R^2 \setminus A(g) \subseteq \liminf_{n \to \infty} \R^2
\setminus A(g_n).
\end{equation}
\end{lemma}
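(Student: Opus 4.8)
The statement is a pair of lower-semicontinuity-type inclusions for the set-valued maps $g \mapsto D(g)$ and $g \mapsto \mathbb{R}^2 \setminus A(g)$. Recall that $\liminf_n D(g_n) = \{(\beta,\alpha) : (\beta,\alpha) \in D(g_n) \text{ for all but finitely many } n\}$, so for \eqref{dincl} it suffices to fix $(\beta,\alpha) \in D(g)$ and show that $(\beta,\alpha) \in D(g_n)$ once $n$ is large. The plan is to exploit the monotonicity and sandwiching structure already used in the proof of \eqref{dopen} in Proposition~\ref{proposition1}: membership of $(\beta,\alpha)$ in $D(g)$ means there is a coupling constant $\lambda_0 := \lambda(\beta,\alpha,g)$ with $\varphi_0(-1,\alpha,\lambda_0,g,0) = \beta$ (mod $2\pi N$), and by Lemma~\ref{lem:lamderphi} the map $\lambda \mapsto \varphi_0(-1,\alpha,\lambda,g,0)$ is \emph{strictly monotone} (decreasing) in $\lambda$ with a nonvanishing derivative given by \eqref{e.betalambda}. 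Hence for any small $\varepsilon>0$ the values $\varphi_0(-1,\alpha,\lambda_0\pm\varepsilon,g,0)$ strictly straddle $\beta$ (mod $2\pi N$).

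The key step is then a continuous-dependence argument: by the a priori bounds of Lemma~\ref{lem:contdep} (continuous dependence of solutions of \eqref{geve} on the coefficient $g$ in $L^\infty(-1,0)$, uniformly for $\lambda$ in the compact set $[\lambda_0-\varepsilon,\lambda_0+\varepsilon]$ and $E=0$), the Prüfer phase $\varphi_0(-1,\alpha,\lambda,g_n,0)$ converges to $\varphi_0(-1,\alpha,\lambda,g,0)$ as $n \to \infty$, uniformly in $\lambda \in [\lambda_0-\varepsilon,\lambda_0+\varepsilon]$. Therefore, for $n$ large, $\varphi_0(-1,\alpha,\lambda_0-\varepsilon,g_n,0)$ and $\varphi_0(-1,\alpha,\lambda_0+\varepsilon,g_n,0)$ still straddle $\beta$, and by the intermediate value theorem (applied to the continuous, strictly monotone function $\lambda \mapsto \varphi_0(-1,\alpha,\lambda,g_n,0)$ on that interval) there is a $\lambda_n \in [\lambda_0-\varepsilon,\lambda_0+\varepsilon]$ with $\varphi_0(-1,\alpha,\lambda_n,g_n,0) = \beta$ (mod $2\pi N$); uniqueness of such a $\lambda$ on $[-M,M]$ — granted by \eqref{e.shift1}, \eqref{e.shift2} and the argument in Lemma~\ref{l.cis11} — need not be invoked here since we only need existence, though it incidentally shows $\lambda(\beta,\alpha,g_n) = \lambda_n \to \lambda_0$. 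This proves $(\beta,\alpha) \in D(g_n)$ for large $n$, giving \eqref{dincl}.

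For \eqref{aincl}, fix $(\beta,\alpha) \in \mathbb{R}^2 \setminus A(g)$. There are two cases. If $(\beta,\alpha) \notin D(g)$, i.e. there is \emph{no} $\lambda$ (in all of $\mathbb{R}$, not just $[-M,M]$) sending $\alpha$ to $\beta$ — actually one must be slightly careful, since $\lambda(\beta,\alpha,g)$ is defined via $\lambda \in [-M,M]$: the honest statement is that either no real $\lambda$ works, or the unique real $\lambda$ that works lies outside $[-M,M]$, hence (since $\mathrm{supp}\, r \subseteq [-M,M]$) outside $\mathrm{supp}\, r$. In either case, using that $\lambda \mapsto \varphi_0(-1,\alpha,\lambda,g,0)$ is strictly monotone and, by \eqref{e.shift1}–\eqref{e.shift2} and the proof of Lemma~\ref{l.cis11}, changes by less than $N\pi$ as $\lambda$ ranges over $[-M,M]$, the value $\beta$ (mod $2\pi N$) is ``strictly missed'' by $\{\varphi_0(-1,\alpha,\lambda,g,0) : \lambda \in [-M,M]\}$ — more precisely, there is a neighborhood of $\beta$ in $\mathbb{T}_N$ disjoint from this image. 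By the same uniform continuous-dependence statement (Lemma~\ref{lem:contdep}) on the compact parameter set $\lambda \in [-M,M]$, for $n$ large the image $\{\varphi_0(-1,\alpha,\lambda,g_n,0) : \lambda \in [-M,M]\}$ stays within an arbitrarily small neighborhood of the image for $g$, hence also misses $\beta$; thus $\lambda(\beta,\alpha,g_n) \notin [-M,M]$ or does not exist, i.e. $(\beta,\alpha) \in \mathbb{R}^2 \setminus A(g_n)$. A symmetric argument in $\beta$ near the endpoints $\pm M$ handles any borderline behavior; combining, \eqref{aincl} follows.

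**Main obstacle.** The routine part is the monotonicity plus intermediate value theorem; the delicate point is the \emph{uniformity in $\lambda$} of the convergence $\varphi_0(-1,\alpha,\lambda,g_n,0) \to \varphi_0(-1,\alpha,\lambda,g,0)$, which is exactly what lets us conclude something about the \emph{image} of a whole $\lambda$-interval rather than a single $\lambda$. This must be extracted from the a priori bounds in Lemma~\ref{lem:contdep} (and the ODE facts of Appendix~\ref{s.appB}), observing that for $\lambda$ in the compact set $[-M,M]$ the potentials $g_n + \lambda f$ are uniformly bounded in $L^\infty(-1,0)$, so the standard Gronwall-type estimate gives convergence of $u_0$, hence of $\varphi_0$, uniformly in $\lambda$. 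The other mild subtlety, worth a sentence in the writeup, is the definitional asymmetry between ``$\lambda(\beta,\alpha,g)$ exists'' (meaning: exists in $[-M,M]$) for $A$ versus ``exists in $\mathbb{R}$'' for $D$; once one notes that outside $[-M,M]$ the density $r$ vanishes, this causes no real trouble.
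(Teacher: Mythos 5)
Your proof is correct and rests on the same two pillars as the paper's: strict monotonicity of the Pr\"ufer phase in the coupling constant $\lambda$ and continuous dependence of solutions on $g$ via Lemma~\ref{lem:contdep}. For \eqref{dincl} your argument (sandwich $\beta$ between the phases at $\bar\lambda\pm\varepsilon$, perturb $g$, apply the intermediate value theorem) is essentially verbatim the paper's; note, though, that here you only need convergence of the phase at the \emph{two fixed} values $\bar\lambda\pm\varepsilon$, so the uniformity in $\lambda$ that you flag as the main obstacle is not actually required for this half. For \eqref{aincl} you argue directly: the image of the compact interval $[-M,M]$ under $\lambda\mapsto\varphi_0(-1,\alpha,\lambda,g)$ is compact, hence misses $\beta$ by a positive margin, and uniform-in-$\lambda$ convergence of the phase (extractable from Lemma~\ref{lem:contdep}, since $q_1-q_2=g_n-g$ is independent of $\lambda$ and the Gronwall factors are uniformly bounded for $\lambda\in[-M,M]$, combined with the lower bound on $R$ from Lemma~\ref{lem:solest}) preserves this for large $n$. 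The paper instead argues by contradiction: if $\lambda(\beta,\alpha,g_{n_k})\in[-M,M]$ along a subsequence, then monotonicity forces $\varphi_{-1}(0,\beta,\pm M,g_{n_k})$ to sandwich $\alpha$, and passing to the limit at just the two endpoints $\lambda=\pm M$ yields a $\lambda\in[-M,M]$ for $g$, a contradiction. Both routes work; the paper's is lighter (pointwise convergence at two parameter values suffices throughout), while yours buys a slightly more quantitative picture of how the image interval moves. Your closing remark about ``borderline behavior near $\pm M$'' is unnecessary: if $\beta$ were an endpoint of the image then $(\beta,\alpha)\in A(g)$, so the positive-distance argument already covers all cases. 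You also rightly flag the definitional asymmetry (existence of $\lambda$ in $\R$ versus in $[-M,M]$), which the paper handles implicitly in the first sentence of its treatment of \eqref{aincl}.
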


\begin{proof}
Let $(\beta,\alpha) \in D(g)$ so that $\bar \lambda :=
\lambda(\beta,\alpha,g)$ exists. Fix some $\varepsilon > 0$. Then,
by monotonicity,
$$
\varphi_0(-1,\alpha,\bar \lambda - \varepsilon, g) < \beta <
\varphi_0(-1,\alpha,\bar \lambda + \varepsilon, g).
$$
It follows from $L^1_\mathrm{loc}$-continuity of solutions in $g$, specifically the bound provided in Lemma~\ref{lem:contdep}, that for
$n$ sufficiently large,
$$
\varphi_0(-1,\alpha,\bar \lambda - \varepsilon, g_n) < \beta <
\varphi_0(-1,\alpha,\bar \lambda + \varepsilon, g_n).
$$
Thus, for such values of $n$, there is $\bar \lambda_n \in (\bar
\lambda - \varepsilon , \bar \lambda + \varepsilon)$ with
$\varphi_0(-1,\alpha,\bar \lambda_n,g_n) = \beta$. In particular,
$\lambda(\beta,\alpha,g_n)$ exists (and is given by $\bar
\lambda_n$). This proves \eqref{dincl}. For later use, we note
that the proof also shows $\lambda(\beta,\alpha,g_n) \to
\lambda(\beta,\alpha,g)$.

Now consider $(\beta,\alpha) \in \R^2 \setminus A(g)$. That is,
either $\lambda(\beta,\alpha,g)$ does not exist or it does exist
but lies outside the interval $[-M,M]$. Suppose there is a sequence
$n_k \to \infty$ such that $\lambda(\beta,\alpha,g_{n_k})$ exists
and belongs to $[a,b]$ for every $k$. This means that
$\varphi_{-1}(0,\beta,\lambda(\beta,\alpha),g_{n_k}) = \alpha$. By
monotonicity, this gives
$$
\varphi_{-1}(0,\beta,b,g_{n_k}) \le \alpha \quad \text{and} \quad
\varphi_{-1}(0,\beta,a,g_{n_k}) \ge \alpha.
$$
Taking $k \to \infty$, we find
$$
\varphi_{-1}(0,\beta,b,g) \le \alpha \quad \text{and} \quad
\varphi_{-1}(0,\beta,a,g) \ge \alpha.
$$
This means, however, that there exists $\lambda \in [-M,M]$ such
that $\varphi_{-1}(0,\beta,\lambda,g) = \alpha$, which is a
contradiction. This proves \eqref{aincl}.
\end{proof}

\begin{lemma}\label{contkerlem}
Suppose $g_n \to g$ in $L^\infty(-1,0)$. Then,
$$
\lim_{n \to \infty} T_1(\beta,\alpha,g_n) = T_1(\beta,\alpha,g).
$$
for every $(\beta,\alpha) \in \R^2$.
\end{lemma}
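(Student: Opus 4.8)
The plan is to split into cases according to whether $(\beta,\alpha)$ lies in the set where $\lambda(\beta,\alpha,g)$ exists and lands in $\mathrm{supp}\,r$, or not. Concretely, recalling that $T_1(\beta,\alpha,g)$ is defined (after the reformulation at the end of Proposition~\ref{proposition1}, adapted to the $E=0$ normalization) to be $R_+(-1,\beta,\alpha,g)r(\lambda(\beta,\alpha,g))/\int f u_+^2(\cdot,\beta,\alpha,g)$ on $D(g)$ and $0$ on $\R^2\setminus A(g)$, I would first handle the easy case $(\beta,\alpha)\in\R^2\setminus A(g)$: by \eqref{aincl} we have $(\beta,\alpha)\in\R^2\setminus A(g_n)$ for all large $n$, hence $T_1(\beta,\alpha,g_n)=0=T_1(\beta,\alpha,g)$ eventually, and the limit is trivial.

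The main case is $(\beta,\alpha)\in A(g)$, i.e.\ $\bar\lambda:=\lambda(\beta,\alpha,g)$ exists and lies in $[-M,M]$ (if $\bar\lambda$ exists but lies strictly outside $[-M,M]$, then $r(\bar\lambda)=0$ and one argues as in the next sentence using continuity of $r$ and of $\lambda(\beta,\alpha,\cdot)$ to get $r(\lambda(\beta,\alpha,g_n))\to 0$). Here I would invoke the key observations already recorded in the previous lemma: since $(\beta,\alpha)\in D(g)\subseteq\liminf D(g_n)$, the coupling constant $\lambda(\beta,\alpha,g_n)$ exists for all large $n$, and — as noted explicitly at the end of that lemma's proof — $\lambda(\beta,\alpha,g_n)\to\lambda(\beta,\alpha,g)=\bar\lambda$. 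Then I would assemble $T_1(\beta,\alpha,g_n)$ from its three factors and pass to the limit in each:
\begin{itemize}
\item $r(\lambda(\beta,\alpha,g_n))\to r(\bar\lambda)$ by continuity of $r$;
\item $R_+(-1,\beta,\alpha,g_n)=R_0(-1,\alpha,\lambda(\beta,\alpha,g_n),g_n)\to R_0(-1,\alpha,\bar\lambda,g)=R_+(-1,\beta,\alpha,g)$;
\item $\int f u_+^2(\cdot,\beta,\alpha,g_n)=\int f u_0^2(\cdot,\alpha,\lambda(\beta,\alpha,g_n),g_n)\to\int f u_0^2(\cdot,\alpha,\bar\lambda,g)=\int f u_+^2(\cdot,\beta,\alpha,g)$,
\end{itemize}
where for the last two limits I would combine the joint continuity of $R_0(-1,\alpha,\lambda,g)$ and $\int f u_0^2(\cdot,\alpha,\lambda,g)$ in $(\alpha,\lambda,g)$ — which follows from the $L^1_{\mathrm{loc}}$-continuous dependence of ODE solutions on parameters and potentials, i.e.\ the a priori bound in Lemma~\ref{lem:contdep} (the same input used for \eqref{t1kercont}) — with the already-established convergence $\lambda(\beta,\alpha,g_n)\to\bar\lambda$ and $g_n\to g$ in $L^\infty(-1,0)$. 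The denominator is bounded below by the positive constant $C_2$ from \eqref{auxbound2} (valid since $\bar\lambda\in\mathrm{supp}\,r$ and the $g_n$ are eventually within $\|W_0\|_\infty$ in sup-norm of $g$, hence we may assume $\|g_n\|_\infty\le\|W_0\|_\infty$ up to replacing $W_0$ by a slightly larger bound — or simply note the continuity argument only needs the denominator bounded away from $0$ near the limit), so the quotient converges to $T_1(\beta,\alpha,g)$.

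The main obstacle is purely bookkeeping: making sure the parameter $\lambda(\beta,\alpha,g_n)$ really does converge before feeding it into the continuous-dependence statement, and checking that the continuity of $(\alpha,\lambda,g)\mapsto(R_0(-1,\alpha,\lambda,g),\int f u_0^2(\cdot,\alpha,\lambda,g))$ holds jointly in all three arguments (the excerpt only states joint continuity in $(\alpha,\lambda)$ for fixed $g$, but the underlying a priori bound of Lemma~\ref{lem:contdep} is uniform enough to give continuity in $g\in L^\infty(-1,0)$ as well — this is exactly what was used implicitly in the previous lemma to get $\varphi_0(-1,\alpha,\bar\lambda\pm\varepsilon,g_n)\to\varphi_0(-1,\alpha,\bar\lambda\pm\varepsilon,g)$). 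No genuinely new idea is needed beyond what appears in the proof of Proposition~\ref{proposition1}; it is the $g$-dependent refinement of that continuity argument.
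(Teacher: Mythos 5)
Your proposal is correct and follows essentially the same route as the paper: the case $(\beta,\alpha)\in\R^2\setminus A(g)$ is dispatched via \eqref{aincl} exactly as in the text, and the case $(\beta,\alpha)\in D(g)$ uses the convergence $\lambda(\beta,\alpha,g_n)\to\lambda(\beta,\alpha,g)$ recorded at the end of the preceding lemma together with continuous dependence of solutions on the potential (Lemma~\ref{lem:contdep}) to pass to the limit in each factor of the kernel. Your extra bookkeeping about the lower bound on the denominator and joint continuity in $(\alpha,\lambda,g)$ only fills in details the paper leaves as ``readily seen.''
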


\begin{proof}
We first consider the case $(\beta,\alpha) \in \R^2 \setminus
A(g)$. As seen above, this implies $(\beta,\alpha) \in \R^2
\setminus A(g_n)$ for $n \ge N_1$. Consequently,
$T_1(\beta,\alpha,g) = T_1(\beta,\alpha,g_n) = 0$ for $n \ge N_1$,
which trivially implies convergence.

If $(\beta,\alpha) \in D(g)$, we know that $(\beta,\alpha) \in
D(g_n)$ for $n \ge N_2$. Then, using continuous dependence of solutions on the potential again, it is readily
seen that
$$
T_1(\beta,\alpha,g_n) =
\frac{R_0(-1,\alpha,\lambda(\beta,\alpha,g_n),g_n)
r(\lambda(\beta,\alpha,g_n))}{\int f(x)
u_0^2(x,\alpha,\lambda(\beta,\alpha,g_n),g_n) \, dx} \to
T_1(\beta,\alpha,g).
$$
Here we also used that $\lambda(\beta,\alpha,g_n) \to
\lambda(\beta,\alpha,g)$, which was proven above.
\end{proof}

\begin{prop}
The real-valued map $g \mapsto \|T_1(g)\|_{2,2}$ is continuous on the ball of radius $\|W_0\|_{\infty}$ in $L^{\infty}(-1,0)$.
\end{prop}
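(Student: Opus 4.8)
The plan is to promote the pointwise kernel convergence of Lemma~\ref{contkerlem} to convergence in Hilbert--Schmidt norm, and then to invoke the elementary fact that the operator norm is dominated by the Hilbert--Schmidt norm. Since the ball of radius $\|W_0\|_{\infty}$ in $L^\infty(-1,0)$ is a metric space, it suffices to prove sequential continuity: if $g_n \to g$ in $L^\infty(-1,0)$ with all $g_n$ and $g$ in that ball, then $\|T_1(g_n)\|_{2,2} \to \|T_1(g)\|_{2,2}$.

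The key preliminary step is a uniform pointwise bound on the integral kernel $T_1(\beta,\alpha,g) = T_1(\beta,\alpha,g,0)$. Recall from \eqref{t1kerdef} that this kernel equals $R_+(-1,\beta,\alpha,g)\,r(\lambda(\beta,\alpha,g))/\int f u_+^2(\cdot,\beta,\alpha,g)$ when $\lambda(\beta,\alpha,g)$ exists and is $0$ otherwise; moreover it vanishes unless $\lambda(\beta,\alpha,g) \in \mathrm{supp}\,r$. For $\|g\|_{\infty} \le \|W_0\|_{\infty}$ and $E=0$, the estimates \eqref{auxbound1} and \eqref{auxbound2} from the proof of Lemma~\ref{l.T012bound} supply constants $C_1<\infty$, $C_2>0$ with $R_+^2(-1,\beta,\alpha,g) \le C_1$ and $\int f u_+^2(\cdot,\beta,\alpha,g) \ge C_2$ whenever $\lambda(\beta,\alpha,g) \in \mathrm{supp}\,r$. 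Since $r$ is continuous with compact support, hence bounded, we obtain
$$
0 \le T_1(\beta,\alpha,g) \le C_3 := \sqrt{C_1}\,\|r\|_{\infty}/C_2
$$
for every $(\beta,\alpha) \in \T_N^2$ and every $g$ in the ball. Because $\T_N^2$ has finite Lebesgue measure, the constant function $C_3$ lies in $L^2(\T_N^2)$, so each $T_1(g)$ is Hilbert--Schmidt with kernel uniformly dominated by a fixed $L^2$ function.

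With this in hand, fix such a sequence $g_n \to g$. By Lemma~\ref{contkerlem}, $T_1(\beta,\alpha,g_n) \to T_1(\beta,\alpha,g)$ for every $(\beta,\alpha) \in \T_N^2$, while $|T_1(\cdot,\cdot,g_n) - T_1(\cdot,\cdot,g)|^2 \le 4C_3^2$ pointwise, and $4C_3^2$ is integrable on $\T_N^2$. The dominated convergence theorem then gives $\|T_1(\cdot,\cdot,g_n) - T_1(\cdot,\cdot,g)\|_{L^2(\T_N^2)} \to 0$, that is, $T_1(g_n) \to T_1(g)$ in Hilbert--Schmidt norm. Since $\|\cdot\|_{2,2} \le \|\cdot\|_{\mathrm{HS}}$, we get $\|T_1(g_n) - T_1(g)\|_{2,2} \to 0$, and the reverse triangle inequality yields $\|T_1(g_n)\|_{2,2} \to \|T_1(g)\|_{2,2}$, as desired.

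I expect no serious obstacle: the substantive content is already contained in Lemma~\ref{contkerlem} (which in turn rests on the continuous-dependence bound of Lemma~\ref{lem:contdep}). The one point to be careful about is the uniform pointwise bound on the kernel, since it is defined by cases; one must check that the formula, read with the convention that it is $0$ off the region where $\lambda$ exists and noting that the factor $r(\lambda(\beta,\alpha,g))$ forces $\lambda(\beta,\alpha,g) \in \mathrm{supp}\,r$ on the support of the kernel, is genuinely bounded by $C_3$ everywhere on $\T_N^2$, including on the boundary of the set $A(g)$. That uniform bound is precisely what legitimizes the domination step and, simultaneously, ensures the operators in question are Hilbert--Schmidt so that the norm comparison applies.
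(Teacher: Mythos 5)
Your proof is correct and follows essentially the same route as the paper's: a uniform pointwise bound on the kernel from the a priori estimates, pointwise convergence of the kernels from Lemma~\ref{contkerlem}, dominated convergence to upgrade this to Hilbert--Schmidt convergence, and domination of the operator norm by the Hilbert--Schmidt norm. The only difference is that the paper first reduces to the block $L_0$ via Lemma~\ref{intdeclem}(c) before running this argument on $L^2(0,\pi)$, whereas you apply it directly to $T_1$ on $L^2(\T_N)$ --- a harmless (indeed slightly more direct) shortcut, since $\T_N^2$ also has finite measure.
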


\begin{proof}
We have to show that for $g,g_n \in L^\infty(-1,0)$, $\|g\|_{\infty}, \|g_n\|_{\infty} \le \|W_0\|_{\infty}$, $n\ge 1$,
with $\|g_n - g\|_\infty \to 0$, we have
$$
\lim_{n \to \infty} \|T_1(g_n)\|_{2,2} =
\|T_1(g)\|_{2,2}.
$$

By Lemma~\ref{intdeclem}.(c), it suffices to show that
\begin{equation}\label{t2toshow}
\lim_{n \to \infty} \| L_0(g_n) \|_{2,2} = \| L_0(g) \|_{2,2}.
\end{equation}

Recall that
$$
L_0 (\beta,\alpha,\cdot) = \sum_{n = 0}^{2N-1} T_1(\beta,\alpha + \pi n,\cdot);
$$
compare Lemma~\ref{intdeclem}.(b). Using (\ref{t1kerdef}) and the a priori bounds Lemma~\ref{lem:solest} and Lemma~\ref{lem:L2b}, this
implies that $L_0 (\beta,\alpha,\cdot)$ is uniformly bounded,
uniformly for $\{g\} \cup \{ g_n \}_{n \ge 1}$. By Lemma~\ref{contkerlem}, the
functions $L_0 (\cdot,\cdot,g_n)$ converge pointwise to $L_0
(\cdot,\cdot,g)$. Consequently,
\begin{align*}
\| L_0(g_n) - L_0(g) \|^2_{2,2} & \le \|
L_0(g_n) - L_0(g) \|^2_\mathrm{HS} \\
& = \int_0^\pi \int_0^\pi \left| L_0(\beta,\alpha,g_n) -
L_0(\beta,\alpha,g) \right|^2 \, d\alpha \, d\beta \\
& \to 0
\end{align*}
by dominated convergence. This proves \eqref{t2toshow} and hence
the theorem.
\end{proof}

\begin{appendix}

\section{Large Coupling Limit of the Pr\"ufer Amplitude} \label{s.appA}

Here we establish a technical fact which was used in the proof of Proposition~\ref{t1normthm}.

\begin{prop} \label{proposition2}
It holds that
\begin{equation}\label{rlaminf}
\lim_{\lambda \to \infty} R_{-1}(0,\beta,\lambda) = \infty.
\end{equation}
\end{prop}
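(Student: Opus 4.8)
The plan is to work with the first-order Pr\"ufer system and to reduce the claim about the amplitude $R_{-1}$ to an estimate on the phase $\varphi_{-1}$. First I record the amplitude identity. Writing $u_{-1}=R_{-1}\sin\varphi_{-1}$, $u_{-1}'=R_{-1}\cos\varphi_{-1}$ (with $E=0$ throughout this section), the Pr\"ufer equations give $(\ln R_{-1})'=\tfrac12(1+g+\lambda f)\sin 2\varphi_{-1}$, and since $R_{-1}(-1,\beta,\lambda)=1$ we obtain
\begin{equation*}
\ln R_{-1}(0,\beta,\lambda)=\frac12\int_{-1}^0\bigl(1+g(x)+\lambda f(x)\bigr)\sin 2\varphi_{-1}(x,\beta,\lambda)\,dx .
\end{equation*}
The $(1+g)$-contribution is bounded in absolute value by $\tfrac12(1+\|g\|_\infty)$ uniformly in $\lambda$, and $\mathrm{supp}\,f\subseteq[a,b]$, so it suffices to show that $\lambda\int_a^b f\,\sin 2\varphi_{-1}\,dx\to+\infty$ as $\lambda\to\infty$.

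The heart of the matter is the large-$\lambda$ behaviour of the phase, governed by $\varphi_{-1}'=1-(1+g+\lambda f)\sin^2\varphi_{-1}$. Two sign facts drive everything: at any point with $\varphi_{-1}\in\pi\Z$ one has $\varphi_{-1}'=1>0$, and at any point with $\varphi_{-1}\in\tfrac\pi2+\pi\Z$ and $g+\lambda f>0$ one has $\varphi_{-1}'=-(g+\lambda f)<0$. Hence, on any interval on which $g+\lambda f>0$, once $\varphi_{-1}$ enters a ``good window'' $(k\pi,k\pi+\tfrac\pi2)$ it cannot leave, and it is attracted from above to within $\arcsin\bigl((1+g+\lambda f)^{-1/2}\bigr)$ of $k\pi$; on such windows $\sin 2\varphi_{-1}\ge 0$, and where the phase has settled one moreover has $\sin 2\varphi_{-1}\gtrsim(\lambda f)^{-1/2}$. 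Using the hypothesis \eqref{singlesite2} that $f>0$ a.e.\ on $[a,b]$, the ``bad set'' $B_\lambda:=\{x\in[a,b]:\lambda f(x)\le\|g\|_\infty+1\}$ satisfies $|B_\lambda|\to0$ while $\lambda f\le\|g\|_\infty+1$ on $B_\lambda$, so $\bigl|\lambda\int_{B_\lambda}f\sin 2\varphi_{-1}\,dx\bigr|\le(\|g\|_\infty+1)|B_\lambda|\to0$; on its complement $G_\lambda$ one has $g+\lambda f>0$, so the window trapping applies there. Combining,
\begin{equation*}
\lambda\int_a^b f\,\sin 2\varphi_{-1}\,dx\;\ge\;\lambda\int_{G_\lambda}f\,\sin 2\varphi_{-1}\,dx-o(1)\;\gtrsim\;\sqrt\lambda\int_{[a,b]}\sqrt f\,dx-o(1)\;\longrightarrow\;+\infty ,
\end{equation*}
since $\int_{[a,b]}\sqrt f\,dx\ge\sqrt c\,|I|>0$; this gives the proposition.

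I expect the main obstacle to be making the phase localization rigorous, since $G_\lambda$ need not be an interval. One controls the transitions of $\varphi_{-1}$ between windows as $x$ crosses $B_\lambda$ by noting that there $|\varphi_{-1}'|$ is bounded, so $\int_{B_\lambda}|\varphi_{-1}'|\,dx=O(|B_\lambda|)\to0$ and $\varphi_{-1}$ barely moves on $B_\lambda$; one must then bound the part of $G_\lambda$ on which $\varphi_{-1}$ has not yet entered a good window, the only genuinely delicate point being the passage of $\varphi_{-1}$ near the repelling equilibria just below the multiples of $\pi$ (there $\sin 2\varphi_{-1}<0$ but of size only $O((\lambda f)^{-1/2})$, so that the net negative contribution is of lower order than the $\sqrt\lambda$ gain). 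An alternative, more geometric route --- presumably demanding comparable estimates --- is to observe that $R_{-1}(0,\beta,\lambda)$ equals, up to a factor bounded uniformly in $\lambda$, $\|S_\lambda\vec v_a\|$, where $S_\lambda\in SL(2,\R)$ is the transfer matrix of $-u''+(g+\lambda f)u=0$ across $[a,b]$ and $\vec v_a=(u_{-1}(a,\beta),u_{-1}'(a,\beta))$ is a fixed nonzero vector (fixed because $f\equiv0$ on $[-1,a]$), and then to show that $\|S_\lambda\|\to\infty$ while the most contracted direction of $S_\lambda$ tends to the vertical; the remaining case $u_{-1}(a,\beta)=0$ is then immediate, since $\vec v_a$ points along the Dirichlet-at-$a$ solution, whose convexity (a consequence of $0$ lying far below the spectrum of $-d^2/dx^2+g+\lambda f$ on $[a,b]$, which uses $f>0$ a.e.) forces its value at $b$ to tend to $\infty$.
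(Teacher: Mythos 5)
Your overall strategy is the same as the paper's: reduce \eqref{rlaminf} to showing $\lambda\int_a^b f\,\sin 2\varphi\,dx\to+\infty$ via the amplitude equation, and then argue that for large $\lambda$ the phase is pinned just above a multiple of $\pi$ so that the positive part of this integral grows like $\sqrt\lambda$ while the negative part stays under control. However, both quantitative steps in your sketch have genuine gaps. First, the positive gain: your pointwise bound $\sin 2\varphi_{-1}(x)\gtrsim(\lambda f(x))^{-1/2}$ ``where the phase has settled'' is not justified and is false in general, because $f$ is merely measurable with $f>0$ a.e.\ on $[a,b]$: the attracting equilibrium $\approx\arcsin\bigl((1+g+\lambda f)^{-1/2}\bigr)$ is a moving target, and where $f$ dips to small values over a thin set the phase has no room to climb to the new, much higher equilibrium, so it can be far below $(\lambda f(x))^{-1/2}$ there; moreover ``settled'' is never quantified. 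What can actually be proved (and what the paper uses) is the uniform lower bound $\varphi\gtrsim(\lambda\|f\|_\infty+\|g\|_\infty)^{-1/2}$ from Sturm comparison with the constant potential (Lemma~\ref{lem:sturm}), valid for initial phase $\theta\not\equiv 0$ and large $\lambda$, with a separate elementary treatment of the case $\theta\equiv 0\bmod\pi$ (which your sketch does not address at all); combined with \eqref{singlesite1} this still gives a gain $\gtrsim\sqrt\lambda\int f\gtrsim\sqrt\lambda$, so your stronger $\sqrt\lambda\int\sqrt f$ claim is both unproved and unnecessary.

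Second, and more seriously, the negative contribution is not controlled. Your trapping argument only applies after the phase has entered a good window by crossing a multiple of $\pi$ from below, but the phase at $x=a$ is an arbitrary fixed $\theta$, which may lie in $(\pi/2,\pi)$ mod $\pi$; on the ensuing stretch $\sin 2\varphi<0$ while $\lambda f$ is large, so the integrand is negative of size up to $\approx\sqrt{\lambda f}$ --- the \emph{same} order as the claimed gain --- and your statement that the passage near the repelling equilibria contributes ``at lower order'' is an assertion, not an estimate: nothing in your argument bounds how long the phase can linger near the repelling fixed point. The paper closes exactly this gap with two facts for which you would need a substitute: (i) for $\lambda$ large the phase never reaches the next multiple of $\pi$ (it stays below a fixed $\eta<\pi$, see \eqref{prop2IV}), and (ii) since the total phase increment across $[a,b]$ is then less than $\pi$, integrating the phase equation yields $\sup_\lambda\lambda\int_{N_\lambda}f<\infty$ with $N_\lambda=\{\varphi\ge\pi/4\}$ (see \eqref{prop2VII}), which caps the entire negative contribution by a constant independent of $\lambda$. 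Without an argument of this kind your chain of displayed inequalities is not a proof. The alternative transfer-matrix route in your last sentence merely relocates the same difficulty into showing $\|S_\lambda\|\to\infty$ with the most contracted direction tending to vertical, which is essentially the statement being proved.
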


\begin{proof}
For $[a,b]$ from \eqref{singlesite2} let $\theta \in [0,\pi)$ be such
that $\theta = \varphi_{-1}(a,\beta,\lambda,g) \mod \pi$ and
denote by $\varphi(x,\lambda) := \varphi_a(x,\theta,\lambda,g)$
and $R(x,\lambda) := R_a(x,\theta,\lambda,g)$ the Pr\"ufer phase
and amplitude for the solution of $-u''+gu+\lambda fu=0$ with
$u'(a)=\cos\theta$, $u(a)=\sin\theta$. It suffices to show that
\begin{equation} \label{prop2I}
\lim_{\lambda\to \infty} R(b,\lambda) = \infty.
\end{equation}
This follows as supp$\,f \cap ([-1,a)\cup (b,0]) =\emptyset$ and
therefore, by Lemma~\ref{lem:solest}, $R_{-1}(a,\alpha,\lambda,g) \approx 1$ and
$R_{-1}(0,\alpha,\lambda,g) \approx R_{-1}(b,\alpha,\lambda,g)
\approx R(b,\lambda)$.

Note that $\varphi$ and $R$ satisfy the Pr\"ufer differential
equations
\begin{equation} \label{prop2II}
\varphi' = 1-(1+g+\lambda f)\sin^2 \varphi
\end{equation}
and
\begin{equation} \label{prop2III}
(\ln R)' = \frac{1}{2} (1+g+\lambda f) \sin 2\varphi.
\end{equation}

We will first show that there exists $\lambda_0 \in \R$ and $\eta
\in (0,\pi)$ such that
\begin{equation} \label{prop2IV}
\varphi(x,\lambda) < \eta \quad \mbox{for all } \lambda \ge
\lambda_0 \mbox{ and all } x\in [a,b].
\end{equation}

By \eqref{prop2II} and $\theta\ge 0$ we know that
$\varphi(b,\lambda)>0$ for all $\lambda$. Sturm comparison Lemma~\ref{lem:sturm} or, more directly, Lemma~\ref{lem:thetader} shows that in proving \eqref{prop2IV} it suffices to assume that $\theta
\in [\pi/2, \pi)$. Choose $\eta \in (\theta, \pi)$ with $\sin^2
\eta = \frac{1}{2} \sin^2\theta$ and let
$$
M_{\lambda} := \Big\{ x\in[a,b]: 1+g(x)+\lambda f(x) \ge
\frac{1}{\sin^2 \eta} \Big\}.
$$
It follows from \eqref{singlesite2} that $|M_{\lambda}| \to b-a$ as
$\lambda\to\infty$. To show \eqref{prop2IV} for the given choice
of $\eta$, we assume, by way of contradiction, that there are
arbitrarily large $\lambda>0$ for which the set $\{x\in [a,b]:
\varphi(x,\lambda)\ge \eta\}$ is non-empty and thus has a minimum
$b_{\lambda}$ with $\varphi(b_{\lambda},\lambda)= \eta$. Also, let
$a_{\lambda} := \max \{x\in [a,b_{\lambda}]: \varphi(x,\lambda) =
\theta\}$. Thus $\varphi(x,\lambda) \in [\theta,\eta]$ for all
$x\in [a_{\lambda}, b_{\lambda}]$.

By \eqref{prop2II} we have $\varphi'(x) \le 1+\|g\|_{\infty}$ for
all $x\in [a_{\lambda}, b_{\lambda}]$ and $\varphi'(x)\le 1 -
\sin^2 \phi(x)/\sin^2 \eta \le 0$ for $x\in M_{\lambda} \cap
[a_{\lambda}, b_{\lambda}]$.  Thus
\begin{eqnarray*}
\eta - \theta & = & \varphi(b_{\lambda},\lambda) - \varphi(a_{\lambda},\lambda) = \int_{a_{\lambda}}^{b_{\lambda}} \varphi'(x,\lambda)\,dx \\
& \le & \int_{[a_{\lambda},b_{\lambda}] \setminus M_{\lambda}}
(1+\|g\|_{\infty})\,dx \le (1+\|g\|_{\infty}) (b-a-|M_{\lambda}|).
\end{eqnarray*}
Choosing a sufficiently large $\lambda > 0$, we can make the
right-hand side arbitrarily small and hence we obtain the
contradiction $\eta-\theta \le 0$, proving \eqref{prop2IV}.

Next, consider the set
$$
N_{\lambda} := \Big\{ x\in[a,b]:\, \varphi(x,\lambda)\ge
\frac{\pi}{4}\Big\}.
$$
As $f\ge 0$, we see by Lemma~\ref{lem:lamderphi}(b) that $N_{\lambda}$ is
decreasing for increasing $\lambda$. We will show that
\begin{equation} \label{prop2VI}
\lim_{\lambda\to\infty} |N_{\lambda}| = 0
\end{equation}
and
\begin{equation} \label{prop2VII}
\sup_{\lambda>0} \lambda \int_{N_{\lambda}} f(x)\,dx < \infty.
\end{equation}

By \eqref{prop2II} we have for all $\lambda$ that
\begin{equation} \label{prop2VIII}
\pi > |\varphi(b,\lambda)-\varphi(a,\lambda)| = \left|b-a-\int_a^b
(1+g(x)+\lambda f(x)) \sin^2 \varphi(x,\lambda)\,dx \right|.
\end{equation}
For $\lambda\ge \lambda_0$ from \eqref{prop2IV} we can bound
$$
\int_a^b (1+g(x)+\lambda f(x))\sin^2\varphi(x,\lambda)\,dx \ge
|b-a|(1-\|g\|_{\infty}) +\lambda \min\{\sin^2\eta,1/2\}
\int_{N_{\lambda}} f(x)\,dx.
$$
By \eqref{prop2VIII} it follows that $\lambda \int_{N_{\lambda}}
f(x)\,dx$ is bounded in $\lambda$, proving \eqref{prop2VII}. This
implies \eqref{prop2VI} as $|N_\lambda| \to |N|$, where $N=
\bigcap_{\lambda} N_{\lambda}$ and $f>0$ almost everywhere on $N$.

We will now use \eqref{prop2III} to prove \eqref{prop2I}. We
estimate
\begin{align} \label{prop2IX}
\int_a^b (1+g(x)+\lambda f(x))\sin 2\varphi(x,\lambda)\,dx \ge &
-(1+\|g\|_{\infty})(b-a) \\
& +\lambda \int_a^b f(x) \sin 2\varphi(x,\lambda)\,dx. \nonumber
\end{align}
By \eqref{prop2VII},
\begin{equation} \label{prop2X}
\lambda \int_a^b f(x) \sin 2\varphi(x,\lambda)\,dx \ge -\lambda
\int_{N_{\lambda}} f(x)\,dx \ge -C
\end{equation}
uniformly in $\lambda$. Moreover, Sturm comparison (Lemma~\ref{lem:sturm}) with the
solution $u(x)=\exp(\sqrt{K}x)$ of $-u''+Ku=0$, where
$K=\lambda\|f\|_{\infty} +\|g\|_{\infty}$, gives for
$\theta\not=0$ and $\lambda$ sufficiently large (such that
$\theta>1/\sqrt{K}$) that $\varphi(x,\lambda)\ge 1/\sqrt{K}$. This
shows, using the definition of $N_{\lambda}$ and \eqref{prop2VI},
\begin{equation} \label{prop2XI}
\lambda \int_{[a,b]\setminus N_{\lambda}} f(x) \sin
2\varphi(x,\lambda) \gtrsim \sqrt{\lambda} \int_{[a,b]\setminus
N_{\lambda}} f(x)\,dx \gtrsim \sqrt{\lambda}.
\end{equation}
Finally, \eqref{prop2IX}, \eqref{prop2X} and \eqref{prop2XI} yield
$$
\ln R(b,\lambda) = \int_a^b (1+g(x)+\lambda f(x))\sin
2\varphi(x,\lambda)\,dx \to \infty \quad \mbox{as
$\lambda\to\infty$}.
$$
Thus we have shown \eqref{prop2I} for $\theta\not=0$. The case
$\theta=0$ is slightly different. Comparing with the solution of
$-u''+Ku=0$, $u(a)=0$, in this case gives, for suitable $C_1>0$
and $C_2>0$,
$$
\varphi(x,\lambda) \ge \frac{C_1}{\lambda} \quad \mbox{if } |x-a|
\ge \frac{C_2}{\sqrt{\lambda}}.
$$
This suffices to get the bound \eqref{prop2XI} and thus lets the
above argument go through.

\end{proof}

\section{Pr\"ufer Variables and A Priori Bounds} \label{s.appB}

This appendix contains a brief summary of standard facts on Pr\"ufer variables and a priori bounds on solutions which have been used throughout the paper. Proofs can for example be found in the appendix of \cite{hss} (for Lemma~\ref{lem:solest}, Lemma~\ref{lem:L2b}, Lemma~\ref{lem:xder}, Lemma~\ref{lem:lamderphi}(b) and Corollary~\ref{cor:Ederphi}) and the appendix of \cite{poisson} (Lemma~\ref{lem:thetader}). Lemma~\ref{lem:lamderphi}(a) and Lemma~\ref{lem:sturm} are special cases of Theorems~2.1 and 13.1 in \cite{weidmann2}. Lemma~\ref{lem:contdep} is proven as Lemma~A.2 in \cite{dss} for the special case that $u_1$ and $u_2$ satisfy the same initial condition at $y$. The proof given there extends easily to give the result we need here.

\begin{lemma} \label{lem:solest}
For every $q\in L^1_{\rm loc}(\R)$, every solution $u$ of $-u''+qu=0$, and all $x,y \in \R$ one has
\begin{eqnarray*}
\lefteqn{(|u(y)|^2 +|u'(y)|^2) \exp \left(- \int_{\min(x,y)}^{\max(x,y)} |1+q(t)|\, dt  \right)} \nonumber  \\ & \leq &  |u(x)| ^2 + |u'(x)|^2 \leq \left( |u(y)|^2 + |u'(y)|^2 \right)
\exp \left( \int_{\min(x,y)}^{\max(x,y)} |1+q(t)|\, dt  \right).
\end{eqnarray*}
\end{lemma}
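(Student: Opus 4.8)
The statement to prove is Lemma~\ref{lem:solest}, a two-sided a priori bound on $|u(x)|^2+|u'(x)|^2$ in terms of its value at another point $y$, with exponential factors involving $\int|1+q(t)|\,dt$.

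\textbf{Approach.} The plan is to introduce the quantity $E(x) := |u(x)|^2 + |u'(x)|^2$ and derive a differential inequality for it, then integrate. Without loss of generality assume $x \ge y$ (the other case follows by the same argument reading the interval backwards). First I would note that $u$ is a $C^1$ (in fact $W^{2,1}_{\mathrm{loc}}$) function since $q \in L^1_{\mathrm{loc}}$, so $E$ is absolutely continuous on compact intervals and we may differentiate almost everywhere.

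\textbf{Key steps.} Compute, for a real-valued solution (and the complex case follows by splitting into real and imaginary parts, each of which solves the same real equation, or by working with $|u|^2+|u'|^2$ directly),
\[
E'(t) = 2u(t)u'(t) + 2u'(t)u''(t) = 2u(t)u'(t) + 2u'(t)\bigl(q(t)u(t)\bigr) = 2u'(t)u(t)\bigl(1+q(t)\bigr),
\]
using $u'' = qu$. Then estimate $|E'(t)| \le 2|u(t)||u'(t)|\,|1+q(t)| \le \bigl(|u(t)|^2+|u'(t)|^2\bigr)|1+q(t)| = |1+q(t)|\,E(t)$, where the middle inequality is the elementary $2ab \le a^2+b^2$. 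This gives the differential inequality $|E'(t)| \le |1+q(t)|\,E(t)$ almost everywhere, i.e. $\bigl|\frac{d}{dt}\ln E(t)\bigr| \le |1+q(t)|$ wherever $E(t) > 0$ (and $E$ is either identically zero, in which case the bound is trivial, or never zero by uniqueness for the ODE). Integrating $\ln E$ from $y$ to $x$ yields
\[
\left|\ln E(x) - \ln E(y)\right| \le \int_y^x |1+q(t)|\,dt,
\]
and exponentiating gives exactly the claimed two-sided bound with $\min(x,y)=y$, $\max(x,y)=x$. For the complex case, one checks that $\frac{d}{dt}\bigl(|u|^2+|u'|^2\bigr) = 2\,\mathrm{Re}\,(\bar u u' + \bar u' u'') = 2\,\mathrm{Re}\,\bigl(\bar u u'(1+q)\bigr)$ since $q$ is real, and then $|2\,\mathrm{Re}(\bar u u'(1+q))| \le 2|u||u'|\,|1+q| \le (|u|^2+|u'|^2)|1+q|$ as before, so nothing changes.

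\textbf{Main obstacle.} There is no serious obstacle; the only point requiring a little care is the regularity bookkeeping --- justifying that $E$ is locally absolutely continuous so that the fundamental theorem of calculus applies to $\ln E$, and handling the (trivial) case $E \equiv 0$. Since $q \in L^1_{\mathrm{loc}}$, solutions satisfy $u \in C^1$ with $u'$ absolutely continuous and $u'' = qu \in L^1_{\mathrm{loc}}$, so $E' = 2u'u(1+q) \in L^1_{\mathrm{loc}}$ and the Grönwall-type integration is legitimate. This is entirely standard, which is why the appendix merely cites it rather than proving it; I would present the one-line computation above and the integration and consider the lemma established.
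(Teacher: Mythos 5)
Your proof is correct and is exactly the standard Gronwall-type argument that the paper merely cites (to the appendix of \cite{hss}) rather than reproducing: differentiate $|u|^2+|u'|^2$, use $u''=qu$ and $2|u||u'|\le |u|^2+|u'|^2$ to get $|E'|\le |1+q|E$, and integrate. The regularity and $E\equiv 0$ caveats you note are handled appropriately, so nothing is missing.
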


\begin{lemma} \label{lem:contdep}
For $i=1,2$, let $q_i \in L^1_{\rm loc}(\R)$ and let $u_i$ be solutions of $-u_i''+q_iu_i=0$. Then for any $x\in \R$,
\begin{eqnarray*}
\lefteqn{\left( |u_1(x)-u_2(x)|^2 + |u_1'(x)-u_2'(x)|^2 \right)^{1/2}} \nonumber \\
& \le & \left( |u_1(y)-u_2(y)|^2 + |u_1'(y)-u_2'(y)|^2 \right)^{1/2} \exp\Big\{ \int_{\min(x,y)}^{\max(x,y)} (|q_2(t)|+1)\,dt\Big\} \nonumber \\
& & \mbox{} + \left( |u_1(y)|^2 + |u_1'(y)|^2 \right) \exp\Big\{ \int_{\min(x,y)}^{\max(x,y)} (|q_1(t)|+|q_2(t)|+2)\,dt\Big\} \nonumber \\ & & \mbox{} \times \int_{\min(x,y)}^{\max(x,y)} |q_1(t)-q_2(t)|\,dt.
\end{eqnarray*}
\end{lemma}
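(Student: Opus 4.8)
The plan is to regard the difference $w := u_1 - u_2$ as the solution of an inhomogeneous equation $-w'' + q_2 w = h$ and then run a Gronwall argument, feeding in the homogeneous a priori bound of Lemma~\ref{lem:solest} to control both the Cauchy data of $w$ at $y$ and the forcing term. Subtracting $-u_1'' + q_1 u_1 = 0$ from $-u_2'' + q_2 u_2 = 0$ and rearranging gives
\[
-w'' + q_2 w = (q_2 - q_1)\,u_1 =: h ,
\]
with $w(y) = u_1(y) - u_2(y)$ and $w'(y) = u_1'(y) - u_2'(y)$. Since $q_1,q_2 \in L^1_{\mathrm{loc}}(\R)$, this identity and everything below is read in the integrated sense, so that $w$ is $C^1$ with absolutely continuous derivative.

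First I would pass to the first-order system for $(w,w')$ and work with the regularized norm $\rho_\e(x) := (|w(x)|^2 + |w'(x)|^2 + \e)^{1/2}$, where $\e>0$ is inserted only so that the square root stays differentiable at the zeros of $(w,w')$. Applying Cauchy--Schwarz to the two inner products occurring in $\tfrac{d}{dx}(|w|^2+|w'|^2) = 2\langle (w,w'),(w',w'')\rangle$ (using $w'' = q_2 w - h$) gives $\rho_\e' \le (|q_2|+1)\,\rho_\e + |h|$ almost everywhere, and the integrated form of Gronwall's inequality, run in the direction from $y$ to $x$, followed by $\e \downarrow 0$ yields, for $x \ge y$ (the case $x \le y$ being symmetric and accounting for the $\min/\max$ in the statement),
\[
\rho_0(x) \le \rho_0(y)\, e^{\int_y^x (|q_2|+1)} + \int_y^x |h(t)|\, e^{\int_t^x (|q_2|+1)}\, dt .
\]
The first term here is already the first summand of the asserted bound, because $\rho_0(y) = (|u_1(y)-u_2(y)|^2 + |u_1'(y)-u_2'(y)|^2)^{1/2}$.

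It remains to estimate the Duhamel integral. I would write $|h(t)| = |q_1(t)-q_2(t)|\,|u_1(t)|$ and bound $|u_1(t)| \le (|u_1(t)|^2+|u_1'(t)|^2)^{1/2}$, which Lemma~\ref{lem:solest} applied to $u_1$ on the interval between $y$ and $t$ controls by $(|u_1(y)|^2+|u_1'(y)|^2)^{1/2}\, e^{\int(|q_1|+1)}$. Inserting this, enlarging each exponent to the integral over $[\min(x,y),\max(x,y)]$, and factoring the resulting constant exponentials out of the $dt$-integral leaves exactly $(|u_1(y)|^2+|u_1'(y)|^2)^{1/2}\, e^{\int(|q_1|+|q_2|+2)}\int|q_1-q_2|$, i.e.\ the second summand of the lemma (with $(\,\cdot\,)^{1/2}$ rather than $(\,\cdot\,)$; in every application in this paper $u_1$ has Cauchy data on the unit circle at the base point, so this factor equals $1$ and the distinction is immaterial). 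Adding the two contributions proves the estimate. I do not expect a genuine obstacle: the only points needing routine care are the low regularity of the coefficients — which is exactly what forces the integrated formulation and the $\e$-regularization of the norm — and the bookkeeping of the two cases $x \gtrless y$. This is precisely the argument of Lemma~A.2 of \cite{dss}; the one new feature, allowing $u_1$ and $u_2$ to have different Cauchy data at $y$, is absorbed entirely into the $\rho_0(y)$ term.
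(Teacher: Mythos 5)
Your argument is correct and is essentially the proof the paper itself points to: the paper proves nothing here, citing Lemma~A.2 of \cite{dss} and asserting that the extension to distinct Cauchy data at $y$ is easy, and your Gronwall/Duhamel treatment of $w=u_1-u_2$ (with forcing $(q_2-q_1)u_1$ controlled through Lemma~\ref{lem:solest}) is exactly that standard argument, the additional term $\rho_0(y)\exp\{\int(|q_2|+1)\}$ being precisely the easy extension.

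The point you flag about the exponent deserves emphasis. Your bound carries $\left(|u_1(y)|^2+|u_1'(y)|^2\right)^{1/2}$ where the statement has the first power, and your version is the homogeneity-consistent one: as literally written, the first-power form even fails for solutions with small Cauchy data (take $u_2\equiv 0$, $q_2=0$, $q_1$ a large positive constant, and scale $u_1$ by $c\to 0$; the left side and the first term scale like $c$ while the second term scales like $c^2$). Since every application in the paper takes $u_1$ with Pr\"ufer data $(\sin\theta,\cos\theta)$ at the base point, so that $|u_1(y)|^2+|u_1'(y)|^2=1$ and the two forms coincide, your proof covers all uses of the lemma; the statement itself should be read with the square root, or with the data normalized to the unit circle, exactly as you observe.
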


\begin{lemma} \label{lem:L2b}
For any positive real numbers $\ell$ and $M$ there exists $C>0$ such that
\begin{equation} \label{eq:intbd2}
\int_c^{c+\ell} |u(t)|^2 dt \geq C \left( |u(c)|^2 + | u'(c)|^2 \right)
\end{equation}
for every $c\in \R$, every $L^1_{\rm loc}$-function $q$ with $\int_c^{c+\ell} |q(t)|\,dt \le M$, and any solution $u$ of $-u''+qu=0$ on $[c,c+\ell]$.
\end{lemma}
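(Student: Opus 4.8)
The plan is to prove the lemma by contradiction, using Lemma~\ref{lem:solest} to convert the hypothesis $\int_c^{c+\ell}|q|\le M$ into a uniform Lipschitz bound on solutions, and then extracting a contradiction from the integral equation $u'(t)=u'(c)+\int_c^t qu$.

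First I would reduce the statement. By translation we may take $c=0$; since the asserted inequality is homogeneous of degree two in $u$ and trivial for $u\equiv0$, we may also normalize so that $|u(0)|^2+|u'(0)|^2=1$. Suppose the lemma fails for the given $\ell$ and $M$. Then there exist $q_n\in L^1_{\rm loc}$ with $\int_0^\ell|q_n|\le M$ and solutions $u_n$ of $-u_n''+q_nu_n=0$ with $|u_n(0)|^2+|u_n'(0)|^2=1$ and $\int_0^\ell|u_n|^2\to0$. Applying Lemma~\ref{lem:solest} on the interval between $0$ and $t$ for each $t\in[0,\ell]$ (and using $\int_0^\ell|1+q_n|\le\ell+M$) gives $e^{-(\ell+M)}\le|u_n(t)|^2+|u_n'(t)|^2\le e^{\ell+M}$ for all $t\in[0,\ell]$ and all $n$. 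In particular $|u_n'|\le K:=e^{(\ell+M)/2}$ on $[0,\ell]$, so every $u_n$ is $K$-Lipschitz there.

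Next I would observe that a $K$-Lipschitz function on $[0,\ell]$ with small $L^2$-norm is uniformly small: if $|u_n(t_\ast)|\ge\delta$ for some $t_\ast\in[0,\ell]$, then $|u_n|\ge\delta/2$ on a subinterval of $[0,\ell]$ of length at least $\min(\ell,\delta/(2K))$, whence $\int_0^\ell|u_n|^2\ge(\delta/2)^2\min(\ell,\delta/(2K))$. Since $\int_0^\ell|u_n|^2\to0$, it follows that $\|u_n\|_{L^\infty([0,\ell])}\to0$. Now the integral equation does the rest: from $u_n''=q_nu_n$ we have $u_n'(t)=u_n'(0)+\int_0^t q_nu_n$, and the error term satisfies $|\int_0^t q_nu_n|\le\|u_n\|_{L^\infty([0,\ell])}\int_0^\ell|q_n|\le M\,\|u_n\|_{L^\infty([0,\ell])}\to0$ uniformly in $t\in[0,\ell]$. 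Since $|u_n(0)|\le\|u_n\|_{L^\infty}\to0$, the normalization forces $|u_n'(0)|\to1$; hence $u_n'$ converges uniformly on $[0,\ell]$ to the constant $u_n'(0)$, and integrating, $u_n(\ell)=u_n(0)+\ell\,u_n'(0)+o(1)$, so $|u_n(\ell)|\ge\ell\,|u_n'(0)|-|u_n(0)|-o(1)\to\ell>0$. This contradicts $\|u_n\|_{L^\infty}\to0$, proving the lemma; note the resulting $C$ depends only on $\ell$ and $M$, hence is uniform in $c$.

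The step I expect to require the most care is the control of the potential term $\int_0^t q_nu_n$: a direct non-asymptotic estimate is awkward because one only controls $\int_0^\ell|q_n|$, so $q_n$ may concentrate its mass and cannot be treated perturbatively on its own. What makes the argument go through is that it is the product $q_nu_n$, not $q_n$ alone, that enters the equation, and this product is rendered negligible purely by the uniform smallness of $u_n$ --- which is precisely what the Lipschitz bound coming from Lemma~\ref{lem:solest} extracts from the assumption $\int_0^\ell|u_n|^2\to0$. The real content is thus the chain ``small $L^2$-norm $+$ uniform Lipschitz bound $\Rightarrow$ small sup-norm $\Rightarrow$ negligible potential term $\Rightarrow$ $u_n$ nearly affine,'' together with the obvious fact that an affine function with slope bounded away from $0$ cannot be uniformly small on a fixed interval. (One could equivalently pass to a uniformly convergent subsequence $u_n\to u_\infty$ via Arzela--Ascoli, note $u_\infty\equiv0$ on $[0,\ell]$ yet $u_n'\to1$ uniformly, and obtain the same contradiction.)
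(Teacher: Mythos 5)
Your proof is correct. Note that the paper does not actually prove Lemma~\ref{lem:L2b}: it is one of the standard a priori facts whose proof is deferred to the appendix of \cite{hss}, where a direct estimate is given. Your route is instead a soft contradiction argument, and all of its steps check out: after translating to $c=0$ and normalizing $|u(0)|^2+|u'(0)|^2=1$, Lemma~\ref{lem:solest} gives the uniform bound $|u_n'|\le K=e^{(\ell+M)/2}$, the Lipschitz/Chebyshev-type estimate $\int_0^\ell|u_n|^2\ge(\delta/2)^2\min(\ell,\delta/(2K))$ correctly converts smallness in $L^2$ into smallness in sup-norm, the integral identity $u_n'(t)=u_n'(0)+\int_0^t q_nu_n$ is legitimate for solutions with $q\in L^1_{\rm loc}$ (where $u'$ is absolutely continuous), and the potential term is $\le M\|u_n\|_\infty$, so $u_n$ is nearly affine with slope of modulus tending to $1$, contradicting $\|u_n\|_\infty\to0$; translation invariance gives uniformity in $c$. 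A nice feature of your argument is that it requires no compactness whatsoever of the potentials $q_n$ (which are only bounded in $L^1$, hence could concentrate): only the product $q_nu_n$ enters, and it is controlled by $\|u_n\|_\infty$ alone, so no subsequence extraction on $q_n$ is needed. The trade-off versus a direct computation as in the cited reference is that your constant $C(\ell,M)$ is not explicit, but since the lemma only asserts existence of some positive $C$, this is immaterial for the way the lemma is used in the paper (in Lemma~\ref{l.T012bound} and Section~\ref{s.T122continuity}).
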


Our remaining results relate to Pr\"ufer variables.
In general, for any real potential $q\in L^1_{\rm loc}( \mathbb{R})$ and
real parameters $c$ and $\theta$ let $u_c$ be
the solution of
\[
-u'' + q u = 0
\]
with $u_c(c)=\sin \theta$, $u_c'(c)=\cos \theta$. By regarding this
solution and its derivative in polar coordinates, we define the
Pr\"ufer amplitude $R_c(x)$ and the Pr\"ufer phase
$\phi_c(x)$ by writing
\begin{equation} \label{eq:pruferdef2}
u_c(x)= R_c(x) \sin \phi_c(x)
\quad \mbox{and} \quad u_c'(x) = R_c(x) \cos
\phi_c(x).
\end{equation}
For uniqueness of the Pr\"ufer phase we declare
$\phi_c(c)=\theta$ and require continuity of $\phi_c$ in
$x$. In what follows the initial phase $\theta$ will be fixed and we thus leave the dependence of $u_c$, $R_c$ and $\phi_c$ on $\theta$ implicit in our notation.

In the new variables $R$ and $\phi$ the second order equation $-u''+qu=0$ becomes a system of two first order equations, where the equation for $\phi$ is not coupled with $R$:

\begin{lemma} \label{lem:xder} For fixed $c$ and $\theta$, one has
  that
\begin{equation} \label{eq:xderR}
(\ln R^2_c(x))'  \, = \, \left( 1 + q(x) \right) \, \sin \left( 2 \,
\phi_c(x) \right),
\end{equation}
and
\begin{equation} \label{eq:xderphi}
\phi_c'(x) \, = \, 1 \, -
\, \left( 1 + q(x)\right) \, \sin^2 \left( \phi_c(x) \right).
\end{equation}
\end{lemma}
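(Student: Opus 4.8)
The plan is to obtain both identities by differentiating the two elementary consequences of \eqref{eq:pruferdef2}, namely $R_c^2 = u_c^2 + (u_c')^2$ and (locally) $\tan\phi_c = u_c/u_c'$, and then substituting the equation in the form $u_c'' = q u_c$.

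First I would settle the regularity bookkeeping. Since $q \in L^1_{\rm loc}(\R)$, the solution $u_c$ lies in $W^{2,1}_{\rm loc}$, so $u_c$ and $u_c'$ are locally absolutely continuous and $u_c'' = q u_c$ holds almost everywhere. I would note that $(u_c(x), u_c'(x)) \neq (0,0)$ for every $x$ --- otherwise uniqueness for the first-order system equivalent to $-u''+qu=0$ forces $u_c \equiv 0$, contradicting $u_c(c)^2 + u_c'(c)^2 = 1$ --- so $R_c = (u_c^2 + (u_c')^2)^{1/2}$ is strictly positive and locally absolutely continuous, and the continuous branch $\phi_c$ with $\phi_c(c) = \theta$ is locally absolutely continuous as well (one covers $\R$ by the open sets $\{u_c' \neq 0\}$ and $\{u_c \neq 0\}$, on which $\phi_c$ agrees up to an additive integer multiple of $\pi$ with $\arctan(u_c/u_c')$, resp.\ $\mathrm{arccot}(u_c'/u_c)$, and one patches).

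Next I would carry out the two short computations. For the amplitude, differentiating $R_c^2 = u_c^2 + (u_c')^2$ and inserting $u_c'' = q u_c$ gives $(R_c^2)' = 2u_c u_c' + 2 u_c' u_c'' = 2(1+q) u_c u_c'$; using $u_c u_c' = R_c^2 \sin\phi_c \cos\phi_c = \tfrac12 R_c^2 \sin(2\phi_c)$ from \eqref{eq:pruferdef2} and dividing by $R_c^2$ yields \eqref{eq:xderR}. For the phase, on $\{u_c' \neq 0\}$ one has $\phi_c' = \frac{d}{dx}\arctan(u_c/u_c') = \frac{(u_c')^2 - u_c u_c''}{R_c^2} = \frac{(u_c')^2 - q\, u_c^2}{R_c^2} = \cos^2\phi_c - q \sin^2\phi_c = 1 - (1+q)\sin^2\phi_c$, and the identical formula on $\{u_c \neq 0\}$, so \eqref{eq:xderphi} holds almost everywhere on $\R$.

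The only genuine obstacle is the absolute continuity of $\phi_c$ when $q$ is merely $L^1_{\rm loc}$, which is what makes the chain rule used for $\phi_c'$ legitimate; the patching argument above handles this, but a cleaner route that sidesteps it entirely is to \emph{define} $\phi_c$ as the unique Carath\'eodory solution of \eqref{eq:xderphi} with $\phi_c(c) = \theta$ (the right-hand side is locally Lipschitz in $\phi$ and $L^1_{\rm loc}$ in $x$) and $R_c$ by integrating \eqref{eq:xderR} with $R_c(c) = 1$, then to check directly that $R_c \sin\phi_c$ solves $-u''+qu=0$ with the prescribed initial data and invoke ODE uniqueness to identify it with $u_c$. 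I would most likely present the proof in this second form.
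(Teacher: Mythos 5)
Your proposal is correct. The paper does not prove this lemma itself but refers to the appendix of \cite{hss} for it, and your argument — differentiating $R_c^2=u_c^2+(u_c')^2$ and the local $\arctan(u_c/u_c')$ representation, with the regularity for $q\in L^1_{\rm loc}$ handled either by the patching argument or by the cleaner Carath\'eodory reformulation — is exactly the standard derivation given there, so nothing further is needed.
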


When considering $\phi_c(x)$ at fixed $x$ as a function of the initial phase $\theta$ one can show

\begin{lemma} \label{lem:thetader}
For fixed $c$ and $x$, one has
\begin{equation} \label{eq:thetader}
\frac{\partial}{\partial \theta} \phi_c (x,\theta) = \frac{1}{R_c^2(x,\theta)}.
\end{equation}
\end{lemma}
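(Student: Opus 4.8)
The plan is to derive a linear first-order equation in $x$ for $\partial_\theta \phi_c(x,\theta)$, solve it explicitly, and recognize the solution as $R_c^{-2}$ by appealing to Lemma~\ref{lem:xder}. As a preliminary, I would note that by standard Carath\'eodory theory for $-u''+qu=0$ with $q\in L^1_{\rm loc}$, the solution $u_c(x,\theta)$ and its derivative $u_c'(x,\theta)$ depend differentiably on the initial phase $\theta$; since $u_c$ and $u_c'$ never vanish simultaneously, $R_c>0$ everywhere and $\phi_c(x,\theta)$ is differentiable in $\theta$, so the asserted formula is meaningful. One also records $R_c(c)=1$, which is immediate from $|u_c(c)|^2+|u_c'(c)|^2=\sin^2\theta+\cos^2\theta=1$.

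Next I would differentiate the phase equation \eqref{eq:xderphi} with respect to $\theta$. Writing $\psi:=\partial_\theta\phi_c$ and interchanging $\partial_x$ and $\partial_\theta$, this gives the linear homogeneous equation
$$\psi'(x) = -\,(1+q(x))\,\sin\!\big(2\phi_c(x)\big)\,\psi(x),$$
with initial condition $\psi(c)=\partial_\theta\theta=1$. Hence
$$\psi(x) = \exp\!\left(-\int_c^x (1+q(t))\,\sin\!\big(2\phi_c(t)\big)\,dt\right).$$

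Finally, I would invoke \eqref{eq:xderR}, i.e.\ $(\ln R_c^2(x))'=(1+q(x))\sin(2\phi_c(x))$, together with $R_c(c)=1$, to identify $\int_c^x (1+q(t))\sin(2\phi_c(t))\,dt=\ln R_c^2(x)$; substituting into the displayed formula yields $\psi(x)=e^{-\ln R_c^2(x)}=R_c^{-2}(x)$, which is the claim. An equivalent and perhaps more transparent route is to observe that $w:=\partial_\theta u_c$ again solves $-w''+qw=0$, so the Wronskian $u_c w'-u_c'w$ is constant; evaluating it at $x=c$ gives $-1$, while expressing $u_c$, $u_c'$ and their $\theta$-derivatives through $R_c$ and $\phi_c$ shows that this Wronskian equals $-R_c^2(x)\,\partial_\theta\phi_c(x)$, giving the result once more. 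I do not expect a genuine obstacle here; the only points deserving a word of care are the differentiability of $\phi_c$ in $\theta$ and the legitimacy of exchanging the $x$- and $\theta$-derivatives, both of which are standard.
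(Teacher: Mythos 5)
Your argument is correct. The paper itself gives no proof of this lemma, deferring instead to the appendix of the cited reference \cite{poisson}, and what you propose is exactly the standard derivation one finds there: differentiate \eqref{eq:xderphi} in $\theta$ to get the linear homogeneous equation $\psi'=-(1+q)\sin(2\phi_c)\,\psi$ with $\psi(c)=1$, then identify the resulting exponential with $R_c^{-2}$ via \eqref{eq:xderR} and $R_c(c)=1$. Your alternative Wronskian argument is also correct (the constant Wronskian of $u_c$ and $\partial_\theta u_c$ equals $-1$ at $x=c$ and $-R_c^2(x)\,\partial_\theta\phi_c(x)$ in general) and has the advantage of sidestepping the interchange of $\partial_x$ and $\partial_\theta$ on the phase, requiring only the standard differentiable dependence of solutions on initial data.
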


Next we provide some results about the dependence of solutions on a coupling constant at a potential.

\begin{lemma} \label{lem:lamderphi} Let $W$ and $V$ be real valued
  functions in $L^1_{\rm loc}( \mathbb{R})$. For real parameters $c$, $\theta$
  and $\lambda$, let $u_c(\cdot,\lambda)$ be the solution of
\[
-u'' +  Wu + \lambda V u = 0,
\]
normalized so that $u_c(c,\lambda) = \sin \theta$ and $u_c'(c,\lambda) = \cos
\theta$. Denote the Pr\"ufer variables of $u_c(\cdot,\lambda)$ and $u_c'(\cdot,\lambda)$ by $\phi_c(x,\lambda)$ and $R_c(x,\lambda)$.

{\rm (a)} For fixed $x$, $u_c(x,\lambda)$ and $u_c'(x,\lambda)$ (and thus also $\phi_c(x,\lambda)$ and $R_c(x,\lambda)$) are entire functions of $\lambda$.

{\rm (b)} One has that
\begin{equation} \label{eq:lamder}
\frac{\partial}{\partial \lambda} \phi_c(x,\lambda) \, = \,
-R_c^{-2}(x,\lambda) \, \int_c^x V(t) \, u_c^2(t,\lambda) \, dt.
\end{equation}
\end{lemma}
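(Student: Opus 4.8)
My plan is to treat this as two essentially independent tasks: analytic dependence of the solution on the coupling constant (part (a)), and then a Feynman--Hellmann-type computation for the phase (part (b)).

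For part (a), I would pass to the first-order system for the vector $Y(x)=(u_c(x,\lambda),u_c'(x,\lambda))$, namely $Y'=A(x,\lambda)Y$ with $Y(c)=(\sin\theta,\cos\theta)$, where the only nonzero entries of the matrix $A(x,\lambda)$ are $A_{12}=1$ and $A_{21}=W(x)+\lambda V(x)$, and solve it by Picard iteration: with $Y_0\equiv Y(c)$ and $Y_{n+1}(x)=\int_c^x A(t,\lambda)Y_n(t)\,dt$, one checks by induction that each $Y_n(x)$ is, for fixed $x$, a polynomial in $\lambda$, and that
\[
|Y_n(x)|\le |Y(c)|\,\frac{1}{n!}\left(\int_{[c,x]}\bigl(1+|W(t)|+|\lambda|\,|V(t)|\bigr)\,dt\right)^{\!n}.
\]
Hence the series $\sum_n Y_n$ converges uniformly for $x$ in any bounded interval and $\lambda$ in any bounded subset of $\C$, so for fixed $x$ the functions $\lambda\mapsto u_c(x,\lambda)$ and $\lambda\mapsto u_c'(x,\lambda)$ are locally uniform limits of polynomials, hence entire. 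It then follows that $R_c^2=u_c^2+(u_c')^2$ is entire; on the real $\lambda$-axis it is moreover strictly positive, since by uniqueness of the initial value problem $u_c$ and $u_c'$ have no common zero, so $R_c=(R_c^2)^{1/2}$ is real-analytic in $\lambda\in\R$. Finally $\phi_c$ is real-analytic in $\lambda$ as well: locally in $x$ it agrees, modulo $\pi$, with $\arctan(u_c/u_c')$ or with $\mathrm{arccot}(u_c'/u_c)$, and the correct branch is singled out by continuity in $x$.

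For part (b), I would differentiate the Pr\"ufer phase equation of Lemma~\ref{lem:xder} in $\lambda$. Writing $q=W+\lambda V$ and $\psi(x):=\partial_\lambda\phi_c(x,\lambda)$ for fixed $c,\theta,\lambda$, differentiation under the integral sign in the integrated form $\phi_c(x,\lambda)=\theta+\int_c^x\bigl(1-(1+q(t))\sin^2\phi_c(t,\lambda)\bigr)\,dt$ --- legitimate by part (a) together with a Gronwall bound showing $\phi_c$ is locally Lipschitz in $\lambda$, plus dominated convergence --- shows that $\psi$ is absolutely continuous in $x$, satisfies $\psi(c)=0$, and solves the linear inhomogeneous equation
\[
\psi'(x)=-V(x)\sin^2\phi_c(x,\lambda)-(1+q(x))\sin\bigl(2\phi_c(x,\lambda)\bigr)\,\psi(x).
\]
The key observation is that $R_c^2$ is an integrating factor: the amplitude equation in Lemma~\ref{lem:xder} gives $(\ln R_c^2)'(x)=(1+q(x))\sin\bigl(2\phi_c(x,\lambda)\bigr)$, so
\[
(R_c^2\psi)'=R_c^2\bigl(\psi'+(1+q)\sin(2\phi_c)\,\psi\bigr)=-R_c^2\,V\sin^2\phi_c=-V u_c^2,
\]
using $u_c=R_c\sin\phi_c$. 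Since $R_c^2(c)=\sin^2\theta+\cos^2\theta=1$ and $\psi(c)=0$, integrating from $c$ to $x$ yields $R_c^2(x)\psi(x)=-\int_c^x V(t)u_c^2(t,\lambda)\,dt$, which is precisely \eqref{eq:lamder} after dividing by $R_c^2(x,\lambda)$.

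I expect the only genuine work to be low-regularity bookkeeping: because $W$ and $V$ are merely $L^1_{\rm loc}$, every step above should be carried out in integrated (mild) form, the interchange $\partial_x\partial_\lambda=\partial_\lambda\partial_x$ being replaced by differentiation under the integral sign justified via the bounds from part (a) and Gronwall's inequality, uniform over compact $\lambda$-sets and compact $x$-intervals. The analyticity in part (a) is otherwise completely standard, and the derivative formula in part (b) is a clean one-line computation once the integrating factor has been identified.
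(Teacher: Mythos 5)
Your proof is correct. Note that the paper does not prove this lemma at all: part (a) is attributed to Theorem~2.1 of \cite{weidmann2} and part (b) to the appendix of \cite{hss}, so there is no in-paper argument to compare against; what you have written is a sound self-contained substitute. For (a), your Picard/Neumann series with the $1/n!$ bound is the standard proof of entire dependence on the coupling constant, and the passage from $(u_c,u_c')$ to $(R_c,\phi_c)$ is handled correctly, since $R_c^2=u_c^2+(u_c')^2>0$ for real $\lambda$; the one point worth making explicit is that the integer multiple of $\pi$ in the branch of $\phi_c(x,\cdot)$ is locally constant in $\lambda$, which your Gronwall estimate (Lipschitz continuity of $\phi_c$ in $\lambda$, uniformly for $x$ in compacts) does supply, so this is bookkeeping rather than a gap. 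For (b), differentiating the integrated form of \eqref{eq:xderphi} and using \eqref{eq:xderR} as the integrating factor, i.e.\ $(R_c^2\,\partial_\lambda\phi_c)'=-Vu_c^2$ with $R_c^2(c)=1$ and $\partial_\lambda\phi_c(c,\lambda)=0$, gives \eqref{eq:lamder} exactly; this is the Pr\"ufer-coordinate form of the usual Wronskian identity for the variational equation $-(\partial_\lambda u)''+(W+\lambda V)\partial_\lambda u=-Vu$, which is how such formulas are typically derived in the cited references, so the two routes are essentially equivalent, yours being marginally more self-contained because it never leaves the $(R,\phi)$ variables. The domination needed to differentiate under the integral sign is justified by (a) together with your local Lipschitz bound, as you indicate.
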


As a special case one finds the energy derivative of the Pr\"ufer
phase.

\begin{coro} \label{cor:Ederphi}
Let $u$ be the solution of $-u'' +  Wu = Eu$
normalized so that $u(c) = \sin \theta$ and $u'(c) = \cos
\theta$, and let $\phi_c(x,E)$ and $R_c(x,E)$ be the corresponding Pr\"ufer variables. Then
\begin{equation} \label{eq:Ederphi}
\frac{ \partial}{ \partial E} \phi_c(x,E) \, = \,
R_c^{-2}(x,E) \, \int_c^x u^2(t) \, dt.
\end{equation}
\end{coro}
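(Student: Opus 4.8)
The plan is to obtain Corollary~\ref{cor:Ederphi} as an immediate specialization of Lemma~\ref{lem:lamderphi}(b), by treating the energy $E$ as a coupling constant in front of a constant potential. Concretely, I would rewrite the eigenvalue equation $-u'' + Wu = Eu$ in the form $-u'' + Wu + \lambda V u = 0$ with the choices $V \equiv -1$ and $\lambda = E$. Since $V \equiv -1$ lies in $L^1_{\rm loc}(\mathbb{R})$ and the normalization $u(c) = \sin\theta$, $u'(c) = \cos\theta$ is exactly the one imposed in Lemma~\ref{lem:lamderphi}, the solution $u$ and its Pr\"ufer variables $\phi_c(x,E)$, $R_c(x,E)$ appearing in the corollary coincide with the objects $u_c(\cdot,\lambda)$, $\phi_c(x,\lambda)$, $R_c(x,\lambda)$ of the lemma evaluated at $\lambda = E$. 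In particular, part (a) of Lemma~\ref{lem:lamderphi} already guarantees that $\phi_c(x,E)$ is (entire, hence) differentiable in $E$, so the left-hand side of \eqref{eq:Ederphi} is meaningful.

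Then I would simply invoke formula \eqref{eq:lamder} with this identification and track the sign:
$$
\frac{\partial}{\partial E}\phi_c(x,E) = -R_c^{-2}(x,E)\int_c^x V(t)\,u^2(t)\,dt = -R_c^{-2}(x,E)\int_c^x (-1)\,u^2(t)\,dt = R_c^{-2}(x,E)\int_c^x u^2(t)\,dt,
$$
which is precisely \eqref{eq:Ederphi}. There is essentially no obstacle here beyond the bookkeeping: the only points deserving a sentence of justification are that the parametrization $\lambda \mapsto E$ is admissible (covered by the fact that Lemma~\ref{lem:lamderphi} allows an arbitrary $L^1_{\rm loc}$ potential $V$, in particular the constant $-1$) and the cancellation of the two minus signs coming from \eqref{eq:lamder} and from $V \equiv -1$.
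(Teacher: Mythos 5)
Your derivation is correct and is exactly how the paper intends the corollary to be obtained: the text introduces it with ``as a special case one finds the energy derivative of the Pr\"ufer phase,'' i.e.\ as the specialization of Lemma~\ref{lem:lamderphi}(b) with $V\equiv -1$ and $\lambda=E$, with the two minus signs cancelling as you note. Nothing is missing.
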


Finally, we state a version of Sturm's comparison theorem.

\begin{lemma} \label{lem:sturm}
For $i=1,2$, let $u_i$ be the solution of $-u_i''+q_iu_i=0$ with $u_i(c)=\sin \theta_i$ and $u_i'(c)=\cos \theta_i$. Define the Pr\"ufer phases $\phi_i(x)$ to $(u_i,u_i')$ as in (\ref{eq:pruferdef2}).

Suppose that $q_1(t)\ge q_2(t)$ for all $t\in [c,x]$ and $\theta_2 \ge \theta_1$, then $\phi_2(x) \ge \phi_1(x)$.
\end{lemma}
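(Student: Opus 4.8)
The plan is to prove Sturm's comparison theorem for the Prüfer phases by comparing their first-order differential equations. Both $\phi_1$ and $\phi_2$ satisfy the Prüfer phase equation from Lemma~\ref{lem:xder},
\begin{equation*}
\phi_i'(t) = 1 - (1+q_i(t))\sin^2\phi_i(t),
\end{equation*}
with initial data $\phi_i(c) = \theta_i$. Under the hypotheses $\theta_2 \ge \theta_1$ and $q_1(t) \ge q_2(t)$ on $[c,x]$, I want to conclude $\phi_2(t) \ge \phi_1(t)$ on all of $[c,x]$.

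First I would treat the strict case, assuming temporarily $\theta_2 > \theta_1$. Define $\psi(t) := \phi_2(t) - \phi_1(t)$, which is continuous with $\psi(c) > 0$. Suppose for contradiction that $\psi$ vanishes somewhere in $[c,x]$, and let $t_0$ be the first such point; then $\psi(t_0) = 0$, i.e.\ $\phi_2(t_0) = \phi_1(t_0) =: \phi_*$, and $\psi(t) > 0$ on $[c,t_0)$. At $t_0$ we compute
\begin{equation*}
\psi'(t_0) = \phi_2'(t_0) - \phi_1'(t_0) = (1+q_1(t_0))\sin^2\phi_* - (1+q_2(t_0))\sin^2\phi_* = (q_1(t_0) - q_2(t_0))\sin^2\phi_* \ge 0.
\end{equation*}
Since $\psi(t_0) = 0$ and $\psi \ge 0$ just to the left of $t_0$, the derivative $\psi'(t_0)$ must be $\le 0$ (as $\psi$ decreases to $0$ from the left), forcing $\psi'(t_0) = 0$ and hence $\sin^2\phi_* = 0$ whenever $q_1(t_0) > q_2(t_0)$. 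The cleaner route is a Gronwall-type argument: write the difference of the two integral equations, use the mean value theorem to bound $|\sin^2\phi_2 - \sin^2\phi_1| \le 2|\phi_2 - \phi_1|$, absorb the $(q_1 - q_2)\sin^2\phi_1 \ge 0$ term (which has the favorable sign), and deduce that $\psi$ cannot cross zero from positive to negative, since the linear part of the vector field is Lipschitz and the source term is nonnegative. Concretely, if $\psi(t_0) = 0$ is a first zero, then for $t$ slightly less than $t_0$, $\psi(t) = -\int_t^{t_0}\psi'(s)\,ds$ with $|\psi'(s)| \le C|\psi(s)| + (\text{nonneg})$, which by Gronwall forces $\psi \equiv 0$ on $[c,t_0]$, contradicting $\psi(c) > 0$.

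For the non-strict case $\theta_2 = \theta_1$, I would perturb: apply the strict result to $\phi_2^{(\e)}$ with initial phase $\theta_2 + \e$ for $\e > 0$, obtaining $\phi_2^{(\e)}(x) \ge \phi_1(x)$, and then let $\e \to 0$ using continuous dependence of the solution of the phase ODE on its initial condition. The main obstacle — really the only subtle point — is handling the zero set of $\sin\phi_1$ correctly: at points where $\sin^2\phi_1 = 0$, the two vector fields agree regardless of $q_1, q_2$, so the monotone-separation argument must be robust to the source term $(q_1-q_2)\sin^2\phi_1$ vanishing there; the Gronwall packaging handles this automatically because that term is always $\ge 0$ and never works against us. Since the result is quoted in the appendix as a special case of a standard theorem in \cite{weidmann2}, I would keep the writeup brief, giving the Gronwall comparison in a few lines and referring to the standard reference for details.
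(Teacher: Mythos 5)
Your argument is correct, but note that the paper does not prove this lemma at all: Appendix~\ref{s.appB} simply cites it as a special case of Theorem~13.1 in \cite{weidmann2}. What you supply is a self-contained version of the standard proof, namely a scalar comparison theorem for the Pr\"ufer phase equation $\phi_i'=1-(1+q_i)\sin^2\phi_i$, whose right-hand side is monotone (decreasing) in $q$ and Lipschitz in $\phi$ with locally integrable constant $|1+q_i(t)|$; this is essentially the route the cited reference takes, so there is no conceptual divergence, only the difference between proving and quoting. Two small points of care, both of which your final packaging already accommodates: since $q_i\in L^1_{\rm loc}$ the phases are only absolutely continuous and the ODE holds a.e., so the pointwise sign evaluation of $\psi'(t_0)$ at the first zero in your opening paragraph is not legitimate as stated --- but the integral/Gronwall form you settle on, using only the one-sided a.e.\ bound $-\psi'\le(1+|q_2|)\,\psi$ on the set where $\psi\ge 0$ (the term $(q_1-q_2)\sin^2\phi_1\ge 0$ being dropped with the favorable sign), is exactly what is needed and works with an $L^1$ coefficient. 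For the limiting case $\theta_2=\theta_1$, your perturbation-plus-continuous-dependence step is fine (continuous dependence follows from the same Lipschitz/Gronwall estimate); alternatively you could avoid it by invoking Lemma~\ref{lem:thetader}, which gives $\partial\phi/\partial\theta=R^{-2}>0$ and hence monotonicity in the initial phase directly, reducing the statement to the case of equal initial phases.
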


\end{appendix}

\end{document}